\documentclass{asl}
\usepackage{a4}
%========================================
%			ASL Structure
%========================================

%\title{Logical Systems Defined via \\ Algebraic Constraints on Proofs}

\title{Defining Logical Systems via Algebraic Constraints on Proofs}

\keywords{Logic, Proof Theory, Model Theory, Semantics, Modal Logic, Intuitionistic Logic}

\author{Alexander V. Gheorghiu}
\revauthor{Gheorghiu, Alexander V.}

\address{University College London, \\ 
        London WC1E 6BT, UK}
\email{alexander.gheorghiu.19@ucl.ac.uk}

\author{David J. Pym}
\revauthor{Pym, David J.}

\address{University College London, \\ 
London WC1E 6BT, UK}
\address{
Institute of philosophy, University of London, \\
London WC1H 0AR, UK}
\email{d.pym@ucl.ac.uk}

%------------------------------- New theorems
% These will be typeset in italics
\newtheorem{Theorem}{Theorem}[section]
\newtheorem{Proposition}[Theorem]{Proposition}

% These will be typeset in Roman
\usepackage{etoolbox}
\theoremstyle{definition}
\newtheorem{Definition}[Theorem]{Definition}
\AtEndEnvironment{Definition}{\hfill $\triangleleft$}%

\newtheorem{Corollary}[Theorem]{Corollary}

\newtheorem{Example}[Theorem]{Example}
\AtEndEnvironment{Example}{\hfill $\Diamond$}%

\usepackage{url}
\usepackage{amssymb}
\usepackage{bussproofs, proof}
\usepackage{cmll}
\usepackage[all]{xy}
\usepackage{txfonts}
\usepackage[utf8]{inputenc}
\usepackage{graphicx}
\usepackage{xcolor}
\usepackage{subcaption}
\usepackage{somedefs}
\usepackage[only,fatsemi,binampersand,bindnasrepma]{stmaryrd}
\usepackage{mathbbol}
\newenvironment{scprooftree}[1]
{\gdef\scalefactor{#1}\begin{center}\proofSkipAmount \leavevmode}
	{\scalebox{\scalefactor}{\DisplayProof}\proofSkipAmount \end{center} }
	\usepackage{mathrsfs}

\usepackage{xcolor}

%========================================
%			Types
%========================================
\newcommand{\lang}[1]{\ensuremath{\mathscr{#1}}}
\newcommand{\alg}[1]{\ensuremath{\mathcal{#1}}}
\newcommand{\logic}[1]{\ensuremath{\mathfrak{#1}}}
\newcommand{\system}[1]{\ensuremath{\mathsf{#1}}}
\newcommand{\set}[1]{\ensuremath{\mathbb{#1}}}
%========================================
%			Notations
%========================================
\renewcommand{\ll}{[\![}
\newcommand{\rr}{]\!]}

\newcommand{\ATOMS}{\mathbb{P}}

\DeclareMathSymbol{\fatcomma}{\mathrel}{bbold}{\lq\,}

\newcommand{\proves}[1][]{\simpleproves_{\system{#1}}}
\newcommand{\simpleproves}[1][]{\vdash}

\newcommand{\satisfies}[1][]{\Vdash_{\logic{#1}}}
\newcommand{\uni}{\mathbb{V}}

\newcommand{\wand}{-\mkern-6mu*}
\renewcommand{\epsilon}{\varepsilon}

%========================================
%			Shorthand
%========================================

\renewcommand{\emptyset}{\varnothing}
%=========================================
%			Rule Labels
%========================================
\newcommand*{\rn}[1]  {\ensuremath{\mathsf{#1}}}

\newcommand*{\weak}{\rn{w}}
\newcommand*{\exch}{\rn{e}}

\newcommand*{\cont}{\rn{c}}
\newcommand*{\lrn}[1]{\rn{#1_L}}
\newcommand*{\rrn}[1]{\rn{#1_R}}

\newcommand{\ax}{\mathsf{ax}}
\newcommand*{\taut}{\rn{taut}}

\newcommand{\at}[1]{{\rm #1}}

\newcommand{\munit}{\varnothing_\times}
\newcommand{\aunit}{\varnothing_+}
\newcommand{\I}{\top^{*}}

\newcommand{\metaseq}{\triangleright}
\newcommand{\seq}{\triangleright}
\newcommand{\metaconsequence}{\mathbin{{\scriptstyle \blacktriangleright}}}
\newcommand{\metasat}{\Vvdash}

\newcommand{\sat}{\Vdash}
\newcommand{\entails}{\vDash}

\usepackage{yfonts}
\newcommand{\sem}[1]{\mathfrak{#1}}

\newcommand{\pow}{\wp}

\begin{document}
		\maketitle    

\begin{abstract}
We present a comprehensive programme analysing the decomposition of proof systems for non-classical logics into proof systems for other logics, especially classical logic, using an algebra of constraints. That is, one recovers a proof system for a target logic by enriching a proof system for another, typically simpler, logic with an algebra of constraints that act as correctness conditions on the latter to capture the former; for example, one may use Boolean algebra to give constraints in a sequent calculus for classical propositional logic to produce a sequent calculus for intuitionistic propositional logic. The idea behind such forms of reduction is to obtain a tool for uniform and modular treatment of proof theory and provide a bridge between semantics logics and their proof theory. The article discusses the theoretical background of the project and provides several illustrations of its work in the field of intuitionistic and modal logics. The results include the following: a uniform treatment of modular and cut-free proof systems for a large class of propositional logics; a general criterion for a novel approach to soundness and completeness of a logic with respect to a model-theoretic semantics; and, a case study deriving a model-theoretic semantics from a proof-theoretic specification of a logic.
\end{abstract}

\section{Introduction} \label{sec:introduction}

The general goal of this paper is to provide a unifying meta-level framework for studying logics. To this end, we introduce a framework in which one can represent the \emph{reasoning} in a logic, as captured by a concept of proof for that logic, in terms of the reasoning within another logic through an algebra of constraints --- as a slogan,
\[
\mbox{
Proof in \system{L} = Proof in \system{L}$'$ + Algebra of Constraints $\mathcal{A}$
}
\]
We shall refer back to this slogan often and will use the following abbreviated form:
\[
\system{L} = \system{L'} \oplus \mathcal{A}
\]
The $\oplus$ is not formal--- that is, $\system{L'}\oplus \mathcal{A}$ may denote any one of several ways of applying constraints $\mathcal{A}$ to $\system{L}'$.

Such decompositions of $\system{L}$ into $\system{L}'$ and $\mathcal{A}$ allow us to study the metatheory of the former by analyzing the latter. This is advantageous when the latter is typically simpler in some desirable way --- for example, it may relax the side conditions on the use of specific rules --- which facilitates the study of the original logic of interest. There are already some examples of such relationships within the literature --- they are discussed below. The framework herein provides a general view of the phenomena and provides an umbrella for these seemingly disparate cases. Importantly, there is no guarantee that the constraints will be solvable algorithmically as it depends on how complex the algebra is, but even relatively simple algebras can have a dramatic impact; for example, we illustrate how constraint systems using Boolean algebra, which admits solvers, are helpful for the study of proof-search and meta-theory in substructural and intuitionistic logics.

The decompositions expressed by the slogan above may be iterated in valuable ways; that is, it is possible to decompose $\system{L}'$ in the slogan above. Each time we do such a decomposition, the combinatorics of the proof system become more tractable as more and more is delegated to the algebra. Eventually, the combinatorics becomes as simple as possible, and one recovers something with all the flexibility of the proof theory for classical logic. Thus, we advance the view that, in general, classical logic forms a combinatorial core of syntactic reasoning since its proof theory is comparatively relaxed --- that is, possibly after iterating decompositions of the kind above, one eventually witnesses the following:
\[
\mbox{
Proof in \system{L} = Classical Proof + Algebra of Constraints $\mathcal{A}$
}
\]
The view of classical logic as the core of logic has, of course, been advanced before --- see, for example, Gabbay~\cite{Gabbay1993}.

Using techniques from universal algebra, we define the algebraic constraints by a theory of first-order classical logic; for example, we may define Boolean algebra by its axiomatization --- see Section~\ref{sec:ex:bi}. We then enrich rules of a system $\system{L}$ with expressions from $\mathcal{A}$ to express rules of another system $\system{L}'$ --- for example, by using Boolean algebra for the constraints, we may express the single-conclusioned $\rrn \lor$-rule from Gentzen's \system{LJ}~\cite{Gentzen1969} with the combinatorics of the multiple-conclusioned $\rrn \lor$-rule from Gentzen's \system{LK}~\cite{Gentzen1969},
\[
\infer{\Gamma \proves \phi \lor \psi}{\Gamma \proves \phi\cdot x, \psi \cdot\bar{x}}
\]
--- one recovers the single-conclusion condition by assigning the variable $x$ to $0$ or $1$ in the Boolean algebra and evaluating the formula accordingly: one keeps formulae that carry a $1$ and deletes formulae which carry $0$. A system of rules enriched in this way is called a \emph{constraint system}. Detailed examples are below.

We consider two kinds of relationships the constraint systems may have with the logic of interest. 
%Call the enforcing of the constraint on a construction from a constraint system an evaluation of that construction. 
A constraint system is \emph{sound and complete} when the evaluation of construction from the constraints system concludes a sequent iff that sequent is valid in the logic. A stronger relationship is \emph{faithfulness and adequacy}:
\begin{itemize}
    \item[-] (\emph{Faithful}) The evaluation of a construction from the constraint system is a proof in the logical system of interest.
    \item[-] (\emph{Adequate}) Every proof in the logical system of interest is the evaluation of some construction from the constraint system.
\end{itemize}
Both relationships are important, as illustrated by the examples below.

The point of constraint systems is that they allow us to study the metatheory of the logic of interest. There are two principal such activities presented in this paper: first, one may use constraint systems to study questions of proof-search (i.e., how one constructs proofs) in the logic of interest; second, they may be used to bridge the gap between the proof theory and model theory of a logic. On the latter use, constraint systems allow a novel approach to soundness and completeness proofs which bypasses truth-in-a-model and term-model constructors; furthermore, they give a principled way of generating a correct-by-design model-theoretic semantics for the logic of interest by analyzing a proof system for it, making essential use of algebraic constraints and the aforementioned decomposition to classical logic. 

Of course, the idea that one may use labels to internalize the semantics of logics within proof systems has taken several forms and goes back as far as Kanger~\cite{kanger1957provability}. It underpins a systematic development of analytic tableaux (see, for example, Fitting ~\cite{Fitting1983,Fitting1998}, Catach~\cite{catach1991tableaux}, Massacci~\cite{massacci1994strongly}, Baldoni~\cite{baldoni1998normal}, Docherty and Pym~\cite{Docherty2018,docherty2019non,Docherty2019}, and Galmiche and M\'ery~\cite{galmiche2002resource,galmiche2003semantic,Galmiche2005,galmiche2010tableaux}), natural deduction systems (see, for example, Simpson~\cite{simpson1994proof}, and Basin et al.~\cite{basin1998natural}), sequent calculi (see, for example, Mints~\cite{mints1968cut,mints1997indexed}, Vigan\`o~\cite{vigano2013labelled}, Kushida and Okada~\cite{kushida2003proof}). Particularly significant within this stream are the relational calculi studied by  Negri~\cite{negri1998cut,negri2003contraction,Negri2005}.

The notion of constraint system presented in this paper is closely related to Gabbay’s \emph{Labelled Deductive Systems} (LDSs)~\cite{Gabbay1996} --- see also Russo~\cite{Russo1996}. However, the paper deviates from the established theory of LDSs in two fundamental ways: First, one may choose any syntactic structure in the grammar of the object-logic (e.g., data composed of formulae, such as sets, multisets, bunches), not just formulae, to annotate; second, the labels do not only express additional information but have an action on the structure. Note, there are other proof systems in the literature in which one labels data composed of formulae --- see, for example, Marx et al.~\cite{marx2000mosaic}. Consequently, more subtle examples are available, not otherwise captured by LDSs.

In summary, two main ideas are developed in this paper: a meta-logic for studying object-logics and algebraic constraints for studying the metatheory of a logic. Both ideas are present elsewhere in the literature and have been studied for different logics but have yet to be formalized and uniformed. This paper aims to provide a general and uniform framework for analyzing and understanding the proof theory of logics represented in this way.

The paper begins in Section~\ref{sec:ex:bi} with an example of a constraint system already in the literature. It continues in Section~\ref{sec:background} with the background and notation required for the general treatment throughout the rest of the paper. Constraint systems are defined in general in Section~\ref{sec:constraints}, where the correctness properties of soundness, completeness, faithfulness, and adequacy are also discussed. In Section~\ref{sec:ex:relationalcalculi}, we study relational calculi as a general class of constraint systems and study an approach to proving soundness and completeness which works with validity directly. We give a concrete illustration of the approach applied to intuitionistic propositional logic (IPL) in Section~\ref{sec:ex:ipl} --- specifically, by using constraint systems, we derive the model-theoretic semantics given by Kripke~\cite{Kripke1963} from \system{LJ}, which is sound and complete by construction. In Section~\ref{sec:ext:fol}, we consider the treatment of first-order logics with constraints, the rest of the paper being restricted to propositional logics. The paper concludes in Section~\ref{sec:conclusion} with a summary and a discussion of future research.

\section{Example: Resource-distribution via Boolean Constraints} \label{sec:ex:bi}

In this section, we summarize the resource-distribution via Boolean constraints (RDvBC) mechanism, which was introduced by Harland and Pym~\cite{Harland1997,Pym2001dis} as a tool for reasoning about the context-management problem during proof-search in logics with multiplicative connectives, such as Linear Logic (LL) and the logic of Bunched Implications (BI). It is the original example of a decomposition of a proof system in the sense of this paper, as explained at the end of the section. We present RDvBC to motivate the abstract technical work in Section~\ref{sec:constraints} for the general approach. We concentrate on the case of BI to indicate the scope of the approach and the challenges involved in setting it up because the logic operates over quite a subtle data structure --- bunches.

\subsection{The Logic of Bunched Implications}

One may regard BI as the free combination (i.e., the fibration --- see Gabbay~\cite{Gabbay1993}) of intuitionistic propositional logic (IPL), with connectives $\land, \lor, \to, \top, \bot$, and intuitionistic multiplicative linear logic (IMLL), with connectives $*, \wand, \I$. 
Let $\set{A}$ be a set of atomic propositions. The following grammar generates the formulae of BI:
	\[
		\phi ::= \at{p} \in \set{A} \mid \top \mid \bot \mid \I \mid \phi \land \psi \mid \phi \lor \psi \mid \phi \to \phi \mid \phi * \phi \mid \phi \wand \phi
		\]
The set of all formulae is $\set{FORM}$. 
		
A distinguishing feature of BI is that it has two primitive implications, $\to$ and $\wand$, each corresponding to a different context-former, $\fatsemi$ and $\fatcomma$, representing the two conjunctions $\land$ and $*$, respectively. As these context-formers do not commute with each other, though individually they behave as usual, contexts in BI are not one of the familiar structures of lists, multisets, or sets. Instead, its contexts are \emph{bunches}  --- a term that derives from the relevance logic literature (see, for example, Read~\cite{Read1988}). The set of bunches $\set{BUNCH}$ is defined by the following grammar:
		\[
		\Gamma ::= \phi \in \set{FORM} \mid \aunit \mid \munit \mid \Gamma \fatsemi \Gamma \mid \Gamma \fatcomma \Gamma
		\]		

A bunch $\Delta$ is a sub-bunch of a bunch $\Gamma$ iff $\Delta$ is a sub-tree of $\Gamma$. One may 
write $\Gamma(\Delta)$ to mean that $\Delta$ is a sub-bunch of $\Gamma$. The operation $\Gamma[\Delta \mapsto \Delta']$ --- abbreviated to 
$\Gamma(\Delta')$ where no confusion arises --- is the result of replacing the occurrence of $\Delta$ by $\Delta'$. 

Bunches have similar structural behaviour to the more familiar data-structures used for contexts in logic (e.g., lists, multisets, sets, etc.). We define this behaviour explicitly by means of an equivalence relation called \emph{coherent equivalence}. Two bunches $\Gamma,\Gamma' \in \set{BUNCH}$ are coherently equivalent when $\Gamma \equiv \Gamma'$, where $\equiv$ is the smallest relation satisfying: 
	\begin{itemize}
		\item[-] commutative monoid equations for $\fatsemi$ with unit $\aunit$
		\item[-] commutative monoid equations for $\fatcomma$ with unit $\munit$
		\item[-] coherence; that is, if $\Delta\equiv\Delta' $, then $\Gamma(\Delta) \equiv \Gamma(\Delta')$. 
	\end{itemize} 

A sequent in BI is a pair $\Gamma \seq \phi $ in which $\Gamma$ is a bunch, and $\phi$ is a formula. We use $\seq$ as a pairing symbol defining sequents to distinguish it from the judgment $\proves$ that asserts that a sequent is a consequence of BI. We may characterize the consequence judgment $\proves$ for BI by provability in the sequent calculus \system{LBI} in Figure~\ref{fig:lbi} (see Pym~\cite{Pym2003}). That is, $\Gamma \proves \phi$ iff there is an $\system{LBI}$-proof of $\Gamma \seq \phi$.

\begin{figure}
\hrule
\vspace{4mm}
\centering
$
\infer[\rn{taut}]{\phi \metaseq \phi}{}
\qquad
\infer[\lrn\bot]{\Gamma(\bot) \metaseq \phi}{}
\qquad
\infer[\rrn\I]{\munit \metaseq \I}{}
\qquad
\infer[\rrn\top]{\aunit \metaseq \top}{}
$
\\[1.5ex]
$
\infer[\lrn\wand]{\Gamma(\Delta\fatcomma\Delta'\fatcomma\phi \wand \psi) \metaseq \chi}{\Delta' \metaseq \phi & \Gamma(\Delta\fatcomma\psi) \metaseq \chi}
\qquad
\infer[\rrn\wand]{\Delta \metaseq \phi \wand \psi}{\Delta\fatcomma\phi \metaseq \psi}
$
\\[1.5ex]
$
\infer[\lrn\ast]{\Delta(\phi * \psi) \metaseq \chi}{\Delta(\phi\fatcomma\psi) \metaseq \chi}
\qquad
\infer[\rrn\ast]{\Delta\fatcomma\Delta' \metaseq \phi*\psi}{\Delta \metaseq \phi & \Delta' \metaseq \psi}
\qquad
\infer[\lrn\I]{\Delta(\top^*) \metaseq \chi}{\Delta(\munit)\metaseq \chi}
$
\\[1.5ex]
$
\infer[\lrn\land]{\Delta(\phi \land \psi) \metaseq \chi}{\Delta(\phi\fatsemi\psi) \metaseq \chi}
\qquad
\infer[\rrn\land]{\Delta \metaseq \phi\land\psi}{\Delta \metaseq \phi & \Delta \metaseq \psi}
\qquad
\infer[\lrn\top]{\Delta(\top) \metaseq \chi}{\Delta(\aunit)\metaseq \chi}
$
\\[1.5ex]
$
\infer[\lrn\lor]{\Delta(\phi \lor \psi) \metaseq \chi}{\Delta ( \phi  ) \metaseq \chi & \Delta ( \psi ) \metaseq \chi }
\qquad
\infer[\rrn\lor_1]{\Delta \metaseq \phi \lor \psi}{\Delta \metaseq  \phi}
\qquad
\infer[\rrn\lor_2]{\Delta \metaseq \phi \lor \psi}{\Delta \metaseq  \psi}
$
\\[1.5ex]
$
\infer[\lrn\to]{\Gamma(\Delta \fatsemi\phi \to \psi) \metaseq \chi}{\Delta \metaseq \phi & \Gamma(\Delta\fatsemi\psi) \metaseq \chi}		
\qquad
\infer[\rrn\to]{\Delta \metaseq \phi \to \psi}{\Delta\fatsemi\phi \metaseq \psi}			
$
\\[1.5ex]
$
\infer[\weak]{\Delta(\Delta'\fatsemi\Delta'') \metaseq \chi}{\Delta(\Delta') \metaseq \chi}
\qquad
\infer[\exch_{(\Delta \equiv \Delta')}]{\Delta' \metaseq \chi}{\Delta \metaseq \chi}
\qquad
\infer[\cont]{\Delta(\Delta') \metaseq \chi}{\Delta(\Delta'\fatsemi\Delta') \metaseq \chi}
$	
\vspace{4mm}
\hrule
\caption{Sequent Calculus $\system{LBI}$}
			\label{fig:lbi}
		\end{figure}

As bunches are intended to be the syntactic trees of $\set{BUNCH}$ modulo $\equiv$, we may somewhat relax the formal reading of the rules of $\system{LBI}$. The effect of coherent equivalence is, essentially, to render bunches into two-sorted nested multisets --- see Gheorghiu and Marin~\cite{Alex2021}. Therefore, we may suppress brackets for sections of the bunch with the same context-former and apply rules sensitive to context-formers (e.g., $\rrn \ast$) accordingly. For example, any context-former may be used in $\rrn{\ast}$ applied to $\at{p}_1\fatcomma \at{p}_1 \fatcomma \at{p}_3 \seq \at{q}_1 \ast \at{q}_2$; the possibilities are as follows:
\[
\infer[\rrn{\ast}]{\at{p}_1\fatcomma \at{p}_2 \fatcomma \at{p}_3 \seq \at{q}_1 \ast \at{q}_2 }{
   \at{p}_1 \seq \at{q}_1 &  \at{p}_2 \fatcomma \at{p}_3 \seq  \at{q}_2 
}
\qquad
\infer[\rrn{\ast}]{\at{p}_1\fatcomma \at{p}_2 \fatcomma \at{p}_3 \seq \at{q}_1 \ast \at{q}_2 }{
   \at{p}_1 \fatcomma \at{p}_2 \seq \at{q}_1 &  \at{p}_3 \seq  \at{q}_2 
}
\]
This concludes the introduction of BI. In the next section, we apply the RDvBC mechanism to analyze proof-search in \system{LBI}.

\subsection{Resource-distribution via Boolean Constraints}

Proof-search in \system{LBI} is complex because the presence of multiplicative connectives (i.e., $\ast$ and $\wand$) requires deciding how to distribute the formulae (or, rather, sub-bunches) when making reductions.

\begin{Example}\label{ex:rdvbc:lbips}
 The following proof-search attempts differ only in the choice of distribution, but one successfully produces a proof, and the other fails:
\[
    \underbrace{
        \infer[\rrn \ast]{\at{p}\fatcomma \at{q}\fatcomma \at{r} \seq \at{p} \ast (\at{q} \ast \at{r})}{
            \infer[\rn{taut}]{\at{p} \seq \at{p} }{}
            & 
            \infer[\rrn \ast]{\at{q}\fatcomma \at{r} \seq \at{q} \ast \at{ r}}{
                \infer[\rn{taut}]{\at{q} \seq \at{q}}{
                    } 
                & 
                \infer[\rn{taut}]{\at{r} \seq \at{r}}{
                    }
                }
        }        
    }_{\mbox{succeeds}}
    \qquad
    \underbrace{
    \infer[\rrn \ast]{\at{p}\fatcomma \at{q
    }\fatcomma \at{r} \seq \at{p} \ast (\at{ q} \ast \at{r})}{\infer{\at{p}\fatcomma \at{q} \seq \at{ p}}{?}  & \infer{\at{r} \seq \at{q} \ast \at{r}}{?}}
    }_{\text{fails}}
    \]
    How can we analyze the various distribution strategies? This is the question RDvBC addresses. 
\end{Example}

There is substantial literature on intricate rules of inference in multiplicative logics that are used to keep track of the relevant information to enable proof-search. However, they are generally tailored for one particular distribution method --- see, for example, Hodas and Miller~\cite{hodas1994logic,Hodas1994thesis}, Winikoff and Harland~\cite{winikoff1995implementing}, Cervasto~\cite{Cervesato2000}, and Lopez~\cite{Lopez2005}. It is in this context that Harland and Pym~\cite{Harland1997, Pym2001dis} introduced the RDvBC mechanism. The idea is that rather than commit to a particular strategy for managing the distribution, one uses Boolean expressions to express that a resource distribution needs to be made and the conditions it needs to satisfy. 

Before presenting the technical details of RDvBC, we give a heuristic account. Essentially, one assigns a Boolean expression to each formula requiring distribution. Constraints on the possible values of this expression are then generated during the proof-search and propagated up the search tree, resulting in a set of Boolean equations. A successful proof-search in the enriched system will generate a soluble set of equations corresponding to a distribution of formulae across the branches of the structure, and instantiating that distribution results in an actual proof. It remains to give the formal detail. We begin by defining the constraint algebra that delivers RDvBC.

A \emph{Boolean algebra} is a tuple $\mathcal{B} := \langle \set{B}, \{+, \times, \bar{\cdot} \}, \{0,1\} \rangle$ in which $\set{B}$ is a set, $+:\set{B}^2 \to \set{B}$, $\times:\set{B}^2 \to \set{B}$, $\bar{\cdot}:\set{B} \to \set{B}$ be operators on $\set{B}$, and $0,1 \in \set{B}$, satisfying the following conditions for any $a,b,c \in \set{B}$:
\[
\begin{array}{cc}
a+(b+c) = (a+b)+c \quad a\times(b\times c) = (a\times b) \times c & a+b=b+c \quad a \times b = b \times a \\
a+(a\times b) = a \quad a \times (a + b) = a & a+0 = a \quad a \times 1 = a \\
a+(b \times c) = (a +b) \times (a+c) & a \times (b+c) = (a \times b)+(a \times c) \\
a+\bar{a} = 1 & a \times \bar{a} = 0
\end{array}
\]
A \emph{presentation} of the Boolean algebra is a first-order classical logic with equality for which the Boolean algebra is a model. We use the following, in which $X$ is a set of \emph{variables}, $e$ are \emph{Boolean expressions}, and $\phi$ are \emph{Boolean formulae}:
\[
		e  ::=  x \in X \mid e+e \mid e \times e \mid \bar{e} \mid 0 \mid 1 \quad 	\phi ::=  (e = e) \mid \phi \with \phi \mid \phi \parr \phi \mid \neg \phi \mid \forall x \phi \mid \exists x \phi
\]
The symbols $\with$ and $\parr$ are used as classical conjunction and disjunction, respectively.

We are overloading $+$ and $\times$ to be both function-symbols in the term language and their corresponding operators in the Boolean algebra; similarly, we are overloading $0$ and $1$ to be both constants in the term language and the bottom and top element of the Boolean algebra. This is to economize on notation. We may suppress the $\times$ when no confusion arises --- that is, $t_1 \times t_2$ may be expressed $t_1t_2$. For a list of Boolean expressions $V = [e_1,\hdots e_n]$, let $\bar{V}:= [\bar{e}_1,\hdots \bar{e}_n]$; we may write $V=e$ to denote that $V$ is a list containing only $e$. 

Let some presentation of Boolean algebra be fixed. An \emph{annotated BI-formula} is a BI-formula $\phi$ with a Boolean expression $e$, denoted $\phi \cdot e$. The annotation of a bunch $\Gamma$ by a list of Boolean expressions $V$ is defined as follows:
\begin{itemize}
    \item[-] if $\Gamma = \gamma$, where $\gamma \in \set{FORM}\cup\{\aunit,\munit\}$ and $V=[e]$, then $\Gamma \cdot V := \gamma\cdot e$;
    \item[-] if $\Gamma = (\Delta_1 \fatsemi \Delta_2)$, and $V = [e]$, then $\Gamma \cdot V: = (\Delta_1 \fatsemi \Delta_2) \cdot e$;
    \item[-] if $\Gamma = (\Delta_1 \fatcomma \Delta_2)$, and $V$ is the concatenation of $V_1$ and $V_2$, then $\Gamma \cdot V: = (\Delta_1\cdot V_1 \fatsemi \Delta_2 \cdot V_2)$.
\end{itemize}
For example, $p \fatcomma (q\fatsemi r) \cdot [x,y] := p \cdot x \fatcomma (q\fatsemi r) \cdot y$. Notably, the annotation only acts on the top-level of multiplicative connectives and treats everything below (e.g., additive sub-bunches) as formulae. This makes sense as all of the distributions in $\system{LBI}$ take place at this level of the bunch. 

%If $\phi\cdot e$ is an annotated formula, let $\ell(\phi \cdot e) := e$. The function extends to bunches of annotated formulae by producing the list of labels; that is, if $\Gamma$ is a bunch of formulae and $V$ is a list of Boolean expressions, then $\ell(\Gamma \cdot V) :=  V$. 

This concludes the technical overhead required to define the RDvBC mechanism for BI. Roughly, Boolean constraints are used to mark the multiplicative distribution of formulae. The mechanism is captured by proof-search in the sequent calculus $\system{LBI}_\mathcal{B}$ comprised of the rules in Figure~\ref{fig:lbib}, in which $V$ is a list of Boolean variables that do not appear in any sequents present in the tree and labels that do not change are suppressed. The same names are used for rules in $\system{LBI}_\mathcal{B}$ and $\system{LBI}$ to economize on notation.

\begin{figure}
\hrule
\vspace{4mm}
\centering
$
\infer[\rn{taut}]{\phi\cdot y \fatcomma \Delta\cdot V \metaseq \phi\cdot 1}{y=1 \land V = 0 }
\qquad
\infer[\lrn\bot]{\bot\cdot y \fatcomma \Delta \cdot V \metaseq \phi}{y=1\land V = 0 }
$
\\[1.5ex]
$
\infer[\rrn\I]{\munit\cdot y \fatcomma \Delta \cdot V \metaseq \I}{y=1 \with V=0 }
\qquad
\infer[\rrn\top]{\aunit \cdot y \fatcomma \Delta \cdot V \metaseq \top}{y=1 \with V =0 }
$
\\[1.5ex]
$
\infer[\lrn\wand]{\Gamma(\Delta\fatcomma\phi \wand \psi\cdot e) \metaseq \chi}{\Delta\cdot V \metaseq \phi\cdot e& \Gamma(\Delta \cdot \bar{V}\fatcomma\psi\cdot e) \metaseq \chi & e=1}
\qquad
\infer[\rrn\wand]{\Delta \metaseq (\phi \wand \psi)\cdot e}{\Delta\fatcomma\phi\cdot e \metaseq \psi\cdot e & e =1}
$
\\[1.5ex]
$
\infer[\lrn\ast]{\Delta((\phi * \psi)\cdot e) \metaseq \chi}{\Delta(\phi\cdot e\fatcomma\psi\cdot e) \metaseq \chi & e=1}
\qquad
\infer[\rrn\ast]{\Delta\fatcomma\Delta' \metaseq \phi*\psi}{\Delta \cdot V \metaseq \phi & \Delta' \cdot \bar{V} \metaseq \psi}
\qquad
\infer[\lrn\I]{\Delta(\top^*\cdot e) \metaseq \chi}{\Delta(\munit \cdot e)\metaseq \chi & e=1}
$
\\[1.5ex]
$
\infer[\lrn\land]{\Delta((\phi \land \psi)\cdot e) \metaseq \chi}{\Delta(\phi\cdot e\fatsemi\psi\cdot e) \metaseq \chi & e=1}
\qquad
\infer[\rrn\land]{\Delta \metaseq \phi\land\psi}{\Delta \metaseq \phi & \Delta \metaseq \psi}
\qquad
\infer[\lrn\top]{\Delta(\top\cdot e) \metaseq \chi}{\Delta(\aunit \cdot e)\metaseq \chi & e=1}
$
\\[1.5ex]
$
\infer[\lrn\lor]{\Delta((\phi \lor \psi)\cdot e) \metaseq \chi}{\Delta ( \phi \cdot e ) \metaseq \chi & \Delta ( \psi \cdot e) \metaseq \chi & e=1 }
\qquad
\infer[\rrn\lor_1]{\Delta \metaseq \phi \lor \psi}{\Delta \metaseq  \phi}
\qquad
\infer[\rrn\lor_2]{\Delta \metaseq \phi \lor \psi}{\Delta \metaseq  \psi}
$
\\[1.5ex]
$
\infer[\lrn\to]{\Gamma(\Delta \fatsemi(\phi \to \psi)\cdot e) \metaseq \chi}{\Delta \metaseq \phi & \Gamma(\Delta\fatsemi\psi\cdot e) \metaseq \chi & e=1}		
\qquad
\infer[\rrn\to]{\Delta \metaseq \phi \to \psi }{\Delta\fatsemi\phi \cdot e \metaseq \psi & e=1}			
$
\\[1.5ex]
$
\infer[\weak]{\Gamma(\Delta\cdot e\fatsemi\Delta \cdot e) \metaseq \chi}{\Gamma(\Delta\cdot e) \metaseq \chi & e=1}
\qquad
\infer[\exch_{(\Delta \equiv \Delta')}]{\Delta' \metaseq \chi}{\Delta \metaseq \chi}
\qquad
\infer[\cont]{\Gamma(\Delta\cdot e) \metaseq \chi}{\Gamma(\Delta\cdot e\fatsemi\Delta\cdot e) \metaseq \chi & e=1}
$	
\vspace{4mm}
\hrule
\caption{Sequent Calculus $\system{LBI}_\alg{B}$}
			\label{fig:lbib}
		\end{figure}

An $\system{LBI}_\mathcal{B}$-reduction is a tree constructed by applying the rules of $\system{LBI}_\mathcal{B}$ reductively, beginning with a sequent in which each formula is annotated by $1$. 

\begin{Example}\label{ex:rdvbc:lbiBreduction}
The following is an $\system{LBI}_\mathcal{B}$-reduction $\mathcal{D}$:
\[
      \infer[\rrn \ast]{(\at{p}\cdot 1)\fatcomma (\at{q}\cdot 1)\fatcomma (\at{r}\cdot 1) \seq \at{p} \ast (\at{q} \ast \at{r})\cdot 1}{
            \infer[\rn{taut}]{(\at{p}\cdot x_1) \fatcomma (\at{q}\cdot x_2) \fatcomma (\at{r}\cdot x_3) \seq \at{p}\cdot 1 }{(x_1=1) \land (x_2=0) \land (x_3=0)}
            & 
          \mathcal{D}'
        }  
\]
--- the sub-tree $\mathcal{D}'$ is the following:
\[
      \infer[\rrn \ast]{(\at{p}\cdot \bar{x}_1) \fatcomma (\at{q}\cdot \bar{x}_2) \fatcomma (\at{r}\cdot \bar{x}_3) \seq \at{q} \ast \at{ r}\cdot 1}{
                \infer[\rn{taut}]{(\at{p}\cdot \bar{x}_1y_1) \fatcomma (\at{q}\cdot \bar{x}_2 y_2) \fatcomma (\at{r}\cdot \bar{x}_3 y_3) \seq \at{q}}{ (\bar{x}_1y_1 =0) \land  (\bar{x}_2 y_2 = 1) \land (\bar{x}_3 y_3 = 0)
                    } 
                & 
                \infer[\rn{taut}]{\at{p}\cdot (\bar{x}_1\bar{y}_1) \fatcomma \at{q}\cdot (\bar{x}_2\bar{y}_2) \fatcomma \at{r}\cdot (\bar{x}_3\bar{y}_3) \seq \at{r}}{
                    (\bar{x}_1\bar{y}_1 = 0) \land
                    (\bar{x}_2\bar{y}_2 = 0) \land
                    (\bar{x}_3\bar{y}_3 =1)
                    }
                }
\]
\end{Example}

Having produced an $\system{LBI}_\mathcal{B}$-reduction, if the constraints are consistent, they determine the variables' interpretations to satisfy the constraints. Such interpretations $I$ induce a valuation $\nu_I$ that acts on formulae by keeping formulae whose label evaluates to $1$ and deleting (i.e., producing the empty-string $\epsilon$) for formulae whose label evaluates to $0$; that is, let $\phi$ be a BI-formula and $e$ a Boolean expression,
\[
\nu_I(\phi \cdot e) := 
\begin{cases}
\phi & \text{ if } I(e) = 1 \\
\epsilon  & \text{ if } I(e) = 0
\end{cases}
\]
A valuation extends to sequents by acting on each formulae occurring in it, and it extends to $\system{LBI}_\mathcal{B}$-reductions by acting on each sequent occurring in it and removing the constraints.

\begin{Example}[Example~\ref{ex:rdvbc:lbiBreduction} cont'd] \label{ex:rdvbc:lbiBinstantiated}
 The constraints on $\mathcal{D}$ are satisfied by any interpretation $I(z)=1$ for $z \in \{x_1,y_2\}$ and $I(z)=0$ for $z \in \{x_2, x_3,y_1,y_3\}$. For any such $I$, the tree $\nu_I(D)$ is as follows:
\[
 \infer[\rrn \ast]{\at{p}\fatcomma \at{q}\fatcomma \at{r} \seq \at{p} \ast (\at{q} \ast \at{r})}{
            \infer[\rn{taut}]{\at{p} \seq \at{p} }{}
            & 
            \infer[\rrn \ast]{\at{q}\fatcomma \at{r} \seq \at{q} \ast \at{ r}}{
                \infer[\rn{taut}]{\at{q} \seq \at{q}}{
                    } 
                & 
                \infer[\rn{taut}]{\at{r} \seq \at{r}}{
                    }
                }
        }        
\]
This is the successful derivation in $\system{LBI}$ in Example~\ref{ex:rdvbc:lbips}. According to the constraints, a distribution strategy results in a successful proof-search just in case it sends only the first formula to the left branch. 
\end{Example}

Harland and Pym~\cite{Harland1997,Pym2001dis} proved that $\system{LBI}_\mathcal{B}$ is \emph{faithful} and \emph{adequate} for $\system{LBI}$ in the following sense:
\begin{itemize}
    \item[-] \emph{Faithfulness}. If $\mathcal{R}$ is an $\system{LBI}_\mathcal{B}$-reduction and $I$ is an interpretation satisfying those constraints, then there is a $\system{LBI}$-proof $\mathcal{D}$ such that $\nu_I(\mathcal{R}) = \mathcal{D}$.
    \item[-] \emph{Adequacy}. If $\mathcal{D}$ is an $\system{LBI}$-proof, then there is a $\system{LBI}_\mathcal{B}$-reduction $\mathcal{R}$ and an interpretation $I$ satisfying its constraints such that $\nu_I(R) = \mathcal{D}$.
\end{itemize}

Recall that we may think of BI as the combination of IPL and IMLL. We may express $\system{LBI}$ as the combination of sequent calculi for these two logics (i.e., $\system{IMLL}$ and $\system{LJ}$, respectively)  --- that is,
\[
\system{LBI} = \system{IMLL} \cup \system{LJ}
\]
The RDvBC outsources the substructurallity in $\system{IMLL}$ to Boolean constraints. Hence, in the form of the slogan of this paper,
\[
\system{IMLL} =\system{LJ}\oplus \mathcal{B}
\]
In this section, we have chosen to study BI (as opposed to just IMLL) to illustrate the modularity of constraint systems. That is, we only have constraints participating actively in part of the sequent calculus for BI, but with the same overall effect since the other part conserves them. Abusing the slogan somewhat, we may express the work of this section as follows: 
\[
\system{LBI} = (\system{LJ}\oplus \mathcal{B})\cup\system{LJ}
\]

This paper aims to study the use of constraints in proof systems in general. It provides conditions under which they exist and illustrates the use of constraints to study metatheory. 

\section{Background} \label{sec:background}

We have two things to set up to give a general presentation of algebraic constraint systems: algebra and propositional logic. The former is captured by first-order classical logic (FOL) (e.g., as Boolean algebra is captured by its axiomatization in Section~\ref{sec:ex:bi}), and the latter is given by a general account of propositional logic as a propositional language together with a consequence relation. There are many presentations of these subjects within the literature; therefore, to avoid confusion, in this section, we define them as they are used in this paper. Importantly, this section introduces much notation used in the rest of the paper. As we wish to reserve traditional symbols such as $\proves$ and $\to$ for the object-logics, we will use $\metaconsequence$ and $\Rightarrow$ for the meta-logic. In both cases, we use the symbol $\seq$ as the sequents symbol, regarding $\proves$ and $\metaconsequence$ as \emph{consequence} relations. 

\subsection{First-order Classical Logic} \label{sec:background:fol}

This section presents first-order classical logic (FOL), which we use to define what we mean by algebra in \emph{algebraic} constraints. We assume familiarity with FOL, so give a terse (but complete) summary to keep the paper self-contained. In particular, we assume familiarity with proof theory for FOL --- as covered in, for example, Troelstra and Schwichtenberg~\cite{troelstra2000basic}, and Negri and von Plato~\cite{negri2001}.

\begin{Definition}[First-order Language] \label{def:FOL:Alphabet}
An alphabet is a tuple $\lang{A} := \langle \set{R}, \set{F}, \set{K}, \set{V} \rangle$ in which $\set{R}$, $\set{F}$, $\set{K}$, and $\set{V}$ are pairwise disjoint countable sets of symbols, and each element of $\set{R}$, $\set{F}$ and $\set{K}$ has a fixed arity. The terms, atoms, and well-formed formulae (wffs) of an alphabet are as follows:
\begin{itemize}
    \item[-] The set $\set{TERM}(\lang{A})$ of \emph{terms} from $\lang{A} $ is the smallest set containing $\set{K}$ and $\set{V}$ such that, for any $F \in \set{F}$, if F has arity $n$ and $T_1,...,T_n \in \set{TERM}(\lang{A} )$, then $F(T_1,...,T_n) \in \set{TERM}(\lang{A})$
    \item[-] The set $\set{ATOMS}(\set{A})$ is set of strings $R(T_1,...,T_n)$ such that $R \in \set{R}$ has arity $n$ and $T_1,...,T_n \in \set{TERM}(\lang{A})$
    \item[-] The set $\set{WFF}(\lang{A})$ of \emph{formulae} from $\lang{A}$ is defined by the following grammar, in which $X \in \set{V}$:
\[
\Phi := A \in  \set{ATOMS}(\set{A}) \mid \Phi \Rightarrow \Phi \mid \Phi \with \Phi \mid \Phi \parr \Phi \mid \bot  \mid \forall X \Phi \mid \exists X \Phi
\]
\end{itemize}

% The set $\set{TERM}(\lang{A})$ of \emph{terms} from $A$ is defined inductively as follows:
%     \begin{itemize}
%     \item[-]\textsc{Base Case:} If $t \in \set{K} \cup \set{V}$, then $t \in \set{TERM}(\lang{A})$.
    
%     \item[-]\textsc{Inductive Step.} If $t_1,...,t_n \in\set{T}$, $f \in \set{F}$, and $\alpha(f) = n$, then $f(t_1, \hdots , t_n) \in \set{TERM}(\lang{A})$. 
%     \end{itemize}

%     The set $\set{WFF}(\lang{A})$ of \emph{first-order formulae} from $A$ is defined inductively as follows:

% \begin{itemize}
% \item[-] \textsc{Base Case:} If $T_1,...,T_n \in \set{TERM}(\lang{A})$ and $P \in \set{R}_n$ and $\alpha(P) = n$, then $P(T_1, \hdots , T_n) \in \set{WFF}(\lang{A})$.

% \item[-] \textsc{Inductive Step.} If $\Phi, \Psi \in\set{WFF}(\lang{A})$ and $X \in \set{V}$, then $(\Phi \Rightarrow \Psi) \in \set{WFF}(\lang{A})$, $(\Phi \with \Psi) \in \set{WFF}(\lang{A})$, $(\Phi \parr \Psi) \in \set{WFF}(\lang{A})$, $\bot \in \set{WFF}(\lang{A})$, $(\forall X \Phi)\in \set{WFF}(\lang{A})$, $(\exists X \Phi) \in \set{WFF}(\lang{A})$.
% \end{itemize}
\end{Definition}

The symbols $\Rightarrow$, $\with$, $\parr$, and $\bot$ are \emph{implication}, \emph{conjunction},  \emph{disjunction}, and \emph{absurdity}, respectively, in FOL. The more traditional symbols, such as $\to$, $\land$, and $\lor$, are reserved for other logics in the paper. We use the usual convention for suppressing brackets; that is,  conjunction ($\with$) and disjunction ($\parr$) bind more strongly than implication ($\Rightarrow$). Moreover, we may use the usual auxiliary terminology for first-order languages (e.g., \emph{sub-formula}, \emph{closed-formula}, \emph{sentence}, etc.) without further explanation. Let be $X$ a variable, $T$ be a term, and $\Phi$ a wff; we write $\Phi[X \mapsto T]$ to denote the result of replacing every free occurrence of $X$ by the term $T$ so that no variable in $T$ becomes bound in $\Phi$ after the substitution. 

\begin{Definition}[First-order Sequent] \label{def:FOL:sequent}
   A first-order sequent is a pair $\Pi\metaseq \Sigma$ in which $\Pi$ and $\Sigma$ are multisets of first-order formulae. 
\end{Definition}

We think of sequents as unjudged statements, hence we use a sequent constructor $\seq$ instead of the consequence relation ($\metaconsequence$). For example, though $\emptyset \seq \emptyset $ is a well-formed sequent in FOL, it is not a consequence of the logic.

\subsubsection{Proof Theory}

One way to characterize FOL --- that is, the consequence relation $\metaconsequence$ --- is by provability in a sequent calculus.

\begin{Definition}[Sequent Calculus \system{G3c}] \label{def:G3c}
   The sequent calculus $\system{G3c}$ is composed of the rules in Figure~\ref{fig:G3c} in which $T$ is a term free for $X$ in $\Phi$ and $Y$ is an eigenvariable.
\end{Definition}
% \begin{Definition}[$\system{G3c}$-derivation]
%    The set of $\system{G3c}$-derivations is defined inductively as follows:
%    \begin{itemize}
%    \item[-]\textsc{Base Case.} If $C$ is a sequent then the tree consisting of just the node $C$ is a $\system{G3c}$-derivation.
 	 
%  \item[-]	 \textsc{Inductive Step.} Let $\mathcal{D}_1,...,\mathcal{D}_n$ be $\system{G3c}$-derivations with roots $P_1,...,P_n$, respectively, and let $C$ be such that, for some $\rn{r} \in \system{G3c}$,
%  	 \[
%  	 \infer[\rn{r}]{C}{P_1 & ... & P_n}
%  	 \]
%  	 The tree with root $C$ and immediate sub-trees $\mathcal{D}_1,...,\mathcal{D}_n$ is an $\system{G3c}$-derivation.
%    \end{itemize}
% \end{Definition}
% \begin{Definition}[\system{G3c}-proof]
%   A $\system{G3c}$-proof is a $\system{G3c}$-derivation in which all the leaves are instances of $\rn{ax}$ or $\bot$.
% \end{Definition}

\begin{figure}[t]
\hrule
  \vspace{4mm}
	\[
	\begin{array}{c@{\qquad}c}
		\infer[\ax]{\Phi, \Pi \metaseq \Sigma, \Phi}{} 
		&
			\infer[ \bot]{\bot, \Pi \metaseq \Sigma}{}
	 \\[1.5ex]
		\infer[\lrn{\with}]{\Phi \with  \Psi,\Pi\metaseq \Sigma}{\Phi,\Psi,\Pi\metaseq\Sigma}
			& 	\infer[\rrn\with]{\Pi \metaseq \Sigma,\Phi \with \Psi}{\Pi \metaseq\Sigma, \Phi & \Pi\metaseq\Sigma,\Psi}
	 \\[1.5ex]
		\infer[\lrn{\parr}]{\Phi\parr\Psi,\Pi \metaseq \Sigma}{\Phi,\Pi \metaseq\Sigma & \Psi,\Pi \metaseq\Sigma }
	   &
		\infer[\rrn{\parr}]{\Pi \metaseq \Sigma , \Phi \parr  \Psi}{\Pi \metaseq\Sigma,\Phi,\Psi }
	 \\[1.5ex]
		\infer[\lrn \Rightarrow]{\Phi \Rightarrow \Psi,\Pi \metaseq \Sigma}{\Pi \metaseq \Sigma ,\Phi & \Psi,\Pi \metaseq \Sigma}
&
\infer[\rrn \Rightarrow]{\Pi \metaseq \Sigma, \Phi \Rightarrow \Psi}{\Phi,\Pi \metaseq \Sigma,\Psi}
	 \\[1.5ex]
    \infer[\lrn \forall]{\forall X \Phi, \Pi \metaseq \Sigma} {\forall X \Phi, \Phi[X \mapsto T], \Pi \metaseq \Sigma}
    &
    \infer[\rrn \forall]{\Pi \metaseq \Sigma, \forall X \Phi}{\Pi \metaseq \Sigma, \Phi[X \mapsto Y]}
    \\[1.5ex]
     \infer[\lrn \exists]{\exists X \Phi, \Pi \metaseq \Sigma}{\Phi[X \mapsto Y], \Pi \metaseq \Sigma}
    		&
     \infer[\rrn \exists]{\Pi \metaseq \Sigma, \exists X \Phi}{\Pi \metaseq \Sigma, \exists X \Phi, \Phi[X \mapsto T]}
    \end{array}
    \]
    \vspace{4mm}
\hrule
    \caption{Sequent Calculus \system{G3c}}
    \label{fig:G3c}
\end{figure}

We write $\Pi \proves[G3c] \Sigma$ to denote that there is a $\system{G3c}$-proof of $\Pi \seq \Sigma$. Troelstra and Schwichtenberg~\cite{troelstra2000basic} proved that $\system{G3c}$-provability characterizes classical consequence:

\begin{Proposition} \label{lem:snc:G3c}
Let $\Pi$ and $\Sigma$ be multisets of formulae,
\[
\Pi \metaconsequence \Sigma \qquad \mbox{iff}  \qquad \Pi\proves[G3c]\Sigma
\]
\end{Proposition}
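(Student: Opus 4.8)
The plan is to establish Proposition~\ref{lem:snc:G3c} --- the soundness and completeness of \system{G3c} for classical first-order consequence --- by the standard two-direction argument, treating the left-to-right direction (completeness of the calculus) as the substantive half and the right-to-left direction (soundness) as a routine induction.

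First I would fix the semantic side: $\Pi \metaconsequence \Sigma$ means that in every first-order structure (with an assignment to free variables) under which every formula of $\Pi$ is true, some formula of $\Sigma$ is true. For soundness ($\Pi \proves[G3c] \Sigma \Rightarrow \Pi \metaconsequence \Sigma$) I would argue by induction on the height of the \system{G3c}-derivation, checking that each rule in Figure~\ref{fig:G3c} preserves validity of the sequent read as ``$\bigwith \Pi \Rightarrow \bigparr \Sigma$''. The axioms $\ax$ and $\bot$ are immediate; the propositional rules are the usual truth-table checks; the quantifier rules $\rrn\forall$ and $\lrn\exists$ require the eigenvariable side-condition (that $Y$ is fresh, i.e. does not occur in the conclusion) so that a counter-assignment to the premise yields one to the conclusion, while $\lrn\forall$ and $\rrn\exists$ use that $T$ is free for $X$ in $\Phi$ so the substitution lemma applies.

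For completeness I would use the cut-free, contraction-free design of \system{G3c} directly via a proof-search / reduction-tree argument rather than a Henkin construction, since the former is cleaner for this particular calculus and is the route taken in Troelstra and Schwichtenberg. Given $\Pi \seq \Sigma$, build a canonical (possibly infinite) reduction tree by systematically applying all applicable rules bottom-up, being careful to re-use the principal formula in $\lrn\forall$ and $\rrn\exists$ (their premises retain $\forall X\Phi$, resp. $\exists X\Phi$) and to enumerate witnessing terms so that every $\forall$ on the left and every $\exists$ on the right is eventually instantiated by every term of an expanding term-language. If every branch closes (reaches an axiom), the finite part of the tree is a \system{G3c}-proof, giving $\Pi \proves[G3c]\Sigma$. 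Otherwise, by König's lemma there is an infinite branch; collecting the left-hand and right-hand formulae along it yields a saturated ``Hintikka'' pair $(\Pi^\ast,\Sigma^\ast)$, from which one reads off a structure (domain = closed terms modulo the relevant equality theory if present, relations given by the atoms in $\Pi^\ast$) and verifies by induction on formula complexity that every formula in $\Pi^\ast$ is true and every formula in $\Sigma^\ast$ is false. This structure witnesses $\Pi \not\metaconsequence \Sigma$, so the contrapositive gives completeness.

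The main obstacle is the bookkeeping in the completeness direction: ensuring the reduction-tree strategy is genuinely \emph{fair}, so that no quantifier formula is starved and every closed term eventually appears as a witness, and then showing the Hintikka conditions are met on an open branch so that the truth lemma goes through --- in particular handling the interaction between $\lrn\forall$/$\rrn\exists$ (which need persistence of the principal formula and an exhaustive supply of terms) and the eigenvariable rules (which need fresh witnesses). Once the Hintikka set machinery is in place the verification is a straightforward structural induction. I would also remark that since this is a known result, one may alternatively cite Troelstra and Schwichtenberg~\cite{troelstra2000basic} wholesale, as the paper does, and only sketch the argument to keep it self-contained.
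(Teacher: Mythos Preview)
Your proposal is correct, and in fact goes well beyond what the paper does: the paper gives no proof of this proposition at all, but simply attributes the result to Troelstra and Schwichtenberg~\cite{troelstra2000basic} in the sentence preceding the statement. Your sketch (soundness by induction on derivation height; completeness via a fair reduction-tree construction and a Hintikka-set countermodel on an open branch) is exactly the argument found in that reference, so there is no divergence in approach --- you have simply unpacked the citation, and your closing remark that one may cite the result wholesale is precisely what the paper opts for.
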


We have chosen to use \system{G3c} to characterize FOL, as opposed to other proof systems, because of its desirable proof-theoretic properties --- for example, Troelstra and Schwichtenberg~\cite{troelstra2000basic} have shown that the rules of the calculus are (height-preserving) invertible, and that the following rules are admissible: 
\[
		\infer[\lrn{\weak}]{\Phi,\Pi \metaseq \Sigma}{\Pi\metaseq\Sigma}
		\quad
		\infer[\rrn{\weak}]{\Pi \metaseq \Sigma,\Phi}{\Pi\metaseq\Sigma}
		\quad
		\infer[\lrn{\cont}]{\Phi,\Pi \metaseq \Sigma}{\Phi,\Phi,\Pi\metaseq\Sigma}
		\quad
		\infer[\rrn{\cont}]{\Pi \metaseq \Sigma,\Phi}{\Pi\metaseq\Sigma,\Phi,\Phi}
\]

\subsubsection{Model-theoretic Semantics.}

Another way to characterize FOL is by validity in its model-theoretic semantics. As mentioned above, we assume familiarity with the subject and therefore give a terse but complete account of definitions to keep the paper self-contained.

\begin{Definition}[First-order Structure]
A first-order structure is a tuple $\mathcal{S} = \langle \mathbb{U}, \set{R}, \set{F}, \set{K} \rangle$ in which \set{U} is a countable set of elements, $\set{K} \subseteq \set{U}$, $\set{F}$ is a countable set of operators on \set{U} (i.e., endomorphisms $f:\mathbb{U}^{n} \to \mathbb{U}$, for finite $n$), and $\set{R}$ is a countable set of relations on $\set{U}$.
\end{Definition}

\begin{Definition}[Interpretation]
    Let $\mathcal{S}:= \langle \mathbb{U}, \set{R}, \set{F}, \set{K} \rangle$ be a first-order structure, and let $\lang{A} := \langle \set{R}', \set{F}', \set{K}', \set{V} \rangle$  be an alphabet. An interpretation of $\lang{A}$  in $\mathcal{S}$ is a function $\llbracket - \rrbracket$ satisfying the following:
    \begin{itemize}
        \item[-] if $X \in \set{V}$, then $\llbracket X \rrbracket \in \mathbb{U}$;
        \item[-] if $C \in \set{K}'$, then $\llbracket C \rrbracket \in \mathbb{K}$;
        \item[-] if $F \in \set{F}'$, then $\llbracket F \rrbracket \in \set{F}$, and the arity of $\llbracket F \rrbracket$ is the arity of $F$;
        \item[-] if $R \in \set{R}'$, then $\llbracket R \rrbracket \in \set{R}$, and the arity of $\llbracket R \rrbracket$ is the arity of $R$. 
    \end{itemize}
\end{Definition}

We may write $\llbracket-\rrbracket:\lang{A} \to \mathcal{S}$ to denote that $\llbracket - \rrbracket$ is an interpretation of $\lang{A}$ in $\mathcal{S}$. Interpretations extend to terms as follows:
\[
\llbracket F(T_1,...,T_n) \rrbracket:=\llbracket F \rrbracket(\llbracket T_1 \rrbracket,...,\llbracket T_n \rrbracket)
\]

In this paper, we use the term \emph{abstraction} for what is traditionally referred to as a \emph{model}. This is to avoid confusion as we consider the semantics of various propositional logics in subsequent sections, where the term \emph{model} will be significant.

\begin{Definition}[Abstraction]
    An abstraction of an alphabet $\lang{A}$ is a pair $\mathfrak{A}:=\langle \mathcal{S}, \llbracket - \rrbracket \rangle$ in which $\mathcal{S}$ is a structure and $ \llbracket-\rrbracket:\lang{A} \to \mathcal{S}$.
\end{Definition}

\begin{Definition}[Truth in an Abstraction]
    Let $\lang{A}$ be an alphabet, let $\phi$ be a formula over $\lang{A}$, and let $\mathfrak{A} = \langle \alg{S}, \llbracket - \rrbracket  \rangle$ be an abstraction of $\lang{A}$. The formula $\phi$ is true in $\mathfrak{A}$ iff $\mathfrak{A} \metasat \phi$, which is defined inductively by the clauses in Figure~\ref{fig:model}.
\end{Definition}

\begin{figure}[t]
\hrule
  \vspace{4mm}
    \[
    \begin{array}{lcl}
         \mathfrak{M} \metasat R(T_1,...,T_n) & \qquad \mbox{iff} \qquad & \langle \llbracket T_1 \rrbracket,...,\llbracket T_n \rrbracket \rangle  \in \llbracket P \rrbracket \\[1ex]
         \mathfrak{M} \metasat \Phi \Rightarrow \Psi & \qquad \mbox{iff} \qquad &  \mbox{not $\mathfrak{M} \metasat \Phi$ or $ \mathfrak{M} \metasat \Psi$} \\[1ex]
         \mathfrak{M} \metasat \Phi \with \Psi & \qquad \mbox{iff} \qquad &  \mbox{$\mathfrak{M} \metasat \Phi$ and $\mathfrak{M} \metasat \Psi$} \\[1ex]
         \mathfrak{M} \metasat \Phi \parr \Psi & \qquad \mbox{iff} \qquad &  \mbox{$\mathfrak{M} \metasat \Phi$ or $\mathfrak{M} \metasat \Psi$} \\[1ex]
         \mathfrak{M} \metasat \bot & \qquad \mbox{iff} \qquad &  \mbox{never} \\[1ex]
         \mathfrak{M} \metasat \forall X\Phi & \qquad \mbox{iff} \qquad &  \mbox{$\mathfrak{M} \metasat \Phi[X \mapsto T]$ for any $T \in \set{TERM}(\lang{A})$}\\[1ex]
         \mathfrak{M} \metasat \exists X\Phi & \qquad \mbox{iff} \qquad &  \mbox{$\mathfrak{M} \metasat \Phi[X \mapsto T]$ for some $T \in \set{TERM}(\lang{A})$ }
    \end{array}
    \]
\vspace{4mm}
\hrule
    \caption{Truth in an Abstraction}
    \label{fig:model}
\end{figure}

We may extend the truth of formulae in an abstraction to the truth of (multi-)sets of formulae by requiring that all the elements in the set are true in the abstraction --- that is, if $\mathfrak{A}$ is a model and $\Pi$ is a multiset of formulae,
\[
\mathfrak{A} \metasat \Pi \qquad \mbox{ iff } \qquad \mathfrak{A} \metasat \Phi \mbox{ for every 
} \Phi \in \Pi
\]

G\"odel~\cite{Godel} --- see also van Dalen~\cite{vanDalen} --- proved that abstractions characterize FOL:

\begin{Proposition} \label{lem:snc:classical}
Let $\Pi$ and $\Sigma$ be multisets of formulae,
\[
\begin{array}{lcl}
\Pi \metaconsequence \Sigma & \mbox{iff} & 
\mbox{ for any abstraction $\mathfrak{A}$, if $\mathfrak{A} \metasat \phi$ for any $\phi \in \Pi$,} \\ & & \mbox{then there is $\psi \in \Sigma$ such that $\mathfrak{A} \metasat \psi$}
\end{array}
\]
\end{Proposition}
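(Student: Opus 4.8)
The statement is the soundness and completeness theorem for first-order classical logic with respect to its model-theoretic semantics (abstractions), restated in the paper's own notation. Since Proposition~\ref{lem:snc:G3c} has already been cited — giving $\Pi \metaconsequence \Sigma$ iff $\Pi \proves[G3c] \Sigma$ — and the definition of $\metaconsequence$ in the paper is not yet pinned down other than via these two propositions, the cleanest route is to prove the equivalence between $\system{G3c}$-provability and the semantic consequence condition on the right-hand side. In other words, I would prove: $\Pi \proves[G3c] \Sigma$ if and only if every abstraction $\mathfrak{A}$ with $\mathfrak{A} \metasat \Phi$ for all $\Phi \in \Pi$ satisfies $\mathfrak{A} \metasat \Psi$ for some $\Psi \in \Sigma$. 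Combined with Proposition~\ref{lem:snc:G3c} this yields the stated biconditional. (If one prefers, one simply invokes G\"odel's completeness theorem as cited; but I will sketch the argument, since that is what is being asked.)

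The plan is as follows. \emph{Soundness} (left-to-right) goes by induction on the height of a $\system{G3c}$-proof of $\Pi \seq \Sigma$. First I would check the base cases: for $\ax$, any abstraction satisfying $\Phi$ and everything in $\Pi$ trivially satisfies $\Phi \in \Sigma$; for the $\bot$-rule, no abstraction satisfies $\bot$, so the premise on the antecedent side is vacuous. Then, for each logical rule of Figure~\ref{fig:G3c}, I would assume the semantic consequence property for the premises and derive it for the conclusion, reading off the truth clauses in Figure~\ref{fig:model}; the quantifier rules $\rrn\forall$ and $\lrn\exists$ require the eigenvariable condition, which I would handle by the standard observation that if $Y$ does not occur in $\Pi, \Sigma$ then the value of $\llbracket Y \rrbracket$ can be varied freely without affecting the truth of formulae in $\Pi$ or $\Sigma$ — here one uses that the semantics in Figure~\ref{fig:model} is substitutional (quantifiers range over $\set{TERM}(\lang{A})$ rather than over $\set{U}$ directly), so I would phrase the argument in terms of substitution instances rather than variable reassignments.

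\emph{Completeness} (right-to-left) is the harder direction and the main obstacle. I would prove the contrapositive: if $\Pi \not\proves[G3c] \Sigma$, then there is an abstraction $\mathfrak{A}$ with $\mathfrak{A} \metasat \Pi$ and $\mathfrak{A} \not\metasat \Psi$ for every $\Psi \in \Sigma$. The standard technique is a saturation / Henkin-style term-model construction: systematically attempt a (possibly infinite) bottom-up proof search for $\Pi \seq \Sigma$, and since it fails, extract a saturated "Hintikka" branch — a pair of sets $(\Gamma_L, \Gamma_R)$ of formulae with $\Pi \subseteq \Gamma_L$, $\Sigma \subseteq \Gamma_R$, closed under the premises of every $\system{G3c}$ rule (read upward) and containing no axiom instance (no formula in both $\Gamma_L$ and $\Gamma_R$, and $\bot \notin \Gamma_L$). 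The key point exploited here is that $\system{G3c}$ has the structural properties recorded just after Proposition~\ref{lem:snc:G3c} — admissibility of weakening and contraction and invertibility of all rules — which is precisely what makes such a fair, non-looping saturation possible and what makes the resulting branch genuinely closed under the rule premises; $\lrn\forall$ and $\rrn\exists$ are handled by enumerating witnessing terms, and the Henkin property is built in because $\set{TERM}(\lang{A})$ is used as the domain. Then I would define the term structure $\mathcal{S}$ with universe $\set{TERM}(\lang{A})$ (assuming, as is harmless, at least one closed term, extending the alphabet if necessary), interpret each function symbol syntactically and each relation symbol by $\langle T_1, \dots, T_n \rangle \in \llbracket R \rrbracket$ iff $R(T_1,\dots,T_n) \in \Gamma_L$, take $\llbracket - \rrbracket$ to be the identity on terms, and prove by induction on formula complexity the truth lemma: $\mathfrak{A} \metasat \Phi$ for all $\Phi \in \Gamma_L$ and $\mathfrak{A} \not\metasat \Psi$ for all $\Psi \in \Gamma_R$. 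This gives the required abstraction. The delicate points — managing the bookkeeping of the saturation so that every rule is applied infinitely often (fairness), and verifying the quantifier cases of the truth lemma against the substitutional clauses in Figure~\ref{fig:model} — are exactly where the argument requires care; everything else is routine. Since this is the classical G\"odel completeness theorem, in the paper one would simply cite \cite{Godel, vanDalen} rather than reproduce the construction.
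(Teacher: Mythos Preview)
Your proposal is correct, and in fact goes well beyond what the paper does: the paper gives no proof at all for this proposition, simply attributing it to G\"odel~\cite{Godel} (with a pointer to van~Dalen~\cite{vanDalen}) as a known result. You correctly anticipated this in your final sentence, and your sketch of the soundness-by-induction and completeness-by-saturated-term-model argument is the standard one and is sound.
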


This concludes the summary of FOL. 

\subsection{Propositional Logic} \label{sec:background:propositional}

There is no consensus in the literature on what propositional logic means. This paper uses a relatively broad definition that captures the most common propositional logics (e.g., classical propositional logic, intuitionistic propositional logic, modal logics, linear logics, bunched logics, etc.). We include context-formers explicitly as a part of the language of propositional logics. More precisely, we include \emph{data}-formers as they may appear on either the left or right of sequents for the propositional logics, and we use `context' to refer to the left of sequents. This is useful for two reasons: first, it enables us to move between the propositional logics without ambiguity; second, it enables us to handle propositional logics that are expressed in terms of more complex data structures of formulae than lists, multisets, or sets, such as the family of relevance logics (see, for example, Read~\cite{Read1988}) and the family of bunched logics (see, for example, work by Docherty, O'Hearn and Pym~\cite{ohearn1999logic,Pym2003,Docherty2019}). Throughout, we give a running example of normal modal logics, which relates the work of this paper to that of Negri~\cite{Negri2005}.

\begin{Definition}[Propositional Alphabet] \label{def:Prop:Alphabet}
    A propositional alphabet is a tuple of three elements $\lang{P}:= \langle \set{A} , \set{O} , \set{C} \rangle$ such that $\set{A}$, $\set{O}$, and $\set{C}$ are pairwise disjoint sets of symbols such that $\set{A}$ is countable and $\set{O}$ and $\set{C}$ are finite. The symbols in $\set{O}$ and $\set{C}$ have a fixed arity.
\end{Definition}

The elements of  $\set{A}$ are \emph{atomic propositions}, the elements of $\set{O}$ are \emph{operators}, and the elements of $\set{C}$ are \emph{data-constructors}. We use the term \emph{operators} to subsume `connectives' and `modalities' in the traditional terminology. Similarly, we use the term `data-constructor' as a neutral term for what is sometimes called a `context-former' as we shall have data both on the left and right of sequents with possibly different constructors and reserve the term `context' for the left-hand side. 

\begin{Definition}[Formula, Data, Sequent] \label{def:Prop:Formula_Data}
    Let $\lang{P} := \langle \set{A}, \set{O}, \set{C} \rangle$ be a propositional alphabet. The set of formulae, data, and sequents from $\lang{P}$ are as follows:
    \begin{itemize}
        \item[-]  The set of propositional formulae $\set{FORM}(\lang{P})$ is the smallest set containing $\set{A}$ such that, for any $\phi_1,...,\phi_k \in \set{FORM}(\lang{P})$ and $\circ \in  \set{O}$, if $\circ$ has arity $n$, then $\circ(\phi_1,...,\phi_n) \in \set{FORM}(\lang{P})$
        \item[-] The set $\set{DATA}({\lang{P}})$ is the smallest set containing $\set{FORM}(\lang{P})$ such that, for any $\delta_1,...,\delta_n \in \set{DATA}(\lang{P})$ and $\bullet \in \set{C}$, if $\bullet$ has arity $n$, then $\bullet(\delta_1,...,\delta_n) \in \set{DATA}(\lang{P})$
        \item[-] A $\lang{P}$-sequent is a pair $\Gamma \seq \Delta$ in which $\Gamma, \Delta \in \set{DATA}(\lang{P})$.
    \end{itemize}  
\end{Definition}

\begin{Example} \label{ex:basic:alphabet}
The basic modal alphabet is $\lang{B} = \langle \set{A}, \{\land,\lor, \neg, \square\}, \{\emptyset,\fatcomma, \fatsemi\} \rangle$. The arities of $\land$, $\lor$, $\fatcomma$, and $\fatsemi$ is $2$; the arities of $\neg$ and $\square$ is $1$; and, the arity of $\emptyset$ is $0$. We may write $\phi \supset \psi$ to abbreviate $\neg \phi \lor \psi$. 

Let $\at{p}_1,\at{p}_2,\at{p}_3 \in \set{A}$. Using infix notation, the following are examples of elements from $\set{FORM}(\lang{B})$:
\[
\at{p_3} \qquad (\at{p}_{1} \land \at{p}_2) \qquad  (\at{p}_3 \supset (\at{p}_{1} \land \at{p}_1))
\]
As well as being elements of $\set{FORM}(\lang{B})$, they are also elements in $\set{DATA}(\lang{B})$. 

Another example of an element from $\set{DATA}(\lang{B})$ is the following:
\[
\at{p_3} \fatcomma (\at{p}_3 \supset (\at{p}_{1} \land \at{p}_2))
\]

The following is an example of a $\lang{B}$-sequent:
\[
\at{p}_3 \fatcomma \at{p}_3 \supset (\at{p}_{1} \land \at{p}_2) \seq \at{p}_{1} \land \at{p}_2
\]
\end{Example}

This completes the definition of the language of a propositional logic generated by an alphabet. What makes language into a logic is a notion of consequence. 

\begin{Definition}[Propositional Logic]
 Let $\lang{A}$ be a propositional alphabet. A propositional logic over $\lang{A}$ is a relation $\proves$ over $\lang{A}$-sequents.
 \end{Definition}

The relation $\proves$ is called the \emph{consequence judgment} of the logic; its elements are \emph{consequences}. We write $\Gamma \proves \Delta$ to denote that the sequent $\Gamma \seq \Delta$ is a consequence. This definition of propositional logic needs more sophistication in many regards --- for example, nothing renders the operators of the alphabet as logical constants --- but the point is not to satisfy the doxastic interpretation of what constitutes a logic. Interesting though that may be, it amounts to refining the current definition. What is given here suffices for present purposes and encompasses the vast array of propositional logics in the literature.

\subsubsection{Proof Theory}
 In this paper, we are concerned about the proof-theoretic characterization of a logic and what it tells us about that logic. Fix a propositional alphabet $\lang{P}$.

\begin{Definition}[Sequent Calculus]
A rule $\rn{r}$ over $\lang{P}$-sequents is a (non-empty) relation on $\lang{P}$-sequents; a rule with arity one is an axiom. A sequent calculus is a set of rules at smallest one of which is an axiom. 
\end{Definition}

Note that $\system{LBI}_\mathcal{B}$ in Section 2 does not contain axioms and is, therefore, not a sequent calculus according to this definition. This is because we regard it as a \emph{constraint system} in which axioms are not necessary --- see Section~\ref{sec:constraints}.

We have not defined rules by rule-figures and do not assume they are necessarily closed under substitution. This allows us to speak of rules with side-conditions --- see, for example, $\exch \in \system{LBI}$ in Section~\ref{sec:ex:bi}. Of course, we will otherwise follow standard conventions --- see, for example, Troelstra and Schwichtenberg~\cite{troelstra2000basic}. 

Let $\rn{r}$ be a rule, the situation $\rn{r}(C,P_1,...,P_n)$ may be denoted as follows:
\[
    \infer[\rn{r}]{C}{P_1 & \hdots & P_n}
\]  
In such instances, the string $C$ is called the conclusion, and the strings $P_1,...,P_n$ are called the premisses. 

\begin{Example}[Example~\ref{ex:basic:alphabet} cont'd] \label{ex:basic:conjunctionintroduction}
    The rule $\rrn \land$ over basic modal sequents is defined by the following figure without any side-conditions: 
    \[
	\infer[\rrn{\land}]{\mathrm{\Gamma} \seq \Delta \fatcomma \phi \land \psi}{\mathrm{\Gamma} \seq \Delta \fatcomma \phi &  \mathrm{\Gamma} \seq \Delta \fatcomma \psi}
	\]
This rule is admissible for the normal modal logic $K$ --- see Blackburn et al.~\cite{Blackburn2001}. That is, let $\proves[K]$ be the consequence relation for $K$ (over the basic modal alphabet $\lang{B}$): if $\mathrm{\Gamma} \proves[K] \Delta \fatcomma \phi$ and  $\mathrm{\Gamma} \proves[K] \Delta \fatcomma \psi$, then $\mathrm{\Gamma} \proves[K] \Delta \fatcomma \phi \land \psi$.
 \end{Example}
 
 \begin{Definition}[Derivation]
 	 Let $\system{L}$ be a sequent calculus. The set of $\system{L}$-derivations is defined inductively as follows:
 	\begin{itemize}
 	 \item[-]\textsc{Base Case.} If $C$ is a $\lang{P}$-sequent, then the tree consisting of just the node $C$ is an $\system{L}$-derivation.
 	 \item[-]\textsc{Inductive Step.} Let $\mathcal{D}_1,...,\mathcal{D}_n$ are $\system{L}$-derivations, with roots $P_1,...,P_n$, respectively, and let $\rn{r} \in \system{L}$ be a rule such that $\rn{r}(C,P_1,...,P_n)$ obtains. The tree with root $C$ and immediate sub-trees $\mathcal{D}_1,...,\mathcal{D}_n$ is a $\system{L}$-derivation.
   \end{itemize}
 \end{Definition}
 \begin{Definition}[Proof]
    Let $\system{L}$ be a sequent calculus. An  $\system{L}$-derivation $\mathcal{D}$ is a proof iff the leaves of $\mathcal{D}$ are instances of axioms of $\system{L}$.
 \end{Definition}
 
We write $\Gamma \proves[L] \Delta$ to denote that there is a $\system{L}$-proof of the sequent $\Gamma \seq \Delta$. A sequent calculus may have the following relationships to a propositional logic ($\proves$):
 \begin{itemize}
     \item[-] Soundness: If $\Gamma \proves[L] \Delta$, then $\Gamma \proves \Delta$.
     \item[-] Completeness: If $\Gamma \proves \Delta$, then $\Gamma \proves[L] \Delta$.
 \end{itemize}
 
In saying that $\system{L}$ is a sequent calculus for a propositional logic (i.e., that it characterizes that logic), we assert that $\system{L}$ is sound and complete for that logic.

\begin{Example}[Example~\ref{ex:basic:conjunctionintroduction} cont'd] \label{ex:basic:consequence}
We may characterize modal logics, including $K$, by axiom systems --- see, for example, Blackburn et al.~\cite{Blackburn2001}. Each such system $\system{A}$ is a sequent calculus in the general sense of this paper. 
\end{Example}

This concludes the proof-theoretic account of propositional logics in this paper. 

\subsubsection{Model-theoretic Semantics}

We now give a generic account of model-theoretic semantics (M-tS) that can define a logic over a propositional language. By M-tS we mean a frame semantics \emph{\`a la} Kripke~\cite{Kripke1963,kripke1965semantical} --- see also Beth~\cite{Beth1955}.  We follow Blackburn et al.~\cite{Blackburn2001} in the approach for a general account of M-tS.  

\begin{Definition}[Type] \label{def:type}
   A type $\tau$ is a list of non-negative integers. 
\end{Definition}
\begin{Definition}[Frame, Assignment, Pre-model] \label{def:frame}
   Let $\tau := \langle t_1,...,t_n \rangle$ be a type. A $\tau$-frame is a tuple $\langle \set{U}, k_1,...,k_n \rangle$ in which $\set{U}$ is a set and $k_i$ is a relation on $\set{U}$ of arity $t_i$.  Let $\lang{P}= \langle \set{A}, \set{O},\set{C}\rangle$ be a propositional alphabet. An assignment of $\lang{P}$ to $\mathcal{F}$ is a mapping from propositional atoms to sets of worlds, $I:\set{A} \to \pow(\set{U})$. A $\tau$-pre-model over $\lang{P}$ is a pair $\mathfrak{M}:=\langle \mathcal{F}, I \rangle$, in which $\mathcal{F}$ is a $\tau$-frame and $I$ is an assignment of $\lang{P}$ to $\mathcal{F}$.
\end{Definition}

The elements of $\set{U}$ are called possible worlds. One possible objection to the definition of a frame is the absence of operators (i.e., endomorphisms $f:\set{U}^n \to \set{U}$). This is to simplify the setup and is without loss of generality as operators may be regarded as particular types of relations; that is, the operator $f:\set{U}^n \to \set{U}$ corresponds to the $(n+1)$-ary relation $R$ satisfying $R(w,u_1,...,u_n)$ iff $w=f(u_1,...,u_n)$. 

% \begin{Example}\label{ex:tauframes}
% Fix the type $\tau := \langle 2 \rangle $. An example of a  $\tau$-frame is the partial order on a set of two elements $\langle \{x,y\}, R \rangle$. Of course, one may add additional condition, such as the following: the relation $xRy$ obtains, but $yRx$ does not.
% \end{Example}

Intuitively, a formula $\phi$ is true in a model $\mathfrak{M}$ at a world $w$ if the world $w$ satisfies the formulas. 

\begin{Definition}[Satisfaction for a Type] \label{def:sat}
   Let $\tau$ be a type and $\lang{P}$ a propositional alphabet. A $\tau$-satisfaction relation for $\lang{P}$ is a relation $\sat$ parameterized by $\tau$-pre-models between worlds $w$ in the pre-models $\mathfrak{M}= \langle \mathcal{F}, I \rangle$ and $\lang{P}$-data such that the following holds:
   \[
   \mathfrak{M}, w \sat \at{p} \qquad \mbox{iff} \qquad w \in I(\at{p} )
   \]
\end{Definition}

\begin{Definition}[Semantics] \label{def:semantics}
   Let $\tau$ be a type and $\lang{P}$ a propositional alphabet. A semantics is a pair $\sem{S} := \langle \set{M}, \sat \rangle$ in which $\set{M}$ is a set of $\tau$-pre-models and $\set{\sat}$ is a $\tau$-satisfaction relation for $\lang{P}$.
\end{Definition}

\begin{Definition} \label{def:valid}
    Let $\sem{S}$ be a semantics. A sequent $\Gamma \seq \Delta$ is valid in $\sem{S}$ --- denoted $\Gamma \entails_{\sem{S}} \Delta$ --- iff, for any $\mathfrak{M} \in \set{M}$ and any $w \in \mathfrak{M}$, if $\mathfrak{M}, w \sat \Gamma$, then $\mathfrak{M}, w \sat \Delta$.
\end{Definition}

\begin{Example}[Example~\ref{ex:basic:consequence} cont'd] \label{ex:modal:satisfaction}
Fix the type $\tau := \langle 2 \rangle $. An example of a $\tau$-frame is a pair $\langle \{x,y\}, R \rangle$ in which $R$ is a binary relation on $\{x,y\}$. Partition the atoms $\set{A}$ into two classes $\set{A}_1$ and $\set{A}_2$; an example of an assignment $I:\set{A} \to \pow(\{x,y\})$ is given as follows:
\[
I(\at{p}):= 
\begin{cases}
x & \text{ if } \at{p}\in \set{A}_1 \\ 
y & \text{ if } \at{p} \in \set{A}_2
\end{cases}
\]
The pair $\mathfrak{M} := \langle \mathcal{F}, I \rangle$ is an example of a model over the basic modal alphabet $\lang{B}$. The basic semantics $\sem{K}$ is the pair $\langle \set{K}, \sat \rangle$ in which $\set{K}$ is the set of all $\tau$-pre-models and $\sat$ is the smallest relation satisfying the clauses in Figure~\ref{fig:modal:satisfaction} together with the following:
  \[
\begin{array}{ccc}
\mbox{$\mathfrak{M},w \sat \Delta\fatcomma\Delta'$} & \qquad \mbox{iff} \qquad & \mbox{ $\mathfrak{M},w \sat \Delta$ and $\mathfrak{M},w \sat\Delta'$} \\
\mbox{$\mathfrak{M},w \sat \Delta\fatsemi\Delta'$} & \qquad \mbox{iff} \qquad & \mbox{$\mathfrak{M},w \sat \Delta$  or $\mathfrak{M},w \sat\Delta'$} 
\end{array}
\]
The validity judgment $\entails_{\sem{K}}$ defines the modal logic $K$ --- see, for example, Kripke~\cite{Kripke1963}, Blackburn et al.~\cite{Blackburn2001}, and Fitting and Mendelsohn~\cite{Fitting1998}.

%By unfolding the clauses of modal satisfaction, we witness the following, in which $\at{q}_1 \in \set{A}_1$ and $\at{q}_2 \in \set{A}_2$:
%\[
%\begin{array}{cc}
%\mathfrak{M},x \sat \at{q}_1
%& 
%\mathfrak{M},y \sat \at{q}_2  \\
%\mathfrak{M},x \sat \Box \at{q}_1 
%&
%\mathfrak{M},y \not \sat \Box \at{q}_1
%\end{array}
%\]
\end{Example}

\begin{figure}[t]
  \hrule
  \vspace{4mm}
     \[
     \begin{array}{lcl}
     \mathfrak{M}, w  \sat \at{p} & \qquad \mbox{iff} \qquad & w \in  I(\at{p}) \\[1ex]
     \mathfrak{M}, w  \sat \phi \land \psi & \qquad \mbox{iff} \qquad &  \mbox{$\mathfrak{M}, w \sat \phi$ and $ \mathfrak{M}, w  \sat \psi $}   \\[1ex]
     \mathfrak{M}, w  \sat \phi \lor \psi & \qquad \mbox{iff} \qquad &  \mbox{$\mathfrak{M}, w \sat \phi$ or $ \mathfrak{M}, w  \sat \psi $}   \\[1ex]
     \mathfrak{M}, w  \sat \neg \phi & \qquad \mbox{iff} \qquad &  \mbox{not $\mathfrak{M}, w \sat \phi$}   \\[1ex]
      \mathfrak{M}, w  \sat \Box \phi & \qquad \mbox{iff} \qquad &  \mbox{for any $u$, if $wRu$, then $\mathfrak{M}, u \sat \phi$}   
     \end{array}
     \]
\vspace{4mm}
\hrule
    \caption{Satisfaction for $K$}
    \label{fig:modal:satisfaction}
\end{figure}

The significance of $\set{M}$ in the definition of a semantics is that one may not want to consider all pre-models but instead require them to satisfy a specific condition --- see, for example, the persistence condition for the semantics of intuitionistic propositional logic in Section~\ref{sec:ex:ipl}.

The notion of semantics in this paper is generous, including many relations that one would not typically accept as semantics. This is to keep the presentation simple and intuitive. In the next section, we restrict attention to satisfaction relations that admit particular presentations that enable us to analyse them, but doing so presently would obscure the setup.

Historically, the \emph{a priori} definition of a  consequence-relation has been by validity in a semantics. In this paper, we only work with logics for which we assume there is a sequent calculus. Therefore, we may use the nomenclature of Section~\ref{sec:background:propositional} to relate entailment to consequence via provability. A sequent calculus $\system{L}$ may have the following relationship to a semantics $\sem{S}$:

  \begin{itemize}
     \item[-] Soundness: If   $\Gamma \proves[L] \Delta$, then $\Gamma \entails_{\sem{S}} \Delta$.
     \item[-] Completeness: If $\Gamma \entails_{\sem{S}} \Delta$, then $\Gamma \proves[L] \Delta$.
 \end{itemize}

 This completes the summary of propositional logics. Moreover, it completes the technical background to this paper. There are various judgments present, whose relationship are important for the rest of the paper. In the beginning of the next section we provide a brief summary of how all this background is used before proceeding with the technical work.

\section{Constraint Systems} \label{sec:constraints}

This section provides a formal definition of constraint systems. Briefly, a \emph{constraint system} is a sequent calculus in which the data may carry labels representing expressions over some algebra. Rules may manipulate those expressions or demand constraints on them. At the end of a construction in a constraint system, one checks to see that the constraints are coherent and admit an interpretation in the intended algebra. This generalizes the setup of RDvBC in Section~\ref{sec:ex:bi} to an arbitrary algebra and an arbitrary propositional logic. In the next section (Section~\ref{sec:ex:relationalcalculi}), we provide a method for producing constraint systems in a modular way, and in the one after (Section~\ref{sec:ex:ipl}), we illustrate their use in studying model theory. 

The work in this section is technical and abstract; therefore, we give a brief overview summarizing the main ideas. We begin with a propositional logic $\proves$ over an alphabet $\lang{P}$ and a sequent calculus $\system{L}$ which has some desirable features but which is not immediately related to the logic  --- for example, in Section~\ref{sec:ex:bi}, we had BI as the propositional logic and $\system{L}$ a version of $\system{LJ}$ in which contexts are maintained during reduction. We then introduce an algebra, which we understand in terms of first-order structures $\mathcal{A}$, and present in terms of a first-order alphabet $\lang{A}$ --- for example, see the presentation of Boolean algebra in Section~\ref{sec:ex:bi}. We fuse $\lang{P}$ and $\lang{A}$ creating a language $\lang{P} \oplus \lang{A}$ in which the algebra is used to label $\lang{P}$-data. We introduce a valuation map $\nu_I$, parameterized by interpretations $I:\lang{A}\to\mathcal{A}$, that maps $\lang{P}\oplus\lang{A}$-data to $\lang{P}$-data. This defines the action the algebra has on the data. A constraint system is a generalized notion of a sequent calculus of $\lang{P}\oplus\lang{A}$-sequents, which may have $\lang{A}$ expressions as local or global constraints on the correctness of inferences --- see, for example, $\system{LBI}_\mathcal{B}$ in Section~\ref{sec:ex:bi}. Finally, we give correctness conditions for constraint systems: first, a constraint system $\system{C}$ is sound and complete, relative to $\nu_I$, for the logic $\proves$ iff its constructions witness all and only the consequence of the logic; second, a constraint system $\system{C}$ is faithful and adequate, relative to $\nu_I$, for a sequent calculus $\system{L}'$ iff its constructions witness all and only $\system{L}'$-proofs. Throughout the rest of the paper, we illustrate how constraint systems with these correctness conditions aid in studying logic.

The section is composed of three parts. Section~\ref{sec:constraint:reductivelogic} explains the paradigmatic shift necessary for constraint systems: one constructs proofs upward rather than downward. In Section~\ref{sec:constraints:system}, we define constraint systems formally as the enrichment of a sequent calculus by an algebra of constraints. Finally, in Section~\ref{sec:constraint:correctness}, we define correctness conditions relating constraint systems to logics and their proof-theoretic formulations with sequent calculi.

\subsection{Reductive Logic} \label{sec:constraint:reductivelogic}

The traditional paradigm of logic proceeds by inferring a conclusion from 
established premisses using an \emph{inference rule}. This is the paradigm known as \emph{deductive logic}: 
\[
\infer[\Downarrow]{\text{Conclusion}}{\text{Established Premiss}_1 & ... & 
\text{Established Premiss}_n }
\]

In contrast, the experience of the use of logic is often dual to deductive logic in the sense that it proceeds from a putative conclusion to a collection of premisses that suffice for the conclusion. This is the paradigm known as \emph{reductive logic}: 
\[
\infer[\Uparrow]{\text{Putative Conclusion}}{\text{Sufficient Premiss}_1 & ... & \text{Sufficent Premiss}_n }
\]
Rules used backward in this way are called \emph{reduction operators}. The objects created using reduction operators are called \emph{reductions}. We believe that this idea of reduction was first explained in these terms by Kleene~\cite{kleene2002mathematical}. There are many ways of studying reduction, and a number of models have been considered, especially in the case of classical and intuitionistic logic --- see, for example, Pym and Ritter~\cite{Pym2004}. 

Historically, the deductive paradigm has dominated since it exactly captures the meaning of truth relative to some set of axioms and inference rules, and therefore is the natural point of view when considering \emph{foundations of mathematics}. 
%, the original \emph{raison d'\^{e}tre} of mathematical logic. 
However, it is the reductive paradigm from which much of computational logic derives, including various instances of automated reasoning ---  see, for example, Kowalski~\cite{Kowalski1971}, Bundy~\cite{Bundy1983}, and Milner~\cite{Milner1984}. 

Constraint systems (e.g., $\system{LBI}_\mathcal{B}$) sit more naturally within the reductive perspective, with the intuition that one generates constraints as one applies rules backwards. Therefore, in constraint systems, when we \emph{use} a rule we mean it in the reductive of sense.  

Having given the overall paradigm on logic in which constraint systems are situated, we are now able to define them formally and uniformly.

\subsection{Expressions, Constraints, and Reductions} \label{sec:constraints:system}

In Section~\ref{sec:background:propositional}, we defined what we mean by propositional logic. Recall that we may say \emph{algebra} to mean a first-order structure (see Section~\ref{sec:background:fol}). In this context, what we mean by expressions and constraints are terms and formulae, respectively, from an alphabet in which that algebra is interpreted. 

Let $\lang{A}$ be a (first-order) alphabet.
\begin{Definition}[Expression]
   An $\lang{A}$-expression is a term over $\lang{A}$ --- that is, an element of $\set{TERM}(\lang{A})$
\end{Definition}
\begin{Definition}[Constraint]
    An $\lang{A}$-constraint is a formula over $\lang{A}$ --- that is, an element of $\set{WFF}(\lang{A})$
\end{Definition}

When it is clear that an alphabet has been fixed, we may elide it alphabet when discussing labelled formulae, labelled data, and enriched sequents. We use the terms `expression' and `constraint' to draw attention to the fact that we have a certain algebra in mind and a certain way that the constants and functions of the alphabet are meant to be interpreted. For example, in Section~\ref{sec:ex:bi}, we always take symbol $+$ to always be interpreted as Boolean addition. What may change is the interpretation of variables. In short, we have some set of intended interpretations that are \emph{coherent}.

\begin{Definition}[Coherent Interpretations]
   Let $\set{I}$ be a set of interpretations of an algebra $\mathcal{A}$ in $\lang{A}$. The set $\set{I}$ is coherent iff, for any $I_1,I_2 \in \set{I}$, they behave the same except possibly for atoms.
\end{Definition}

Typically, the set of intended interpretations is maximal in the sense that any interpretation of the algebra in the alphabet is either in the set or is not a variant of an interpretation in the set.

We use expressions to enrich the language of the propositional logic and thereby express meta-theoretic conditions on formulae and sequents. Let $\lang{P}$ be a propositional alphabet. 

\begin{Definition}[Labelled Data]
The set of labelled $\lang{P}$-data is defined inductively as follows:
    \begin{itemize}
    \item[-] \textsc{Base Case.} If $\phi$ is a formula and $e$ is an $\lang{A}$-expression, then $\phi \cdot e$ is a $\lang{A}$-labelled $\lang{P}$-datum.
    
    \item[-]\textsc{Inductive Step.} If $\delta_1$, ... , $\delta_n$ are labelled $\lang{P}$-data, $\bullet$ is a data-constructor in $\lang{P}$ with arity $n$, and $e$ is an $\lang{A}$-expression, then $\bullet(\delta_1,...,\delta_n)\cdot e$ is a $\lang{A}$-labelled $\lang{P}$-datum.
    \end{itemize}
\end{Definition}

\begin{Definition}[Enriched Sequent]
An $\lang{A}$-enriched $\lang{P}$-sequent is a pair $\Pi \seq \Sigma$, in which $\Pi$ and $\Sigma$ are multisets of $\lang{A}$-labelled $\lang{P}$-data and constraints.
\end{Definition}

We may suppress $\lang{A}$ and $\lang{P}$ when it is clear what alphabet for what algebra is labeling what propositional language. Observe that we have shifted from the object-logic to the meta-logic; that is enriched sequents are a restricted form of meta-logic sequents that encapsulate object-logic sequents with conditions expressed by expressions from the algebra. This setup differs slightly from the presentation of RDvBC in Section~\ref{sec:ex:bi} to simplify presentation of the general case. Recall that data is the general name for contexts in the propositional logic, which may be bunches. Consequently, the presentation of RDvBC in terms of enriched sequents would consist of pairs of multisets each of which contain only one element, the labelled bunch.

\begin{Example} \label{ex:rdvbc:enriched}
The are various enriched sequents in RDvBC --- see Section~\ref{sec:ex:bi}. An additional example is as follows:
\[
\at{p} \cdot x \fatcomma (q \fatsemi r)\cdot y  \seq (\at{p}\land \at{q}) \cdot x
\]
\end{Example}

A \emph{constraint system} is a generalization of sequent calculus that uses enriched sequents and constraints. %This is actually a simplification because enriched sequents are always pairs of lists, as opposed to pairs of other forms of data such as bunches.
%As such, constraint systems are more uniform that sequent calculi, which makes it easier for us to discuss them in generality. 
The constructions of a constraint system are generated co-recursively on enriched sequents, producing constraints along the way along.

\begin{Definition}[Constraint System]
    A constraint rule is a relation between an enriched sequents and a list of enriched sequents and constraints. A constraint system is a set of constraint rules.
\end{Definition}

We use the same notation as in Section~\ref{sec:background:propositional} for constraint rules; that is, that $\rn{r}(C,P_1,...,P_n)$ obtains may be expressed as follows:
\[
\infer[\rn{r}]{C}{P_1 & ... & P_n}
\]
In this case, $C$ is an enriched sequent and $P_1,...,P_n$ are either enriched sequents or constraints. The terms \emph{premiss} and \emph{conclusion} are analogous to those employed for sequent calculus rules in Section~\ref{sec:background:propositional}. We assume the convention of putting constraints after enriched sequents in the list of premisses. 

\begin{Example}[Example~\ref{ex:rdvbc:enriched} cont'd] \label{ex:rdvbc:rule}
System $\system{LBI}_\mathcal{B}$ in Section~\ref{sec:ex:bi} is a constraint system.  Therefore, any rule in it is an example of a constraint rule. We shall consider two examples.

The following is a constraint rule:
\[
\infer[\rrn\ast]{\Delta\fatcomma\Delta' \seq \phi*\psi}{\Delta \cdot V \seq \phi & \Delta' \cdot \bar{V} \seq \psi}
\]
If $e$ is a label on $\Delta$, then $\Delta \cdot V$ denotes the result of replacing $e$ by a product of $e$ and $V$. The following inference is an instance of the rule:
\[
    \infer{\at{p} \cdot 1 \fatcomma (q \fatsemi r)\cdot 1 \seq (\at{p} \ast \at{q})\cdot 1}{\at{p} \cdot (1x) \fatcomma (q \fatsemi r)\cdot (1y) \seq \at{p} \cdot 1 & \at{p} \cdot (1\bar{x}) \fatcomma (q \fatsemi r)\cdot (1\bar{y}) \seq \at{q} \cdot 1  }
\]

Another example of a constraint rule is as follows:
\[
\infer[\rn{taut}]{\phi\cdot y \fatcomma \Delta\cdot V \metaseq \phi\cdot 1}{y=1 \land V = 0 }
\]
Here $V$ are all the labels on $\Delta$ and $V =0$ denotes $x_1=0 \with .... \with x_n = 0$ if $V$ is the list $x_1,...,x_n$. An instance of the rule is the following:
\[
\infer{p \cdot(1x)\fatcomma (q \fatsemi r) \cdot 1y \seq p \cdot 1}{1x =1 & 1y = 0}
\]
Intuitively, this corresponds to an \emph{axiom} of a sequent calculus as it reduces to constraints which, when solved, state weather or not the sequent in the conclusion is a consequence of BI. 
\end{Example}

Unlike sequent calculi, a constraint system does not necessarily contain axioms. This is possible because the set of things generated by a constraint system is defined co-inductively, so the restriction is unnecessary. We define reductions co-inductively because constraint systems sit within the paradigm of reductive logic. The reductions in this paper are all finite. 

\begin{Definition}[Reduction in a Constraint System]
    Let $\system{C}$ be a constraint system and let $S$ be an enriched sequent. A tree of enriched sequents $\mathcal{R}$ is a $\system{C}$-reduction of $S$ iff   there is a rule  $\rn{r} \in \system{C}$ such that $\rn{r}(S,P_1,..,P_n)$ obtains and the immediate sub-trees $\rn{R}_i$, with root $P_i$, are as follows: if $P_i$ is an enriched sequent,   it is a $\system{C}$-reduction of $P_i$; the single node $P_i$, otherwise (i.e., if $P_i$ is a constraint).
\end{Definition}

\begin{Example}[Example~\ref{ex:rdvbc:rule} cont'd] \label{ex:rdvbc:reduction}
A reduction in a constraint system (the system $\system{LBI}_\mathcal{B}$) is given in Example~\ref{ex:rdvbc:lbiBreduction}. 
\end{Example}

This concludes the definition of a constraint system and reduction from it. It remains to give conditions defining how it relates to logics.

\subsection{Correctness Conditions} \label{sec:constraint:correctness}

The distinguishing feature of reductions in a constraint system is the constraints. These constraints are understood as correctness conditions in two ways. They are \emph{global}
correctness conditions when they determine that a completed reduction is a valid certificate witnessing that some sequent is the consequence of a logic; this is \emph{soundness} and \emph{completeness}. They are \emph{local} correctness conditions when they determine that each reduction step corresponds to a valid inference (i.e., an admissible inference) in the logic; this is faithfulness and adequacy for some sequent calculus for the logic. Of course, local correctness implies global correctness. In either reading, we regard constraints in using rules as side-conditions on the reduction.

%In particular, we are interested in the constraints in the premisses of the (reductive) application of a constraint rule because we think of them as saying something about the reduction as a whole as opposed to the constraints within an enriched sequents whose scope is only that sequents.

\begin{Definition}[Side-condition]
 Let $\system{C}$ be a constraint system and let $\mathcal{R}$ be a $\system{C}$-reduction. A side-condition of $\mathcal{R}$ is a constraint that is a leaf of $\mathcal{R}$. 
\end{Definition}

The side-conditions are global constraints on the reduction, determining the conditions for which the structure is meaningful. 

\begin{Definition}[Coherent Reduction]
 Let $\system{C}$ be a constraint system, let $\mathcal{R}$ be a $\system{C}$-reduct\-ion, and let $\set{S}$ be the set of side-conditions of $\mathcal{R}$. The set $\set{S}$ is coherent iff there is an interpretation in which all of the side-conditions in $\set{S}$ are  valid; the reduction $\mathcal{R}$ is coherent iff $\set{S}$ is coherent.
\end{Definition}

We may regard coherent reductions as proofs of certain sequents, but this requires a method of reading what sequent of the propositional logic the reduction asserts. 

\begin{Definition}[Ergo] \label{def:ergo}
 An ergo is a map $\nu_I$, parameterized by intended interpretations, from enriched sequents to sequents.
\end{Definition}

Let $\system{C}$ be a constraint system and $\nu$ an ergo.  We write $\Gamma \proves[C]^\nu \Delta$ to denote that there is a coherent $\system{C}$-reduction $\mathcal{R}$ of an enriched sequent $S$ such that $\nu_I(S)=\Gamma \seq \Delta$, where $I$ is an interpretation satisfying all the side-conditions of $\mathcal{R}$. An example of this is the valuation given for RDvBC in Section~\ref{sec:ex:bi} that deletes formulae and bunches labelled by an expression that evaluates to $0$ and keeps those that evaluate to $1$.

\begin{Definition}[Soundness and Completeness of Constraint Systems] \label{def:snc:ergo}
A constraint system $\system{C}$ may have the following relationships to a propositional logic:
 \begin{itemize}
     \item[-] Soundness: If  $\Gamma \proves[C]^\nu \Delta$, then $\Gamma \proves[] \Delta$.
     \item[-] Completeness: If $\Gamma \proves[] \Delta$, then $\Gamma \proves[C]^\nu \Delta$.
 \end{itemize}
\end{Definition}

This defines constraint systems and their relationship to logics. Observe that soundness and completeness is a \emph{global} correctness condition on reductions in the sense that only once the reduction has been completed and one has generated all of the constraints and solved them to find an interpretation does one know whether or not the reduction witnesses the validity of some sequent in the logic. In other words, a partial reduction (i.e., a reduction to which one may still apply rules) does not necessarily contain any proof-theoretic information about the logic.

In contrast, one may consider a \emph{local} correctness conditions in which applying a reduction operator from a constraint system corresponds to using some rules of inference in a sequent calculus for the logic. This is stronger than the global correctness condition as when the reduction is completed, and the constraints are solved, the resulting interpretations that allow one to read a reduction in a constraint system as a proof in a sequent calculus for the logic, and thus as a certificate for the validity of some sequent.

Fix a propositional alphabet $\lang{P}$, an algebra $\mathcal{A}$, an alphabet $\lang{A}$ for that algebra, and a set $\set{I}$ of intended interpretations of $\lang{A}$ in $\mathcal{A}$. Fix a constraint system $\system{C}$ and an ergo $\nu$. The ergo extends to $\system{C}$-reductions by pointwise application to the enriched sequents in the tree and by deleting all the constraints.

%\begin{Definition}[Valuation]
  %   Let $\system{C}$ be a constraint system over $\lang{A}$-enriched sequents.  A valuation is a mapping  $\nu:\set{I} \times (\set{DATA}\times \set{TERM}_\lang{A}) \to \set{DATA}$.
%\end{Definition}

%Valuations extend to lists $\Pi$ of  labelled data $\Delta \cdot e$ and constraints $\kappa$ by deleting the constraints and acting pointwise on the labelled data,
%\[
%\nu_I([\Delta\cdot e,\Pi]):=[\nu_I(\Delta \cdot e), \nu_I(\Pi)] \qquad \nu_I([\kappa,\Pi]):=[\nu_I(\Pi)]
%\]
%They further extend to enriched sequents by acting on each element of the pair separately,
%\[
%\nu_I(\Pi:\Sigma):= \nu_I(\Pi) : \nu_I(\Sigma)
%\]
%In this way, valuations determine ergos.

%\begin{Example}[Example~\ref{ex:rdvbc:lbiBreduction} cont'd] \label{ex:rdvbc:instantiated}
%The in Example~\ref{ex:rdvbc:lbiBreduction} is evaluated in Example~\ref{ex:rdvbc:lbiBinstantiated}.
%\end{Example}

%What propositional logic does a given constraint system capture? Though this question makes sense, the relationship between constraints system and propositional logics is indirect, it factors through a sequent calculus for the logic. Thus it is more appropriate to ask, \emph{what sequent calculus does a constrain system capture}? 

Using this extension, constraint systems are computational devices capturing sequent calculi. For this reason, we do not use the terms \emph{soundness} and \emph{completeness}, but rather use the more computational terms of \emph{faithfulness} and \emph{adequacy}. 

\begin{Definition}[Faithful \& Adequate] \label{def:fna}
 Let $\system{C}$ be a constraint system, let  $\system{L}$ be a sequent calculus, and let $\nu$ be a valuation.
\begin{itemize}
    \item[-] System $\system{C}$ is \emph{faithful} to $\system{L}$ if, for any $\system{C}$-reduction $\mathcal{R}$ and interpretation $I$ satisfying the constraints of $\mathcal{R}$, the application $\nu_I(\mathcal{R})$ is an $\system{L}$-proof.
    \item[-] System $\system{C}$ is \emph{adequate} for $\system{L}$ if, for any $\system{L}$-proof $\mathcal{D}$, there is a $\system{C}$-reduction $\mathcal{R}$ and an interpretation $I$ satisfying the constraints of $\system{R}$ such that $\nu_I(\mathcal{R}) = \mathcal{D}$.
\end{itemize}
\end{Definition}

Intuitively, constraint systems for a logic (more precisely, constraint systems that are faithful and adequate with respect to a sequent calculus for a logic) separate combinatorial and idiosyncratic aspects of that logic. The former refers to how rules manipulate the data in sequents, while the latter refers to the constraints generated by the rules. Note that this gives a \emph{local} correctness condition of reductions from a constraint system as each reductive inference in the constraint system corresponds to some reductive inference in a sequent calculus for the logic.

In the next section (Section~\ref{sec:ex:relationalcalculi}), we provide sufficient conditions for a propositional logic to have a constraint system that evaluates to a sequent calculus for that logic. The conditions are quite encompassing and automatically give soundness and completeness for the sequent calculus for a semantics for the logic. Attempting a total characterization (i.e., precisely defining the properties a logic must satisfy in order for it to have a constraint system and a valuation to a sequent calculus for the logic) is unrealistic. The reason is that propositional logics, constraints systems, algebras, and valuations, have so many degrees of freedom that one could not plainly present their dependencies at once. Instead, one should consider such a classification relative to a fixed structure of propositional logics (e.g., substructural logics with data comprising multisets of formulae), for a fixed algebra (e.g., Boolean algebra), and fixed valuations (i.e., keeping formulae whose label evaluate to $1$ and deleting formulae whose label evaluate to $0$). 

%There are substantial proof-theoretic uses of constraint systems too. For example, Harland and Pym~\cite{af} introduced the RDvBC mechanism, the original example of a constraint system with an evaluation, as a tool for studying proof-search in logic with multiplicative connectives. In Section~\ref{sec:ex:modal} and Section~\ref{sec:ex:il}, we illustrate some other uses. 

Significantly, having constructed a reduction in a constraint system, one must solve the constraints before one knows what `proof' the reduction represents for the target logic. This may be challenging depending on the algebra over which the constraint take place. Nonetheless, with even relatively simple algebras one may have relatively useful constraint systems; for example RDvBC uses Boolean algebra, which admits algorithmic solvers, that means one can outsource the solving of constraints for $\system{LBI}_\mathcal{B}$-reductions. However, the point of constraint systems is not to \emph{do} proof-search but to \emph{study} proof-search. In this way, how the constraints are solved is less important than how they can be interpreted in terms of control during proof-search in the object-logic. 

\section{Example: Relational Calculi} \label{sec:ex:relationalcalculi}

The relational calculi introduced by Negri~\cite{Negri2005} can be viewed as  constraint systems; that is, the \emph{constraint algebra} is provided by a first-order theory capturing an M-tS for a logic, and the labelling action captures satisfaction in that semantics. Traditionally, $x:\phi$ is used in place of $\phi \cdot x$ for relational calculi, and we shall adopted this notation for this section to be consistent with the existing work. The change in notation is a \emph{aide-m\'emoire} that we are working with a particular form of constraint systems.    

This section gives sufficient conditions for a sequent calculus to admit a relational calculus. We further give conditions under which these relational calculi (regarded as constraint systems) are faithful and adequate for a sequent calculus for the logic. We continue the study of the modal logic $K$ in Section~\ref{sec:background:propositional} as a running example.

First, we define what it means for a semantics of a propositional logic to be first-order definable; this is a pre-condition for producing relational calculi that express the semantics. We call the propositional logic we are studying the \emph{object-logic}; and, we call FOL the \emph{meta-logic}. For clarity, we use the convention prefixing \emph{meta}- for structures at the level of the meta-logic where the terminology might otherwise overlap; for example,  \emph{formulae} are syntactic construction at the object level, and \emph{meta-formulae} are syntactic construction in the meta-logic. 

Second, we give a sufficient condition, called \emph{tractability}, for us to take a first-order definition $\Omega$ of a semantics and produce a relational calculus from it. Essentially, the condition amounts to unfolding $\Omega$ within \system{G3c} so that we can suppress all the logical structures from the meta-logic, leaving only a labelled calculus for the propositional logic --- namely, the relational calculus. 

Third, we give a method for transforming tractable definitions into sequent calculi and prove that the result is sound and complete for the semantics. 
 
\subsection{Tractable Propositional Logics} \label{sec:relationalcalculi:tractable}

Relational calculi for a logic work by internalizing a semantics of that logic. In work by Negri~\cite{Negri2005} on relational calculi for normal modal logics, the basic atomic formulae over which the relational calculi operate come in two forms: they are either of the form $(x:\phi)$, in which $x$ is a variable denoting an arbitrary world, $\phi$ is a formula, and $:$ is a pairing symbol intuitively saying that $\phi$ is satisfied at $x$; or, they are of the form $xRy$, in which $x$ and $y$ are variables denoting worlds and $R$ is a relation denoting the accessibility relation of the frame semantics. Therefore, we begin by fusing the language $\lang{P}$ of the object-logic with a first-order language $\lang{F}$, able to express frames for the semantics, into the first-order language we use for the relational calculi. 

\begin{Definition}[Fusion]
    Let $\lang{F}:= \langle \set{R}, \emptyset,\set{K},\set{V} \rangle$ be a first-order alphabet and let $\lang{P} := \langle \set{A}, \set{O}, \set{C} \rangle$. The fusion $\lang{F} \otimes \lang{P}$ is the first-order alphabet $\langle \set{R}\cup\{:\}, \set{O} \cup \set{C}, \set{K}\cup \set{A}, \set{V} \rangle $ 
\end{Definition}

Observe that $\lang{P}$-formulae and $\lang{F}$-terms both becomes terms in $\lang{F}\oplus \lang{P}$, and $:$ is a relation. In particular, the object-logic operators (i.e., connectives, modalities) are function-symbols in the fusion. Further note that $(x:\phi)$ and $(\phi:x)$ are well-formed formulae in the fusion; the former is desirable, and the latter is not. We require a \emph{model theory} $\Omega$ over the fused language such that $:$ is interpreted as satisfaction in the semantics. Relative to such a theory, while well-formed, the meta-formulae $(\phi:x)$ are nonsense. To aid readability, we shall use the convention of writing $\hat{\phi}$ for meta-variables that we intend to be interpreted as object-formulae and $\hat{\Gamma}$ or $\hat{\Delta}$ for meta-variables that we intend to be interpreted as object-data. 

\begin{Definition}[Definition of a Semantics] \label{def:definition}
   Let $\Omega$ be a set of sentences from a fusion $\lang{F} \otimes \lang{P}$ and let $\sem{S}$ be a semantics over $\lang{P}$. The set $\Omega$ defines the semantics $\sem{S}$ iff the following holds: $\Omega, (x : \Gamma) \proves (x : \Delta)$ iff $\Gamma \entails \Delta$.
\end{Definition}

Such theories $\Omega$ may at first appear obscure, but in practice they can be fairly systematically constructed. Intuitively, the abstractions of $\Omega$ are composed of models from the semantics together with an interpretation of the satisfaction relation. Thus, $\Omega$ is typically composed of two theories $\Omega_1$ and $\Omega_2$.  The theory $\Omega_1$ captures frames; for example, in modal logic, if the accessibility relation is transitive, then $\Omega_1$ contains $\forall x,y,z(xRy \with y R z \Rightarrow xRz)$. The theory $\Omega_2$ captures the conditions of the satisfaction relation; for example, if the object-logic contains an additive conjunction $\land$, then $\Omega$ may contain $\forall x, \hat{\phi}, \hat{\psi}((x:\hat{\phi}\land \hat{\psi}) \Rightarrow (x:\hat{\phi})\with (x:\hat{\psi}))$ and $\forall x, \hat{\phi}, \hat{\psi}((x:\hat{\phi})\with (x:\hat{\psi}) \Rightarrow (x:\hat{\phi}\land \hat{\psi}))$. For an illustration of how $\Omega$ can be constructed according to this intuition in even relatively complex settings, see work on the logic of Bunched Implications by Gheorghiu and Pym~\cite{Alex:BI_Semantics}. 

\begin{Example}\label{ex:modal:satisfaction:symbolic}
By the universal closure of $(\Phi \iff \Psi)$ we mean the meta-formulae $\Theta$ and $\Theta'$ in which $\Theta$ is the universal closure of $\Phi \Rightarrow \Psi$ and $\Theta'$ is the universal closure of $\Psi \Rightarrow \Phi$. Consider the semantics $\sem{K}= \langle \set{M}, \sat \rangle$ in Example~\ref{ex:modal:satisfaction}. It is defined by the universal closures of the formulae in Figure~\ref{fig:modal:satisfaction:symbolic}, which merits comparison with Figure~\ref{fig:modal:satisfaction}, together with the universal closure of the following:
\[
\begin{array}{ccc}
(x:\hat{\Gamma}\fatcomma \hat{\Delta}) & \iff & (x:\hat{\Gamma})\with(x:\hat{\Delta}) \\
(x:\hat{\Gamma}\fatsemi \hat{\Delta}) & \iff & (x:\hat{\Gamma})\parr(x:\hat{\Delta})
\end{array}
\]
Every model in $\set{M}$ intuitively gives an abstraction of these formulas since they simply give a formal expression of the clauses defining satisfaction. 

Notably, there is no meta-formula corresponding to atomic satisfaction --- that is, $(w:\at{p})$ --- because it is handled by the structure of meta-sequents. That is, it follows from working with validity directly (i.e., without passing though truth-in-a-model): atomic satisfaction is captured by an atomic tautology, $\Omega, (x : \at{ p}) \seq (x : \at{ p})$. 
\end{Example} 

\begin{figure}[t]
       \hrule
   \vspace{4mm}
     \[
     \begin{array}{lcl}
     (w  : \hat{\phi} \land \hat{\psi}) & \iff &  (w : \hat{\phi})\with (w :\hat{\psi})   \\[1ex]
     (w : \hat{\phi} \lor \hat{\psi}) & \iff &  (w :\hat{\phi})\parr (w:\hat{\psi})  \\[1ex]
     (w : \neg \hat{\phi}) & \iff &  \big((w : \hat{\phi})\Rightarrow \bot\big)   \\[1ex]
    (w : \Box \hat{\phi}) & \iff &  \forall u(wRu \Rightarrow (u : \hat{\phi}))   
     \end{array}
     \]
     \vspace{4mm}
    \hrule
    \caption{Satisfaction for Modal Logic $K$ (Symbolic)}
    \label{fig:modal:satisfaction:symbolic}
\end{figure}

We may use the meta-logic to characterize those propositional logics whose semantics is particularly amenable to analysis; first-order definability is, perhaps, the most general condition we may demand. What are some other properties of $\Omega$ that may be useful? Since we are interested in a \emph{computational} analysis of the semantics, we require that it is finite, among other things. In particular, we restrict the structure of the theory to something amenable to proof-theoretic analysis according to $\system{G3c}$. 

There is literature on  generating proof systems for propositional logics defined axiomatically; see, for example, work by Ciabattoni et al.~\cite{Ciabattoni2008,Ciabattoni2009,CIABATTONI2012}. Within this tradition, Marin et al.~\cite{Marin2022} have used \emph{focusing} in intuitionistic and classical logic, conceived of as a meta-logic, as a general tool to uniformly express an algorithm for turning axioms into rules applicable across different domains. We use a similar method and, therefore, polarize the syntax for the meta-logic. 

Let $\set{MA}$ be the set of meta-atoms. The \emph{positive} meta-formulae $P$ and \emph{negative} meta-formulae $N$ are defined as follows:
\[
\begin{array}{lcl}
P & ::= & A \in \set{MA} \mid \bot \mid P \with P \mid N \Rightarrow P \mid P \parr P \mid  \exists X P  \\
N & ::= & A \in \set{MA} \mid N \with N \mid P \Rightarrow N \mid \forall X N
\end{array}
\]
%A positive or negative formula is said to be \emph{basic} iff it contains no implications; that is, all of its subformulae are of the same polarity. 
This taxonomy arises from behaviour; specifically,using this taxonomy we can define a class of formulae that we can systematically transform them into \emph{synthetic rules} using focusing in $\system{G3c}$. While closely related to the taxonomy used by Marin et al.~\cite{Marin2022} it is not the same as they work over a syntax that has positive and negative connectives.

% \begin{Definition}[Tractable Meta-formula]
%    Let $\set{B}$ be a set of basic meta-formulae. A tractable meta-formula is either a positive or negative meta-formula constructed from $\set{B}$.
% \end{Definition}

\begin{Definition}[Polarity Alternation]
    The number of \emph{polarity alternations} in a polarised formula $\Phi$ is $\pi(\Phi)$ defined as follows:
    \[
    \pi(\Phi):=
    \begin{cases}
        0 & \text{if } \Phi \in \set{MA} \\
        \max\{\pi(\Phi_1), \pi(\Phi_2)\} & \text{if } \Phi = \Phi_1 \circ \Phi_2 \text{ and } \circ \in \{\with, \parr\} \\
        \pi(\Phi_1) & \text{if } \Phi = Q X \Psi \text{ and } Q \in \{\forall, \exists\} \\
        1+\max\{\pi(\Phi_1), \pi(\Phi_2)\} & \text{if } \Phi = \Phi_1\Rightarrow \Phi_2
    \end{cases}
    \]
\end{Definition}

 \begin{Definition}[Tractable Meta-formula]
    A meta-formula $\Phi$ is tractable iff $\Phi$ is negative and $\pi(\Phi) \leq 2$, or $\Phi$ is positive and $\pi(\Phi) \leq 1$.
 
    % A meta-formula $\Phi$ is simply tractable iff $\Phi$ contains at most two polarity alternations. The set of tractable meta-formulae is defined inductively as follows:
    % \begin{itemize}
    %     \item[-] \textsc{Base Case.} If $\Phi$ is simply tractable, then it is tractable.
    %     \item[-] \textsc{Inductive Step.} If $\Phi$ is an implication $A \Rightarrow \Psi$  or $\Psi \Rightarrow A$ in which $A$ is a meta-atom $\Psi$ is a tractable meta-formula, then $\Phi$ is tractable. If $\Phi$ is the universal closure of a tractable meta-formula $\Psi$, then $\Phi$ is tractable.
    % \end{itemize}
    \end{Definition}

The class of \emph{geometric implications} studied by Negri~\cite{negri2003contraction} for the systematic generation of sequent calculus rules from axioms defining propositional logics is a subset of the tractable formulae. A meta-formula $\Theta$ is a geometric implication iff $\Theta$ is the universal closure of a meta-formula of the form $(\Phi_1 \with ... \with \Phi_m) \Rightarrow (\exists Y_1\Psi_1 \parr ... \parr \exists Y_n\Psi_n)$ such that $\Psi_i:= \Psi_1^i \with ... \with \Psi^i_{m_i}$, with the $\Psi_j^i$ meta-atoms for $1 \leq j \leq m_i$ and $1 \leq i \leq n$, and $ \Phi_i$ meta-atoms for $1 \leq i \leq m$.  To see this, 
observe that geometric implications are of the form $N \Rightarrow P$ in which $N$ is the conjunction of atoms and $P$ is the disjunction of (positive) formulae of existentially quantified conjunctions --- that is, formulae of the form $\exists P'$, where $P'$ is a conjunction of atoms. Docherty and Pym~\cite{Docherty2018,Docherty2019} have used this notion of meta-formulae to give a uniform account of proof systems internalizing semantics for the family of bunched logics, with application to separation logics.

The motivation for tractability is to make a certain step in the generation of relational calculi possible, as seen in the proof of Proposition~\ref{lem:calculus:tractable}. 

\begin{Definition}[Tractable Theory, Semantics, Logic] \label{def:tractable:theory}
   A set of meta-formulae $\Omega$ is a tract\-able theory iff $\Omega$ is finite and any $\Phi \in \Omega$ is a negative tractable meta-sentence. A semantics $\sem{S}$ is tractable iff it is defined by a tractable theory $\Omega$. A propositional logic is tractable iff it admits a tractable semantics $\sem{S}$.
\end{Definition}

\begin{Example} \label{ex:modal:satisfaction:tractable}
The semantics for modal logic in Example~\ref{ex:modal:satisfaction} is tractable, as witnessed by the tractable definition in Example~\ref{ex:modal:satisfaction:symbolic}. 
\end{Example}

 It remains to give an algorithm that generates a relational calculus given a tractable definition and to prove correctness of that algorithm. Fix a semantics $\sem{S} := \langle \set{M}, \sat \rangle$ with a tractable definition $\Omega$. Recall that $\Gamma \entails \Delta$ obtains iff  $\Omega, (x:\Gamma) \proves (x:\Delta)$ obtains. The relational calculus we generate is a meta-sequent calculus $\system{R}$ for the meta-logic expressive enough to capture all instances $\Omega, (x:\Gamma) \proves (x:\Delta)$ but sufficiently restricted such that all the meta-connectives and quantifiers may be suppressed. 

 \subsection{Generating Relational Calculi} \label{sec:relationalcalculi:generating}

 By \emph{generic hereditary reduction} on a meta-formula $\Phi$ we mean the indefinite use of reduction operators from $\system{G3c}$ on $\Phi$ and the generated sub-formulae, until they are meta-atoms, beginning with a meta-sequent $\Phi,\Pi \metaseq \Sigma$, with generic $\Pi$ and $\Sigma$. For example, the following is a generic hereditary reduction for $(A \with B) \parr (C \with D)$  with $A$, $B$, $C$, and $D$ as meta-atoms:
\[
    \infer[\Uparrow \rrn \parr]{(A \with B) \parr (C \with D), \Pi \metaseq \Sigma}{
        \infer[\Uparrow \rrn \with]{(A \with B), \Pi \metaseq \Sigma}{A, B, \Pi \metaseq \Sigma}
        &
        \infer[\Uparrow \rrn \with]{(C \with D), \Pi \metaseq \Sigma}{C, D, \Pi \metaseq \Sigma}
    }
\]
Such reductions are collapsed into \emph{synthetic} rules, which is the rule-relation taking the putative conclusion to the premisses. The above instance collapses to the following:
\[
\infer{(A \with B) \parr (C \with D), \Pi \metaseq \Sigma}{A, B, \Pi \metaseq \Sigma & C, D, \Pi \metaseq \Sigma}
\]
The quantifier rules have side-conditions in order to be applicable, and we assert these conditions in the synthetic rule. For example, when using $\lrn \forall$ when doing generic hereditary reduction on $\forall X \Phi$, we require that the term $T$ for which the variable $X$ is substituted in $\Phi$ is already present in the meta-sequent; for example let $\Phi:= (A(X) \with B(X)) \parr (C(X) \with D(X))$, we have the following synthetic rule for $\forall X \Phi$ with the side condition that $T$ occurs in either $\Pi$ or $\Sigma$:
\[
\infer{\forall X(A(X) \with B(X) \parr (C(X) \with D(X)), \Pi \metaseq \Sigma}{A(T), B(T), \Pi \metaseq \Sigma & C(T), D(T), \Pi \metaseq \Sigma}
\]

\begin{Definition}[Sequent Calculus for a Tractable Theory] \label{def:calculus:tractable}
    Let $\Omega$ be a tractable theory. The sequent calculus $\system{G3c}(\Omega)$ is composed of $\ax, \bot, \lrn \cont$, $\rrn \cont$, and the synthetic rules for the meta-formulae in $\Omega$.
\end{Definition}

The tractability condition is designed such that the following holds:

\begin{Proposition}\label{lem:calculus:tractable}
    Let $\Omega$ be a tractable definition and let $\Pi$ and $\Sigma$ be multisets of meta-atoms,
    \[
     \Omega, \Pi \proves[G3c] \Sigma \qquad \mbox{iff} \qquad \Omega, \Pi \proves[\system{G3c}(\Omega)] \Sigma
     \]
\end{Proposition}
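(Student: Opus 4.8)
The plan is to prove the equivalence by two separate implications, each of which reduces to showing that a single synthetic rule can be simulated by, or simulates, a block of $\system{G3c}$-inferences. The direction $\Omega, \Pi \proves[\system{G3c}(\Omega)] \Sigma \Rightarrow \Omega, \Pi \proves[G3c] \Sigma$ is the easy one: each synthetic rule of $\system{G3c}(\Omega)$ is by construction the collapse of a generic hereditary reduction in $\system{G3c}$ on some $\Phi \in \Omega$, and $\ax, \bot, \lrn\cont, \rrn\cont$ are already $\system{G3c}$-rules (contraction being admissible, hence available). So one replaces every application of a synthetic rule in a $\system{G3c}(\Omega)$-proof by the corresponding $\system{G3c}$-derivation fragment; since $\Omega$ sits on the left throughout, the principal formula $\Phi$ is available to be reduced (after a contraction to keep a copy of $\Phi$, exactly as in $\lrn\forall$), and the leaves of the fragment are the synthetic rule's premisses. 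Splicing these fragments together yields a $\system{G3c}$-proof of $\Omega, \Pi \seq \Sigma$.

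For the converse, $\Omega, \Pi \proves[G3c] \Sigma \Rightarrow \Omega, \Pi \proves[\system{G3c}(\Omega)] \Sigma$, I would argue by induction on the height of the $\system{G3c}$-proof, using invertibility of the $\system{G3c}$-rules (cited from Troelstra--Schwichtenberg in the excerpt). The key observation is that since $\Pi, \Sigma$ consist only of meta-atoms, any non-atomic principal formula in the $\system{G3c}$-proof must come from $\Omega$, i.e. must be (a sub-formula produced while reducing) some $\Phi \in \Omega$. The tractability hypothesis on $\Phi$ — negative with $\pi(\Phi) \le 2$, so that its generic hereditary reduction bottoms out at meta-atoms with at most a controlled alternation pattern (a negative "outer" phase feeding positive premisses) — is precisely what guarantees that this reduction is \emph{confluent} in the relevant sense: whatever order $\system{G3c}$ chooses to decompose $\Phi$ and the positive sub-formulae it spawns, one can, using height-preserving invertibility, permute the inferences so that the whole block of decompositions of that one copy of $\Phi$ appears contiguously at the bottom, with generic side-contexts $\Pi, \Sigma$ of meta-atoms. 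That contiguous block is exactly an instance of the synthetic rule for $\Phi$ (the $\forall$ side-conditions are met because the witnessing terms, being the only terms around, already occur in $\Pi$ or $\Sigma$). Peeling off this block and appealing to the induction hypothesis on the resulting sub-proofs — whose end-sequents again have meta-atomic contexts apart from $\Omega$ — gives the $\system{G3c}(\Omega)$-proof.

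The main obstacle is this permutation/confluence step in the converse direction: showing that an arbitrary interleaving in a $\system{G3c}$-proof of the decomposition of a copy of $\Phi$ with other inferences (including decompositions of other copies of formulae from $\Omega$ and contractions) can be rearranged into the synthetic-rule shape without increasing proof height or getting stuck. This is where tractability does real work — unbounded polarity alternation would allow a positive premiss to re-engage a negative sub-formula whose reduction depends on context that is only generated later, breaking the "one copy of $\Phi$, generic context" invariant. I would handle it by a careful sub-induction on the structure of $\Phi$ following the positive/negative grammar, using height-preserving invertibility to push each $\system{G3c}$-reduction of a sub-formula of $\Phi$ downward past unrelated inferences, and using admissibility of contraction to duplicate or merge copies of $\Phi$ as needed so that exactly one copy is consumed per synthetic-rule application. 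The remaining bookkeeping — that $\ax$ and $\bot$ leaves translate directly, and that the atomic tautology $\Omega, (x:\at p) \seq (x:\at p)$ is an $\ax$ in both systems — is routine.
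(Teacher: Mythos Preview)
Your proposal is correct in outline and lands on the same target as the paper, but the hard direction is organised differently. For $\system{G3c}(\Omega) \Rightarrow \system{G3c}$ you and the paper agree: synthetic rules are admissible in $\system{G3c}$, so unfold them. For $\system{G3c} \Rightarrow \system{G3c}(\Omega)$, the paper does not do the permutation argument by hand. Instead it invokes focusing completeness for $\system{G3c}$ (citing Liang--Miller and Marin et al.) to obtain, without loss of generality, a \emph{focused} proof: one already organised into maximal phases, each phase being a contraction on some $\Phi \in \Omega$ followed by a hereditary reduction of that copy of $\Phi$ down to meta-atoms. Tractability then guarantees that each such phase is exactly an instance of the synthetic rule for $\Phi$, and replacing phases by synthetic rules yields the $\system{G3c}(\Omega)$-proof directly.

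What you propose --- inducting on height and using invertibility to permute the decomposition of a single copy of $\Phi$ into a contiguous block --- is essentially a bespoke proof of (the fragment of) focusing completeness needed here. It can be made to work, but the ``main obstacle'' you flag is real and heavier than you suggest: height-preserving invertibility alone does not give you the permutations you need for the non-invertible rules ($\lrn\forall$, $\rrn\exists$, and the branching in $\lrn\Rightarrow$), and the bookkeeping about which terms are available when (your parenthetical about $\forall$ side-conditions) interacts with eigenvariable freshness in $\rrn\forall$/$\lrn\exists$. The paper sidesteps all of this by citing the focusing result off the shelf. Your route is more self-contained but substantially longer; the paper's is a two-line appeal to a known normal form.
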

\begin{proof}
    Assume $\Omega, \Pi \proves[G3c] \Sigma$. Without loss of generality (see, for example, Liang and Miller~\cite{LIANG20094747} and Marin~\cite{Marin2022}), there is a focused $\system{G3c}+\rrn{\cont}+\lrn{\cont}$-proof $\mathcal{D}$ of $\Omega, \Pi \metaseq \Sigma$. We can assume that $\mathcal{D}$ is \emph{focused} up to possibly using instance of $\lrn{c}$ or $\rrn{c}$. That is, $\mathcal{D}$ is structured by sections of alternating phases of the following kind:
    \begin{itemize}
        \item[-] an instance of $\lrn c$ or $\rrn c$
        \item[-] hereditary reduction on positive meta-formulae on the right and negative meta-formulae on the left
        \item[-] eager reduction on negative meta-formulae on the right and positive meta-formulae on the left.
    \end{itemize}
    Since $\Pi$ and $\Sigma$ are composed of meta-atoms and $\Omega$ is composed of negative meta-formulae, $\mathcal{D}$ begins by a contraction and then hereditary reducing on some $\Phi \in \Omega$. Since $\Phi$ is tractable, this section in $\mathcal{D}$ may be replaced by the synthetic rule for $\Phi$. Doing this to all the phases in $\mathcal{D}$ yields a tree of sequents $\mathcal{D}'$ which is a $\Omega, \Pi \proves[\system{G3c}(\Omega)] \Sigma$.

    Assume $\Omega, \Pi \proves[\system{G3c}(\Omega)] \Sigma$. Since all the rules in $\system{G3c}(\Omega)$ are admissible in $\system{G3c}$, we immediately have $\Omega, \Pi \proves[\system{G3c}] \Sigma$. 
\end{proof}

\begin{Example} \label{ex:modal:calculus}
    Consider the tractable theory $\Omega_{\sem{K}}$ in Example~\ref{ex:modal:satisfaction:symbolic}. The sequent calculus $\system{G3c}(\Omega_{\sem{K}})$ contains, among other things, the following rules corresponding to the clause for $\land$ in Figure~\ref{fig:modal:satisfaction:symbolic} in which $w$, $\phi$, and $\psi$ already occur in $\Omega$, $\Pi$, or $\Sigma$:
\[
        \infer{\Omega, \Pi \metaseq \Sigma}{\Omega, \Pi \metaseq \Sigma, (w:\phi \land \psi) & \Omega, (w:\phi), (w:\psi), \Pi \metaseq \Sigma}
\]
\[
    \infer{\Omega, \Pi \metaseq \Sigma}{\Omega, \Pi, (w:\phi \land \psi) \metaseq \Sigma & \Omega, \Pi \metaseq \Sigma, (w:\phi) & \Omega, \Pi \metaseq \Sigma, (w:\psi)}
    \]
In practice, one does not use the rules in this format. Rather, one would only apply the rules if one already knew that the left-branch would terminate; that is, one uses the following:
\[
\infer{\Omega, (w:\phi \land \psi), \Pi \metaseq \Sigma}{\Omega, (w:\phi), (w:\psi), (w:\phi \land \psi),  \Pi \metaseq \Sigma } 
\]
\[
\infer{\Omega, \Pi \metaseq \Sigma,  (w:\phi \land \psi)}{\Omega, \Pi \metaseq \Sigma,  (w:\phi \land \psi), (w:\phi) &  \Omega, \Pi \metaseq \Sigma, (w:\phi \land \psi), (w:\psi) }
\]
This simplification can be made systematically according to the shape of the meta-formula generating the rules; it corresponds to \emph{forward-chaining} and \emph{back-chaining} in the proof-theoretic analysis of the meta-formula --- see, for example, Marin et al.\cite{Marin2022}.

One desires a systematic account of the transformation of rules of arbitrary shape into rules of other (more desirable) shape. This remains to be considered in the context of relational calculi and demands further analysis on the structure of $\Omega$. Some results of such transformations for arbitrary sequent calculi have been provided by Indrzejczak~\cite{indrzejczak2021sequents}.
\end{Example}

The calculus $\system{G3c}(\Omega)$ is a restriction of $\system{G3c}$ precisely encapsulating the proof-theoretic behaviours of the meta-formulae in $\Omega$. It remains to suppress the logical constants of the meta-logic entirely, and thereby yield a relational calculus expressed as a labelled sequent calculus for the propositional logic. 

\begin{Definition}[Relational Calculus for a Tractable Theory] \label{def:relationalcalculus}
    Let $\Omega$ be a tractable theory. The relational calculus for $\Omega$ is the sequent calculus $\system{R}(\Omega)$ that results from $\system{G3c}(\Omega)$ by suppressing $\Omega$.
\end{Definition}

\begin{Theorem}[Soundness \& Completeness] \label{thm:snc:relationalcalculi}
    Let $\sem{S}$ be a tractable semantics and let $\Omega$ be a tractable definition for $\sem{S}$.
    \[
    \Gamma \entails_{\sem{S}} \Delta \qquad \mbox{iff} \qquad  (x:\Gamma) \proves[R(\Omega)] (x:\Delta)
    \]
\end{Theorem}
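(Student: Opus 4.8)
The plan is to read the biconditional left-to-right through the meta-logic, chaining three facts already available. First, since $\Omega$ is a tractable \emph{definition} for $\sem{S}$, Definition~\ref{def:definition} gives $\Gamma \entails_{\sem{S}} \Delta$ iff $\Omega, (x:\Gamma) \proves[G3c] (x:\Delta)$ (equivalently $\Omega, (x:\Gamma) \metaconsequence (x:\Delta)$, by Proposition~\ref{lem:snc:G3c}). Next, observe that $(x:\Gamma)$ and $(x:\Delta)$ are each a \emph{single} meta-atom: in the fusion the data-constructors of $\lang{P}$ are function-symbols, so $\Gamma$ and $\Delta$ are terms and $:$ is a binary relation. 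Since moreover $\Omega$ is a tractable theory (Definition~\ref{def:tractable:theory}), Proposition~\ref{lem:calculus:tractable} applies with $\Pi := (x:\Gamma)$ and $\Sigma := (x:\Delta)$ and yields $\Omega, (x:\Gamma) \proves[G3c] (x:\Delta)$ iff $\Omega, (x:\Gamma) \proves[\system{G3c}(\Omega)] (x:\Delta)$. It then remains only to show that $\Omega$ may be suppressed:
\[
\Omega, (x:\Gamma) \proves[\system{G3c}(\Omega)] (x:\Delta) \qquad \mbox{iff} \qquad (x:\Gamma) \proves[R(\Omega)] (x:\Delta),
\]
after which composing the equivalences gives the theorem.

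The suppression step is a conservativity argument, which I would carry out by exhibiting mutually inverse translations of proofs. In one direction, given a $\system{G3c}(\Omega)$-proof of $\Omega, (x:\Gamma) \seq (x:\Delta)$, delete every formula of $\Omega$ from the antecedent of every sequent. By Definition~\ref{def:calculus:tractable} the rules of $\system{G3c}(\Omega)$ are $\ax$, $\bot$, $\lrn{\cont}$, $\rrn{\cont}$, and the synthetic rules for the $\Phi \in \Omega$; each survives the deletion: $\ax$- and $\bot$-instances remain such, contraction steps ($\lrn{\cont}$, $\rrn{\cont}$) on formulae of $\Omega$ degenerate to trivial steps (spliced out) while contractions on other meta-atoms remain, and a synthetic rule for $\Phi \in \Omega$ becomes exactly the corresponding rule of $\system{R}(\Omega)$ (this is the content of Definition~\ref{def:relationalcalculus}). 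That this is legitimate is guaranteed by the proof of Proposition~\ref{lem:calculus:tractable}: $\Omega$ is only ever active through its own synthetic rules, whose premisses mention no formula of $\Omega$ beyond the principal one, and --- since the theories $\Omega$ defining semantics are built from universally quantified frame- and satisfaction-schemata carrying no parameters of their own --- every instantiation term used in a synthetic-rule application already occurs in the labelled part of the sequent, so the variable-occurrence side-conditions inherited from $\lrn\forall$/$\rrn\exists$ are unaffected. Conversely, given an $\system{R}(\Omega)$-proof, re-insert $\Omega$ into every antecedent (a weakening): leaves stay $\ax$- or $\bot$-instances, the variable-occurrence side-conditions still hold since we have only added material, and --- $\Omega$ being finite, hence mentioning only finitely many parameters --- the eigenvariable conditions inherited from $\rrn\forall$/$\lrn\exists$ can always be met after first renaming the proof's eigenvariables apart from those of $\Omega$. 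This produces a $\system{G3c}(\Omega)$-proof of $\Omega, (x:\Gamma) \seq (x:\Delta)$, completing the equivalence.

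Almost all the mathematical content is already packaged in the cited results --- in particular the focusing/synthetic-rule collapse of Proposition~\ref{lem:calculus:tractable}, which is where tractability (the bound on polarity alternations) does its work --- so the expected obstacle is confined to the suppression step, and within it to the bookkeeping of the fixed theory $\Omega$ against the side-conditions of the synthetic rules. Getting this right turns on two features of Definition~\ref{def:tractable:theory}: $\Omega$ consists of \emph{sentences} (schemata with no parameters), so suppressing it removes no instantiation term that a $\lrn\forall$/$\rrn\exists$ step relied on; and $\Omega$ is \emph{finite}, so on re-insertion one can avoid eigenvariable clashes. A secondary point is merely to fix the reading of ``suppressing $\Omega$'' in Definition~\ref{def:relationalcalculus} as deleting $\Omega$ from sequents while retaining the synthetic rules it generated --- exactly what makes the two translations well-defined and mutually inverse. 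Finally, the base case of atomic satisfaction (there is deliberately no clause for $(w:\at{p})$ in $\Omega$) needs nothing special: it corresponds to $\ax$, which remains $\ax$ after suppression.
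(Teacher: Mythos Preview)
Your proposal is correct and follows essentially the same route as the paper: chain Definition~\ref{def:definition}, Proposition~\ref{lem:snc:G3c}, and Proposition~\ref{lem:calculus:tractable}, then close with the suppression step by deleting or reinserting $\Omega$ in every sequent of a proof. The paper's own argument is terser on the suppression step --- it simply invokes Definition~\ref{def:relationalcalculus} for both directions --- whereas you take additional care over the side-conditions inherited from quantifier rules (term-occurrence and eigenvariable freshness); this extra bookkeeping is sound and arguably fills a gap the paper leaves implicit, but it does not constitute a different approach.
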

\begin{proof}
    We have the following:
    \[
    \begin{array}{cclr}
    \Gamma \entails_{\sem{S}} \Delta & \qquad \mbox{iff} \qquad & \Omega, (x:\Gamma) \proves (x:\Delta) & \qquad \mbox{(Definition~\ref{def:tractable:theory})} \\
    & \qquad \mbox{iff} \qquad &  \Omega, (x:\Gamma) \proves[G3c] (x:\Delta) & \qquad \mbox{(Proposition~\ref{lem:snc:G3c})} \\
     & \qquad \mbox{iff} \qquad &  \Omega, (x:\Gamma) \proves[G3c(\Omega)] (x:\Delta) & \qquad \mbox{(Proposition~\ref{lem:calculus:tractable})} 
    \end{array}
    \]
    It remains to show that $\Omega, (x:\Gamma) \proves[G3c(\Omega)] (x:\Delta)$ iff $(x:\Gamma) \proves[R(\Omega)] (x:\Delta)$.

    Let $\mathcal{D}$ be a $\system{G3c}(\Omega)$-proof of  $\Omega, (x:\Gamma) \metaseq (x:\Delta)$, and let $\mathcal{D}'$ be the result of removing $\Omega$ from every meta-sequent in $\mathcal{D}$. By Definition~\ref{def:relationalcalculus}, we have that $\mathcal{D}'$ is a $\system{R}(\Omega)$-proof of $(x:\Gamma) \metaseq (x:\Delta)$. Thus, $\Omega, (x:\Gamma) \proves[G3c(\Omega)] (x:\Delta)$ implies $(x:\Gamma) \proves[R(\Omega)] (x:\Delta)$.

    Let $\mathcal{D}$ be a $\system{R}(\Omega)$-proof of $(x:\Gamma) \metaseq (x:\Delta)$, and let $\mathcal{D}'$ be the result of putting $\Omega$ in every meta-sequent in $\mathcal{D}$. By Definition~\ref{def:relationalcalculus}, we have that $\mathcal{D}'$ is a $\system{G3c}(\Omega)$-proof of $\Omega,(x:\Gamma) \metaseq (x:\Delta)$. Thus, $(x:\Gamma) \proves[R(\Omega)] (x:\Delta)$ implies $\Omega, (x:\Gamma) \proves[G3c(\Omega)] (x:\Delta)$. 
\end{proof}

\begin{Example} \label{ex:RK}
    The sequent calculus in Example~\ref{ex:modal:calculus} becomes a relational calculus $\system{R}(\Omega_\sem{K})$ by suppressing $\Omega$ in the rules; for example,
    \[
        \infer{\Omega, \Pi \metaseq \Sigma}{\Omega, \Pi \metaseq \Sigma, (w:\phi \land \psi) & \Omega, (w:\phi), (w:\psi), \Pi \metaseq \Sigma}
        \]
        becomes
        \[
         \infer{\Pi \metaseq \Sigma}{\Pi \metaseq \Sigma, (w:\phi \land \psi) & (w:\phi), (w:\psi), \Pi \metaseq \Sigma}
    \]
    Abbreviating $\neg \Box \neg \phi$ by $\Diamond \phi$ and doing some proof-theoretic analysis on $\system{R}(\Omega_\sem{K})$, we have the simplified system $\system{RK}$ in Figure~\ref{fig:R} --- $\Phi$ denotes a meta-formula, $\Pi$ and $\Sigma$ denote multiset of meta-formulae, $x$ and $y$ denote world-variables, $\Delta$ denotes object-logic data, $\phi$ and $\psi$ denote object-logic formulae. This is, essentially, the relational calculus for $K$ introduced by Negri~\cite{Negri2005}.
\end{Example}

\begin{figure}[t]
\hrule
   \vspace{4mm}
	\[
	\begin{array}{cc}
		\infer[\ax]{\Phi, \Pi \seq \Sigma, \Phi}{}
		&
		\infer[\rrn \bot]{\bot, \Pi \seq \Sigma}{} \\[1.5ex]
		\infer[\lrn \land]{(x:\phi\land \psi), \Pi \seq \Sigma}{(x:\phi), (x:\psi), \Pi \seq \Sigma} 
		&
			\infer[\rrn \land]{\Pi \seq \Sigma, (x:\phi \land \psi)}{\Pi \seq \Sigma, (x:\phi) & \Pi \seq \Sigma, (x:\psi)}
		\\[1.5ex]
		\infer[\lrn \lor]{(x:\phi\lor \psi), \Pi \seq \Sigma}{(x:\phi), \Pi \seq \Sigma & (x:\psi), \Pi \seq \Sigma}
		&
		\infer[\rrn \lor]{\Pi \seq \Sigma, (x:\phi \lor \psi)}{\Pi \seq \Sigma, (x:\phi), (x:\psi)}\\[1.5ex]
		\infer[\lrn \Box]{(x:\Box \phi), xRy, \Pi \seq \Sigma}{ (y:\phi), (x:\Box \phi), xRy, \Pi \seq \Sigma}
		&
		\infer[\rrn \Box]{\Pi \seq \Sigma, (x:\Box \phi)}{xRy, \Pi \seq \Sigma, (y:\phi)} \\[1.5ex]
		\infer[\lrn \Diamond]{(x:\Diamond \phi), \Pi \seq \Sigma}{xRy, (y:\phi), \Pi \seq \Sigma} 
		&
		\infer[\rrn \diamond]{xRy, \Pi \seq \Sigma, (x:\Diamond \phi)}{xRy, \Pi \seq \Sigma, (x:\Diamond \phi), (y:\phi)} 
		\\[1.5ex]
		\infer[\lrn \bot]{(x:\bot),\Pi \seq \Sigma}{\bot, \Pi \seq \Sigma}
		&
		\infer[\rrn \bot]{\Pi \seq \Sigma, (x:\bot)}{\bot, \Pi \seq \Sigma, \bot}
  \\[1.5ex]
\infer[\lrn \fatcomma]{(x:\Gamma \fatcomma \Gamma'), \Pi \seq \Sigma}{(x:\Gamma), (x:\Gamma'), \Pi \seq \Sigma} 
&
\infer[\rrn \fatsemi]{\Pi \seq \Sigma,(x:\Delta \fatsemi \Delta')}{\Pi \seq \Sigma,(x:\Delta),(x:\Delta')} 
    \end{array}
    \]
    \vspace{4mm}
\hrule
    \caption{Relational Calculus \system{RK}}
    \label{fig:R}
\end{figure}

While we have effectively transformed (tractable) semantics into relational calculi, giving a general, uniform, and systematic proof theory to an ample space of logics, significant analysis remains to be done. In Example~\ref{ex:RK}, we showed that under relatively mild conditions, one expects the relational calculus to have a particularly good shape. This begs for further characterization of the definitions of semantics and what properties one may expect the resulting relational calculus to have; the beginnings of such an analysis are given below Definition~\ref{def:definition} in which we require $\Omega$ to contain a first-order definition of frames together with an inductive definition of the semantics. 

We have presented a general account of relational calculi, but certain specific families of logics ought to be studied in particular. For example, Negri~\cite{Negri2005} demonstrated that relational calculi for modal logics are particularly simple. An adjacent class is the family of hybrid logics --- see, for example, Blackburn et al.~\cite{Blackburn2001,blackburn1995hybrid}, Areces and ten Cate~\cite{Areces}, Br\"auner~\cite{brauner2001hybrid}, and Indrzejczak~\cite{indrzejczak2011natural}. Indeed, one may regard the meta-logics for propositional logics (i.e., the fused language) as hybrid logics --- see, for example, Blackburn~\cite{blackburn2000}.

\subsection{Faithfulness \& Adequacy} \label{sec:relationalcalculi:fna}

%A central motivation for studying constraint systems is not only that they supply a uniform proof theory, but that they may be used a tools for reasoning about established proof theory. For example, Harland and Pym~\cite{Harland1997,Pym2001dis} introduced RDvBC (see Section~\ref{sec:ex:bi}) as a technology for studying the context-management problem in proof-search for logics with multiplicative connectives. Therefore, it is not only soundness and completeness of a relational calculus with respect to a semantic characterization of a logic that is interesting, but also faithfulness and adequacy with respect to a proof-theoretic characterization of that same logic. Moreover, since Theorem~\ref{thm:snc:relationalcalculi} guarantees soundness and completeness with respect to a semantics, faithfulness and adequacy means that the sequent calculus to which one instantiates is sound and complete with respect to the semantics; significantly, this approach to soundness and completeness is quite different from the standard term-model approach, and has the advantage of bypassing truth-in-a-model (i.e., satisfaction) directly as the meta-sequents in relational calculi make claims about entailment directly. A detailed analysis in the case of BI has previously been studied by the authors~\cite{semanticalanalysisofBI} previously. 

In this section, we give sufficient conditions for faithfulness and adequacy of a relational calculus with respect to a sequent calculus. More precisely, we give conditions under which one may transform a relational calculus into a sequent calculus for the object-logic. The result is immediate proof of soundness and completeness for the sequent calculus concerning the semantics; significantly, it bypasses term- or counter-model construction. This idea has already been implemented for the logic of Bunched Implications by Gheorghiu and Pym~\cite{Alex:BI_Semantics}.

While the work of the preceding section generates a relational calculus, one may require some proof theory to yield a relational calculus that meets the conditions in this section faithfulness and adequacy. Likewise, one may require proof theory on the generated sequent calculus to yield a sequent calculus one recognizes as sound and complete concerning a logic of interest. We do not consider these problems here, but they are addressed explicitly for BI in the previous work by the authors.

Our objective is to systematically transform (co-)inferences in the relational calculus into (co-)inferences of the propositional logic. Regarded as constraint systems, relational calculi do not have any side-conditions on inferences; instead, all of the constraints are carried within sequents. Thus we do not need to worry about assignments and aim only to develop a valuation $\nu$. We shall define $\nu$ by its action on sequents and extend it to reductions like in Section~\ref{sec:constraint:correctness}.

Fix a propositional logic $\proves$ and relational calculus $\system{R}$. We assume the propositional logic has data-constructors $\circ$ and $\bullet$ such that
\[
\mbox{$\Gamma \circ \Gamma' \proves \Delta$ \quad iff \quad  $(w:\Gamma)\with (w:\Gamma') \proves[R] (w :\Delta)$}
\]
and
\[
\mbox{$\Gamma \proves \Delta \bullet \Delta'$ \quad iff \quad  $(w:\Gamma) \proves[R] (w :\Delta) \parr (w :\Delta')$}
\]
This means that the weakening, contraction, and exchange structural rules are admissible for $\circ$ and $\bullet$ on the left and right, respectively. In particular, these data-constructors behave like classical conjunction and disjunction, respectively.

\begin{Example}
The logic with relational calculus $\system{RK}$ satisfies the data-constructor condition --- specifically, $\fatcomma$ is conjunctive and $\fatsemi$ is disjunctive. 
\end{Example}

A list of meta-formulae is \emph{monomundic} iff it only contains one world-variable (but possibly many occurrences of that world-variable; we write $\Pi^w$ or $\Sigma^w$ to denote monomundic lists contain the world-variable $w$.  A monomundic list is \emph{basic} iff it only contains meta-atoms of the form $(w:\Gamma)$, which is denoted $\bar{\Pi}^w$ or $\bar{\Sigma}^w$.

\begin{Definition}[Basic Validity Sequent]
   A basic validity sequent (BVS) is a pair of basic monomundic lists, $\bar{\Pi}^w \seq \bar{\Sigma}^w$.
\end{Definition}

\begin{Definition}[Basic Rule]
A rule in a relational calculus is basic iff it is a rule over BVSs --- that is, it has the following form:
\[
\infer[]{\bar{\Pi}^w \metaseq \Sigma^w}{\bar{\Pi}_1^{w_1} \metaseq \Sigma_1^{w_1} & ... & \bar{\Pi}_n^{w_n} \metaseq \Sigma_n^{w_n}}
\]
\end{Definition}
\begin{Definition}[Basic Relational Calculus]
 A relational calculus $\rn{r}$ is basic iff it is composed of basic rules.
\end{Definition}

Using the data-structures $\bullet$ and $\circ$, a BVS intuitively corresponds to a sequent in the propositional logic. Define $\lfloor - \rfloor_\circ$ and $\lfloor - \rfloor_\bullet$ on basic monomundic lists as follows:
\[
\lfloor (w:\Gamma_1),..., (w:\Gamma_m) \rfloor_\circ := \Gamma_1 \circ ... \circ \Gamma_m \qquad  \lfloor (w:\Delta_1),..., (w:\Delta_n)\rfloor_\bullet := \Delta_1 \bullet ... \bullet \Delta_n
\]
We can define $\nu$ on BVSs by this encoding,
\[
\nu(\bar{\Pi}^w \metaseq \bar{\Sigma}^w):= \lfloor \bar{\Pi}^w \rfloor \seq \lfloor \bar{\Sigma}^w \rfloor
\]
The significance is that whatever inference is made in the semantics using BVSs immediately yields an inference it terms of propositional sequents. 

Let $\rn{r}$ be a basic rule, its propositional encoding $\nu(\rn{r})$ is the following:
\[
\infer[]{\nu(\bar{\Pi}^w \metaseq \Sigma^w)}{\nu(\bar{\Pi}_1^{w_1} \metaseq \Sigma_1^{w_1}) & ... & \nu(\bar{\Pi}_n^{w_n} \metaseq \Sigma_n^{w_n})}
\]
This extends to basic relational calculi pointwise,
\[
\nu(\system{R}) := \{ \nu(\rn{r}) \mid \rn{r} \in \system{R}\}
\]

Despite their restrictive shape, basic rules are quite typical. For example, if the body of a clause is composed of only conjunctions and disjunctions of assertions, the rules generated by the algorithm presented above will be basic. Sets of basic rules can sometimes replace more complex rules in relational calculi to yield a basic relational calculus from a non-basic relational calculus --- see Section~\ref{sec:ex:ipl} for an example.

We are thus in a situation where the rules of a reduction system intuitively correspond to the rules of a sequent calculus. The formal statement of this is below.

\begin{Theorem}\label{thm:faq:propositionalencoding}
A basic relational calculus $\system{R}$ is faithful and adequate with respect to its propositional encoding $\nu(\system{R})$. 
\end{Theorem}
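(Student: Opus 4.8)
The plan is to prove the two halves of the statement—faithfulness and adequacy in the sense of Definition~\ref{def:fna}—by mutually inverse structural inductions, exploiting the fact that the propositional encoding was set up rule-by-rule so that $\nu$ commutes with rule application. First I would record a simplification that removes the interpretation from the picture: every basic rule has only basic validity sequents as premisses and no constraints, so an $\system{R}$-reduction has no side-condition leaves, the set of constraints of any $\system{R}$-reduction $\mathcal{R}$ is empty, every interpretation $I$ satisfies it vacuously, and the ergo $\nu_I$ is—uniformly in $I$—the structural valuation $\nu$ built from $\lfloor-\rfloor_\circ$ and $\lfloor-\rfloor_\bullet$ and extended to trees by pointwise application. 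So it suffices to show (a) $\nu(\mathcal{R})$ is a $\nu(\system{R})$-proof whenever $\mathcal{R}$ is an $\system{R}$-reduction, and (b) every $\nu(\system{R})$-proof equals $\nu(\mathcal{R})$ for some $\system{R}$-reduction $\mathcal{R}$.

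For (a) I would induct on $\mathcal{R}$. Since there are no constraint leaves, each leaf of a finite $\system{R}$-reduction is an instance of a zero-premiss basic rule, and $\nu$ sends such an instance to an instance of the corresponding axiom of $\nu(\system{R})$. At an inference with conclusion $C$ by a basic rule $\rn{r}$ from premisses $P_1,\dots,P_n$, the crux is a commutation lemma: because $\lfloor-\rfloor_\circ$ and $\lfloor-\rfloor_\bullet$ are defined structurally and $\circ,\bullet$ behave like classical conjunction and disjunction, instantiating the schema of $\rn{r}$ and then translating agrees with translating the schema and then instantiating the corresponding $\circ$- and $\bullet$-combinations; hence $\nu(C)$ follows from $\nu(P_1),\dots,\nu(P_n)$ by an instance of $\nu(\rn{r})\in\nu(\system{R})$, and the induction hypothesis supplies $\nu(\system{R})$-proofs of the $\nu(P_i)$. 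This commutation lemma is the bulk of the bookkeeping but is routine.

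For (b) I would invert the construction, reading a given $\nu(\system{R})$-proof $\mathcal{D}$ and assigning to each of its propositional sequents a lifting to a basic validity sequent so that each inference of $\mathcal{D}$ becomes an instance of the un-encoded basic rule; the leaves, being instances of axioms of $\nu(\system{R})$, lift to instances of the corresponding zero-premiss basic rules, and the whole then assembles into an $\system{R}$-reduction $\mathcal{R}$ with $\nu(\mathcal{R})=\mathcal{D}$. The hard part—and the step I expect to be the real obstacle—is \emph{coherence of the liftings}, because $\nu$ is not injective on basic validity sequents: a datum $\gamma_1\circ\gamma_2$ on the left may be carried as a single meta-atom or as two, so a sequent sitting between two inferences of $\mathcal{D}$ receives a lifting demand from the rule above and one from the rule below that must be reconciled. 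I would handle this by singling out a canonical section of $\nu$—say, the liftings in which every left-hand datum is decomposed maximally along $\circ$ and every right-hand datum maximally along $\bullet$, taken modulo the admissible exchange rules—and checking that each basic rule is compatible with it, in the sense that it carries such liftings to such liftings, refining the context structure where it must but never coarsening it; on the class of liftings actually arising this makes $\nu$ injective, so the demand from above and the demand from below coincide and the recursion is well-defined. The base case and the inductive step then combine to give $\mathcal{R}$ with $\nu(\mathcal{R})=\mathcal{D}$, completing adequacy.
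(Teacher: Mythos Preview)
Your proposal follows essentially the same two-part strategy as the paper: apply $\nu$ pointwise for faithfulness, and lift a $\nu(\system{R})$-proof rule-by-rule for adequacy. The paper's proof is considerably terser---it simply says ``apply $\nu$ to each sequent'' for faithfulness and, for adequacy, performs a co-inductive construction from the root, writing each inference of $\mathcal{D}$ as $\nu$ of an instance of a basic rule and taking that instance as the corresponding node of $\mathcal{R}$.

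Where you differ from the paper is in the care you take over adequacy. You explicitly flag the non-injectivity of $\nu$ (e.g.\ $(w:\Gamma_1),(w:\Gamma_2)$ and $(w:\Gamma_1\circ\Gamma_2)$ collapse to the same left-hand side) and worry that the lifting demanded of a sequent \emph{as a premiss} of one inference may not match the lifting demanded of the same sequent \emph{as the conclusion} of the inference above. The paper's co-inductive construction sidesteps this by letting the rule instance chosen at each node dictate the liftings of its premisses, and then recursing on those; but it never argues that the next rule up admits an instance with \emph{that particular} lifting as conclusion. Your canonical-section idea (maximal $\circ/\bullet$-decomposition) is one reasonable way to close this gap, and is more honest than the paper's sketch. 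That said, it is heavier than strictly necessary: since basic rules are schemas over BVSs and the data-constructors $\circ$ and $\bullet$ are assumed to behave like classical conjunction and disjunction (so the relevant structural rules are admissible), any mismatch in granularity between two adjacent liftings can be repaired by admissible $\lrn\circ$/$\rrn\bullet$-style adjustments---which is, in effect, what your ``maximal decomposition'' section is doing uniformly rather than locally.
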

\begin{proof}
The result follows by Definition~\ref{def:fna} because a valuation of an instance of a rule in $\system{R}$ corresponds to an instance of a rule of $\system{R}$ on the states of the sequents involved. 

Faithfulness follows by application of $\nu$ on \system{R}-proofs. That is, for any $\system{R}$-reduction $\mathcal{D}$, one produces a corresponding $\nu(\mathcal{R})$-proof by apply $\nu$ to each sequent in $\mathcal{D}$. 

Adequacy follows by introducing arbitrary world-variables into a $\nu(\system{R})$-proof. Let $\mathcal{D}$ be a $\nu(\system{R})$-proof, it concludes by an inference of the following form:
\[
\infer[]{\nu(\bar{\Pi}^w \metaseq \Sigma^w)}{\deduce{\nu(\bar{\Pi}_1^{w_1} \metaseq \Sigma_1^{w_1})}{\mathcal{D}_1} & ... & \deduce{\nu(\bar{\Pi}_n^{w_n} \metaseq \Sigma_n^{w_n})}{\mathcal{D}_n}}
\]
We can co-inductively define a corresponding $\system{R}$-with the following co-recursive step in which $\mathcal{R}_i$ is the reduction corresponding to $\mathcal{D}_i$:
\[
\infer[]{\bar{\Pi}^w \metaseq \Sigma^w}{\deduce{\bar{\Pi}_1^{w_1} \metaseq \Sigma_1^{w_1}}{\mathcal{R}_1} & ... & \deduce{\bar{\Pi}_n^{w_n} \metaseq \Sigma_n^{w_n}}{\mathcal{R}_n}}
\]
Hence, for any $\nu(\system{R})$-proof, there is a $\system{R}$-reduction $\mathcal{R}$ such that $\nu(\mathcal{R})= \mathcal{D}$, as required.
\end{proof}

Of course, despite basic rules being relatively typical, many relational calculi are not comprised of \emph{only} basic rules. Nonetheless, the phenomenon does occur for even quite complex logic. It can be used for the semantical analysis of that logic in those instances  --- see, for example, Gheorghiu and Pym~\cite{Alex:BI_Semantics} for an example of this in the case of the logic of Bunched Implications. Significantly, this approach to soundness and completeness differs from the standard term-model approach and has the advantage of bypassing truth-in-a-model (i.e., satisfaction).

\section{Example: Intuitionistic Propositional Logic} \label{sec:ex:ipl}

In Section~\ref{sec:ex:relationalcalculi}, we gave a general, uniform, and systematic procedure for generating proof systems for logics that have model-theoretic semantics satisfying certain conditions. What about the reverse problem? That is, given a proof-theoretic characterization of a propositional logic, can we \emph{derive} a model-theoretic semantics for it (in the sense of Section~\ref{sec:background:propositional})? This chapter provides an example of this for intuitionistic propositional logic (IPL). 

We begin from a naive position on IPL. Our \emph{definition} of IPL is by \system{LJ} --- see Gentzen~\cite{Gentzen1969}. We choose this over other systems (e.g., $\system{G3i}$ --- see Troelstra and Schwichtenberg~\cite{troelstra2000basic}) because we assume that we do not even known much about its proof theory so that we may explain through the analysis what we require. In the end, we recover the model-theoretic semantics by Kripke~\cite{Kripke1963} using constraint systems as the enabling technology. Of course, for the purposes of this chapter we shall imagine that we do not know about the semantics.

Reflecting on Section~\ref{sec:ex:relationalcalculi}, we expect that the semantics we synthesize for IPL will be tractable. Therefore, we intend  to build a relational calculus to bridge the proof theory and semantics of IPL as in Section~\ref{sec:ex:relationalcalculi}, but this time we build it from the proof theory side. Recall that relational calculi are fragments of proof systems for FOL (i.e., the meta-logic); therefore, we begin in Section~\ref{sec:ex:IPL:bool} by building a constraint system for IPL that is \emph{classical} in shape. The system is derived in a principled way from this desire, but it is only sound and complete for IPL. We require it to be \emph{faithful} and \emph{adequate} for \system{LJ} because we hope to generate  clause governing each connective from its rules. Hence, in Section~\ref{sec:ex:IPL:fna}, we analyse the constraint system to recover a faithful and adequate constraint system for IPL. In Section~\ref{sec:ex:ipl:analysis}, we study the reductive behaviour of connectives of IPL in this constraint systems and write tractable FOL-formulae that capture the same behaviour in \system{G3c}. The resulting theory $\Omega$ determines a model-theoretic semantics for IPL, as shown in Section~\ref{sec:ex:ipl:snc}.

\subsection{Multiple-conclusions via Boolean Constraints} \label{sec:ex:IPL:bool}

We begin by defining IPL naively --- that is, from \system{LJ} (see Gentnzen~\cite{Gentzen1969}). We do it in the style of Section~\ref{sec:background:propositional} to keep it consistent with the treatment of propositional logics in the rest of this paper.

\begin{Definition}[Alphabet $\lang{J}$]
    The alphabet is $\lang{J}:= \langle \set{P}, \{\land, \lor, \to, \neg \}, \{\fatcomma, \fatsemi, \emptyset \} \rangle$, in which symbols $\land, \lor, \to, \fatcomma, \fatsemi$ have arity $2$, the symbol $\neg$ has arity $1$, and $\emptyset$ has arity $0$.
\end{Definition}

Let $\equiv$ be the smallest relation satisfying commutative monoid equations for $\fatcomma$  and $\fatsemi$ with unit $\emptyset$. 

\begin{Definition}[System \system{LJ}]
    Sequent calculus $\system{LJ}$ is comprised of the rules in Figure~\ref{fig:lklj}, in which $\Delta$ is either a $\lang{J}$-formula $\phi$ or $\emptyset$, and $\Gamma \equiv \Gamma'$ and $\Delta \equiv \Delta'$ in $\exch$.
\end{Definition}

\begin{figure}[t]
\hrule
\vspace{4mm}
\[
\begin{array}{c}
		\infer[\lrn{\weak}]{\phi\fatcomma\Gamma \seq \Delta}{\Gamma\seq\Delta}
		\quad
		\infer[\rrn{\weak}]{\Gamma \seq \phi}{\Gamma\seq\emptyset}
		\quad
		\infer[\lrn{\cont}]{\phi\fatcomma\Gamma \seq \Delta}{\phi\fatcomma\phi\fatcomma\Gamma\seq\Delta}
		\quad
		\infer[\exch]{\Gamma \seq \Delta}{\Gamma'\seq \Delta'}
		\\[1.5ex]
		\infer[\rrn\land]{\Gamma \seq \phi \land \psi}{\Gamma \seq \phi & \Gamma\seq\psi}
		\quad
		\infer[\lrn{\land^1}]{\phi \land  \psi\fatcomma\Gamma\seq \Delta}{\phi\fatcomma\Gamma\seq\Delta}
		\quad
		\infer[\lrn{\land^2}]{\phi\land \psi\fatcomma\Gamma\seq \Delta}{\psi\fatcomma\Gamma\seq\Delta}
		\quad
				\infer[\rrn \neg]{\Gamma \seq \neg \phi}{\phi\fatcomma\Gamma \seq \emptyset}
		\\[1.5ex]
		\infer[\lrn{\lor}]{\phi\lor\psi\fatcomma\Gamma \seq \Delta}{\phi\fatcomma\Gamma \seq\Delta & \psi\fatcomma\Gamma \seq\Delta }
	       \quad
		\infer[\rrn{\lor^1}]{\Gamma \seq \phi \lor  \psi}{\Gamma \seq \phi }
		\quad
		\infer[\rrn{\lor^2}]{\Gamma \seq \phi \lor  \psi}{\Gamma \seq\psi }
		\quad
				\infer[\lrn \neg]{\neg \phi \fatcomma \Gamma \seq \emptyset}{\Gamma \seq \phi}
		\\[1.5ex]
		\infer[\rrn \to]{\Gamma \seq \phi \to \psi}{\phi\fatcomma\Gamma \seq \psi}
		\quad
		\infer[\lrn \to]{\phi \to \psi\fatcomma\Gamma_1\fatcomma\Gamma_2 \seq \Delta}{\Gamma_1 \seq \phi & \psi\fatcomma\Gamma_2 \seq \Delta}
        \quad
		\infer[\ax]{\phi \seq \phi}{}
    \end{array}
    \]
    \vspace{4mm}
    \hrule
    \caption{Sequent Calculus \system{LJ}}
    \label{fig:lklj}
\end{figure}

In this section, $\system{LJ}$-provability $\proves[LJ]$ \emph{defines} the judgment relation for IPL. Our task is to derive a model-theoretic characterization of IPL. As we saw in Section~\ref{sec:ex:relationalcalculi}, the logic in which semantics is defined is classical. Therefore, our strategy is to use the constraint to present IPL in a sequent calculus with a classical shape; that is, the constraint informs precisely where the semantics of IPL diverge from those of FOL. This tells us how the semantic clauses of FOL need to be augmented to define the connectives of IPL. The calculus in question is Gentzen's \system{LK}~\cite{Gentzen1969}.

The essential point of distinction between \system{LJ} and \system{LK} is in $\rrn c$, $\lrn \to$, and $\lrn \neg$ as it is these rules that enable multiple-conclusioned sequents to appear in the latter but not the former. To bring these behaviours closer, we introduce a constraints constraint system for IPL whose combinatorial behaviour is like $\system{LK}$, but for which we have constraints to recover $\system{LJ}$.

The algebra of the constraint system is Boolean algebra --- see Section~\ref{sec:ex:bi}.

\begin{Example}\label{ex:Booleaninfusion}
The following is an enriched $\lang{J}$-sequent.
\[
(\Gamma\cdot1) \fatcomma (\phi \cdot x) \seq  (\Delta \cdot \bar{x}) \fatsemi (\psi \cdot x)
\]
\end{Example}

For the rules that are the same across both systems, the expressions are inherited from the justifying sub-formulae of the conclusion; for example, in the case of conjunction, one has the following:
\[
 \infer[\rrn \land]{\Gamma \seq \phi \land \psi}{\Gamma \seq \phi & \Gamma \seq \psi} \quad \raisebox{1ex}{ becomes } \quad \infer[\rrn{\land^\alg{B}}]{\Gamma \seq \Delta \fatsemi (\phi \land \psi\cdot x) }{ \Gamma \seq \Delta \fatsemi (\phi \cdot x) & \Gamma \seq \Delta \fatsemi (\psi \cdot x)}
\]
For readability, we may suppress the Boolean expressions on these rules.

\begin{Definition}[System $\system{LK}\oplus\mathcal{B}$]
    Sequent calculus $\system{LK}\oplus\mathcal{B}$ is comprised of the rules in Figure~\ref{fig:lkplusb}, in which $\Gamma$  and $\Delta$ are enriched $\lang{J}$-datum, and $\Gamma \equiv \Gamma'$ and $\Delta \equiv \Delta'$ in $\exch^\mathcal{B}$.
\end{Definition}

\begin{figure}[t]
\hrule
\vspace{4mm}
\[
\begin{array}{c}
\infer[\lrn{\weak^\alg{B}}]{\phi\fatcomma\Gamma \seq \Delta}{\Gamma\seq\Delta}
		\quad
		\infer[\rrn{\weak^\alg{B}}]{\Gamma \seq \Delta\fatcomma\phi}{\Gamma\seq\Delta}
		\quad
		\infer[\lrn{\cont^\alg{B}}]{\phi\fatcomma\Gamma \seq \Delta}{\phi\fatcomma\phi\fatcomma\Gamma\seq\Delta}
		\quad
		\infer[\rrn{\cont^\alg{B}}]{\Gamma \seq \Delta\fatsemi\phi}{\Gamma\seq\Delta\fatsemi\phi\cdot x \fatsemi\phi \cdot \bar{x}}
		\\[1.5ex]
		\infer[\exch^\alg{B}]{\Gamma\seq \Delta}{\Gamma' \seq \Delta'}
		\qquad
  		\infer[\ax^\alg{B}]{\phi\cdot x \seq \phi \cdot y}{x=y=1}
    \qquad
    \infer[\lrn{\neg^\alg{B}}]{\neg \phi \fatcomma \Gamma \seq \Delta}{\Gamma \seq \Delta \cdot \bar{x} \fatcomma \phi \cdot x}
				\qquad \infer[\rrn{\neg^\alg{B}}]{\Gamma \seq \Delta \fatsemi \neg \phi}{\phi \fatcomma \Gamma \seq \Delta}
    \end{array}
    \]
    \[	
    \begin{array}{c@{\qquad}c}
    	\infer[\rrn{\land^\alg{B}}]{\Gamma \seq \Delta\fatsemi\phi \land \psi}{\Gamma \seq \phi\fatsemi\Delta & \Gamma\seq\psi\fatsemi\Delta}
		&
		\infer[\rn{\land_{L_1}^{\alg{B}}}]{\phi \land  \psi\fatcomma\Gamma\seq \Delta}{\phi\fatcomma\Gamma\seq\Delta}
		\quad
		\infer[\rn{\land_{L_2}^{\alg{B}}}]{\phi\land \psi\fatcomma\Gamma\seq \Delta}{\psi\fatcomma\Gamma\seq\Delta}
		\\[1.5ex]
		\infer[\lrn{\lor^\alg{B}}]{\phi\lor\psi\fatcomma\Gamma \seq \Delta}{\phi\fatcomma\Gamma \seq\Delta & \psi\fatcomma\Gamma \seq\Delta }
	   	&
		\infer[\rn{\lor_{R_1}^\alg{B}}]{\Gamma \seq \Delta \fatsemi \phi \lor  \psi}{\Gamma \seq\Delta\fatsemi\phi }
			\quad
		\infer[\rn{\lor_{R_2}^\alg{B}}]{\Gamma \seq \Delta \fatsemi \phi \lor \psi}{\Gamma \seq\Delta\fatsemi\psi }
		\\[1.5ex]
		\infer[\rrn{\to^\alg{B}}]{\Gamma \seq \Delta\fatsemi \phi \to \psi}{\phi\fatcomma\Gamma \seq \Delta\fatsemi\psi}
		&
		\infer[\lrn{\to^\alg{B}}]{\phi \to \psi\fatcomma\Gamma_1\fatcomma\Gamma_2 \seq \Delta_1\fatsemi\Delta_2}{\Gamma_1 \seq \Delta_1 \cdot \bar{x}\fatcomma\phi \cdot x & \psi\fatcomma\Gamma_2 \seq \Delta_2}
\end{array}
  \]
  \vspace{4mm}
  \hrule
    \caption{Constraint System $\system{LK}\oplus\alg{B}$}
    \label{fig:lkplusb}
\end{figure}

The ergo rendering $\system{LK}\oplus \mathcal{B}$ sound and complete for IPL is precisely the demand for one to choose which of the formulae in the succedent of a sequent to assert as a consequence of the context. This reading is closely related to the semantics of intuitionistic proof-search provided by Pym and Ritter~\cite{Pym2005games}.

\begin{Definition}[Choice Ergo]
    Let $I:X \to \alg{B}$ be an interpretation of the language of the Boolean algebra. The choice ergo is the function $\sigma_I$ which acts on $\lang{J}$-formulae as follows:
	\[
	\sigma_{I}(\phi) \mapsto  
		\begin{cases}
		\phi  & \text{ if } \phi \text{ unlabelled } \\
		\sigma_I(\psi) & \text{ if } I(x) =1 \text{ and } \phi=\psi \cdot x \\
		\emptyset & \text{ if } I(x) = 0 \text{ and } \phi=\psi \cdot x
		\end{cases}
	\]
	The choice ergo acts on enriched $\lang{J}$-data by acting point-wise on the formulae; and it acts on enriched sequents by acting on each component independently --- that is, $\sigma_I(\Gamma \seq \Delta) = \sigma_I(\Gamma) \seq \sigma_I(\Delta)$.  
\end{Definition}

\begin{Example}
    Let $S$ be the  sequent  in Example~\ref{ex:Booleaninfusion}. If $I(x)=1$, then $\sigma_I(S) =\Gamma \fatcomma \phi \seq \psi $
\end{Example}

\begin{Proposition}
 System $\system{LK} \oplus \mathcal{B}$, with the choice ergo $\sigma$, is sound and complete for IPL,
 \[
 \Gamma \simpleproves_{\system{LK}\oplus \mathcal{B}}^\sigma \Delta \qquad \mbox{iff} \qquad  \Gamma \proves[LJ] \Delta 
 \]
\end{Proposition}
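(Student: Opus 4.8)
The plan is to prove the two implications separately, in each case routing through the action of the choice ergo $\sigma_I$ on whole reductions, as set up in Section~\ref{sec:constraint:correctness}. Unfolding Definition~\ref{def:snc:ergo}, the assertion $\Gamma \simpleproves_{\system{LK}\oplus\mathcal{B}}^{\sigma}\Delta$ says that there are a coherent $\system{LK}\oplus\mathcal{B}$-reduction $\mathcal{R}$ of an enriched sequent $S$ and an interpretation $I$ of the Boolean algebra satisfying every side-condition of $\mathcal{R}$ with $\sigma_I(S)=\Gamma\seq\Delta$. A preliminary observation I would record is that, inspecting Figure~\ref{fig:lkplusb}, the only rule carrying a Boolean constraint is $\ax^\mathcal{B}$, so the side-conditions of $\mathcal{R}$ are exactly the equations $x=y=1$ contributed by its leaves; and that neither $\system{LJ}$ nor $\system{LK}\oplus\mathcal{B}$ contains $\cut$, so no cut-elimination is needed.

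For soundness, suppose we are given such $\mathcal{R}$, $S$, $I$; the goal is to show that $\sigma_I(\mathcal{R})$ --- evaluate $\sigma_I$ pointwise and delete constraints --- is, after minor surgery, an $\system{LJ}$-proof of $\Gamma\seq\Delta$. First I would establish the invariant that, since $\sigma_I(S)$ is single-conclusion (a $\lang{J}$-formula or $\emptyset$ on the right), every node of $\sigma_I(\mathcal{R})$ is single-conclusion; this is an induction down the tree, checking rule by rule that every rule of Figure~\ref{fig:lkplusb} reductively preserves ``at most one surviving succedent formula under $\sigma_I$''. The content of this check lies in the complementary-label mechanism of $\rrn{\cont^\mathcal{B}}$, $\lrn{\neg^\mathcal{B}}$ and $\lrn{\to^\mathcal{B}}$, which forces the ambient succedent data to carry the label $\bar x$ precisely when the principal formula carries $x$, and hence to be deleted exactly when the principal formula is kept. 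Granted the invariant, a second rule-by-rule pass shows that the $\sigma_I$-image of each $\system{LK}\oplus\mathcal{B}$-inference is one of: (i) the corresponding rule of $\system{LJ}$ from Figure~\ref{fig:lklj} --- the case where the principal label evaluates to $1$ (e.g. $\rrn{\land^\mathcal{B}}$ becomes $\rrn{\land}$, $\rrn{\to^\mathcal{B}}$ becomes $\rrn{\to}$, $\lrn{\neg^\mathcal{B}}$ becomes $\lrn\neg$, and $\lrn{\to^\mathcal{B}}$ becomes $\lrn\to$ once $\Delta_1\cdot\bar x$ has collapsed to $\emptyset$); (ii) an instance of $\lrn\weak$ or $\rrn\weak$; or (iii) a step all of whose premisses coincide with the conclusion up to $\equiv$ --- the case where the principal label evaluates to $0$ --- which is simply excised. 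Each leaf is an $\ax^\mathcal{B}$ with constraint $x=y=1$, which $I$ satisfies by coherence, so its $\sigma_I$-image is $\phi\seq\phi$, an instance of $\ax$. Pruning the trivial steps and using the monoid laws ($\emptyset$ as unit of $\fatsemi$, mopped up by $\exch$), what remains is an $\system{LJ}$-derivation with axiom leaves and root $\Gamma\seq\Delta$; hence $\Gamma\proves[LJ]\Delta$.

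For completeness I would induct on the height of an $\system{LJ}$-proof $\mathcal{D}$ of $\Gamma\seq\Delta$, constructing simultaneously a coherent $\system{LK}\oplus\mathcal{B}$-reduction $\mathcal{R}$ rooted at a (suitably general) enriched sequent $S$ over $\Gamma\seq\Delta$ together with a satisfying interpretation $I$ with $\sigma_I(S)=\Gamma\seq\Delta$ --- indeed with $\sigma_I(\mathcal{R})=\mathcal{D}$ up to the excised trivial steps. For the base case $\ax$, take $\mathcal{R}$ to be the single $\ax^\mathcal{B}$-inference $\phi\cdot u\seq\phi\cdot v$ with fresh $u,v$ set to $1$. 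For the inductive step, for the last rule $\rn r$ of $\mathcal{D}$ select its labelled counterpart in Figure~\ref{fig:lkplusb} and choose the fresh Boolean variables it introduces so that $\sigma_I$ deletes exactly the multiple-conclusion material it has added: for $\rrn\land$, $\rrn{\lor^i}$, $\lrn{\land^i}$, $\lrn\lor$, $\rrn\to$, $\rrn\neg$ keep the single succedent formula (its variable $\mapsto 1$) and delete the ambient data (those variables $\mapsto 0$); for $\lrn\to$ use $\lrn{\to^\mathcal{B}}$ with the bundle $\Delta_1=\emptyset$ and its new variable set to $1$ so the left premiss reads $\Gamma_1\seq\phi$; for $\lrn\neg$ similarly; and for the structural rules use $\lrn{\weak^\mathcal{B}}$, $\lrn{\cont^\mathcal{B}}$, $\exch^\mathcal{B}$ directly. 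Attach the reductions given by the induction hypothesis to the premisses, and let $I$ be the union of the sub-interpretations extended on the new (fresh, hence non-clashing) variables as just described; coherence persists because the only constraints are the $\ax^\mathcal{B}$-equations of the leaves, each true under $I$ by construction.

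The main obstacle is the soundness direction, and within it the two rule-by-rule verifications of the second paragraph: the single-conclusion invariant, and the collapse of every $\system{LK}\oplus\mathcal{B}$-inference to a legitimate $\system{LJ}$-step or a removable one. The delicate cases are exactly $\lrn{\to^\mathcal{B}}$, $\lrn{\neg^\mathcal{B}}$ and $\rrn{\cont^\mathcal{B}}$ --- the rules at which $\system{LK}$ and $\system{LJ}$ genuinely differ --- where one must track both the complementary labels $\bar x$ and the labels inherited by the left-hand occurrences that migrate from the succedent, and confirm that every satisfying $I$ resolves these inferences into single-conclusion $\system{LJ}$-steps (or trivial ones). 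A minor secondary point, flagged above, is choosing the induction hypothesis in the completeness direction general enough in the labelling of the endsequent that the reductions supplied for the premisses can be glued beneath the chosen labelled rule.
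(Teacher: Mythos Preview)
Your proposal is essentially correct, but it takes a different route from the paper's for soundness, and there is one place where your case analysis is incomplete.

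\textbf{Comparison with the paper.} The paper proves soundness by a plain induction on the height of the reduction $\mathcal{R}$, working at the level of $\system{LJ}$-\emph{provability} rather than transforming $\sigma_I(\mathcal{R})$ into an explicit $\system{LJ}$-proof. At each step the induction hypothesis says that the $\sigma_I$-images of the premisses are $\system{LJ}$-provable, and a case analysis on the last rule of $\mathcal{R}$ shows the $\sigma_I$-image of the conclusion is $\system{LJ}$-provable; the two cases spelled out are $\rrn{\cont^\mathcal{B}}$ (where the premiss has the same $\sigma_I$-image as the conclusion) and $\rrn{\to^\mathcal{B}}$ (splitting on whether the principal formula is live or dead). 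Your approach is more constructive --- you push $\sigma_I$ through the whole tree and argue the image is, up to surgery, an $\system{LJ}$-proof --- which buys an explicit proof transformation but costs more bookkeeping. The paper's approach avoids having to say precisely what ``minor surgery'' is. For completeness the paper says only that every $\system{LJ}$-rule is simulated in $\system{LK}\oplus\mathcal{B}$; your induction spells this out and is fine.

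\textbf{The gap.} Your trichotomy for the $\sigma_I$-image of an inference --- (i) an $\system{LJ}$-rule, (ii) a standalone weakening, (iii) a removable identity --- does not cover $\lrn{\to^\mathcal{B}}$ when $I(x)=0$. In that case $\Delta_1\cdot\bar x$ does \emph{not} collapse, $\phi\cdot x$ is deleted, and the $\sigma_I$-image of premiss~1 is $\sigma_I(\Gamma_1)\seq\sigma_I(\Delta_1)$ rather than $\sigma_I(\Gamma_1)\seq\phi$. To match $\system{LJ}$'s $\lrn\to$ you must either insert a $\rrn\weak$ above premiss~1 (when $\sigma_I(\Delta_1)=\emptyset$), or, when the surviving succedent formula lies in $\Delta_1$, discard the premiss-2 subtree altogether and derive the conclusion from premiss~1 by several $\lrn\weak$. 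Similar care is needed for $\lrn{\neg^\mathcal{B}}$. None of this is fatal --- your single-conclusion invariant holds and the repairs are the obvious ones --- but your write-up only treats the $I(x)=1$ sub-case and should say explicitly what happens in the complementary one. This is, incidentally, exactly why the paper later says $\system{LK}\oplus\mathcal{B}$ is ``only globally valid'' and moves to $\system{LK^+}\oplus\mathcal{B}$ for a rule-for-rule correspondence: your ``minor surgery'' is real surgery.
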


\begin{proof}[Proof of Soundness]
    Suppose $\Gamma \simpleproves_{\system{LK}\oplus \mathcal{B}}^\sigma \Delta$, then there is a coherent $\system{LK}\oplus \mathcal{B}$-reduction $\mathcal{R}$ of an enriched sequent $S$ such that $\sigma_I(S) := \Gamma \seq \Delta$, where $I$ is any assignment satisfying $\mathcal{R}$. It follows that $S$ is equivalent (up to exchange) to the sequent $\Gamma' \fatcomma \Pi \seq \Sigma \fatsemi \Delta' $ in which $\Gamma'$ and $\Delta'$ are like $\Gamma$ and $\Delta$ but with labelled data and $I$ applied to the expressions in $\Pi$ and $\Sigma$ evaluates to $0$ but applied to expressions in $\Gamma'$ and $\Delta'$ evaluates to $1$. We proceed by induction $n$ on the height of $\mathcal{R}$ --- that is, the maximal number of reductive inferences in a branch of the tree.
    
    \textsc{Base Case.} If $n=1$, then $\Gamma', \Pi \seq \Sigma, \Delta' $ is an instances of $\rn{ax}^\mathcal{B}$. But then $S= \phi \cdot x \seq \phi \cdot y$, for some formula $\phi$. We have $\phi \proves[LJ] \phi$ by $\rn{ax}$.
    
    \textsc{Inductive Step.} The induction hypothesis (IH) is as follows: if $\Gamma' \simpleproves_{\system{LK}\oplus \mathcal{B}}^\sigma \Delta'$ is witnessed by $\system{LK}\oplus \mathcal{B}$-reductions of $k \leq n$, then $\Gamma \proves[LJ] \Delta$. 
    
    Suppose that the shortest reduction witnessing $\Gamma' \simpleproves_{\system{LK}\oplus \mathcal{B}}^\sigma \Delta'$ is of height $n+1$. Let $\mathcal{R}$ be such a reduction. Without loss of generality, we assume the root of $\mathcal{R}$ is of the form $\Gamma' \fatcomma \Pi \seq \Sigma \fatsemi \Delta'$, as above. It follows by case analysis on the final inferences of $\mathcal{R}$ (i.e., reductive inferences applied to the root) that $\Gamma \proves[LJ] \Delta$. We show two cases, the rest being similar.
    
    \begin{itemize}
        \item[-] Suppose the last inference of $\mathcal{R}$ was by $\rrn{c}^\mathcal{B}$. In this case, $\mathcal{R}$ has an immediate sub-tree $\mathcal{R}'$ that is a coherent $\system{LK}\oplus \mathcal{B}$-reduction of either $\Gamma' \fatcomma \Sigma \seq \Sigma' \fatsemi \Delta'$ or $\Gamma' \fatcomma \Sigma \seq \Sigma' \fatsemi \Delta''$, in which $\Sigma'$ and $\Delta''$ are like $\Sigma$ and $\Delta'$, respectively, but with some formula repeated such that one occurrence carries an additional expression $x$ and the other occurrence with an $\bar{x}$. The coherent assignment of $\mathcal{R}$ are the same as those $\mathcal{R}'$ since the two reductions have the same constraints. We observe that under these coherent assignment $\mathcal{R}'$ witnesses $\Gamma' \simpleproves_{\system{LK}\oplus \mathcal{B}}^\sigma \Delta'$. By the IH, since $\mathcal{R}'$ is of height $n$, it follows that $\Gamma \proves[LJ] \Delta$.
        
        \item[-] Suppose the last inference of $\mathcal{R}$ was by $\rrn{\to}^\mathcal{B}$. In this case, $\mathcal{R}$ has an immediate sub-tree $\mathcal{R}'$. If the principal formula of the inference is not in $\Delta'$, then $\mathcal{R}'$ witnesses $\Gamma' \simpleproves_{\system{LK}\oplus \mathcal{B}}^\sigma \Delta'$. Hence, by the IH, we conclude  $\Gamma \proves[LJ] \Delta$. If the principal formula of the inference is in $\Delta'$, then $\mathcal{R}'$ is a proof of $\phi \fatcomma \Gamma' \fatcomma \Pi \seq \Sigma \fatsemi \Delta'' \fatsemi \psi$, where $\Delta' := \Delta'' \fatsemi \phi \to \psi$. It follows, by the IH, that $\phi, \Gamma' \proves[LJ] \Delta'' \fatsemi \psi$. By the $\rrn \to$-rule in $\system{LJ}$, we have $\Gamma \proves[LJ] \phi \to \psi$ --- that is, $\Gamma \proves[LJ] \Delta$, as required. 
    \end{itemize}
     
Of course, since we are working with $\system{LJ}$, we know that $\Delta$ contains only one formula. This distinction was not important for the proof, so we have left with the more general notation. 
\end{proof}

\begin{proof}[Proof of Completeness]
    This follows immediately from the fact that all the rules of $\system{LJ}$ may be simulated in $\system{LK} \oplus \mathcal{B}$.
\end{proof}

The point of this work is that $\system{LK} \oplus \mathcal{B}$ characterizes IPL in a way that is combinatorially comparable to FOL. This is significant as the semantics of IPL is given classically, hence $\system{LK} \oplus \mathcal{B}$  bridges the proof-theoretic and model-theoretic characterizations of IPL.

\subsection{Faithfulness \& Adequacy} \label{sec:ex:IPL:fna}

Though we may use $\system{LK} \oplus \alg{B}$ to reason about IPL with classical combinatorics, the system does not immediately reveal the meaning of the connectives of IPL in terms of their counterparts in FOL. The problem is that $\system{LK} \oplus \alg{B}$-proofs are only \emph{globally} valid for IPL, with respect to the choice ergo $\sigma$. Therefore, to conduct a semantical analysis of IPL in terms of FOL, we require a constraint system based on FOL whose proofs are \emph{locally} valid --- that is, a system which is not only sound and complete for IPL, but \emph{faithful} and \emph{adequate}. In this section, we analyze the relationship between $\system{LK} \oplus \alg{B}$ and $\system{LJ}$ to produce such a system. 

A significant difference between $\system{LK}$ and $\system{LJ}$ is the use of richer data-structures for the succedent in the former than in the latter (i.e., list or multisets verses formulae). Intuitively, the data-constructor in the succedent acts as a meta-level disjunction, thus we may investigate how $\system{LK} \oplus \mathcal{B}$ captures IPL by considering how $\rrn{\cont^\alg{B}}$ interacts with $\rrn{\lor^\alg{B}}$. We may restrict attention to interactions of the following form: 
\[
		\infer[\rrn{\cont^\mathcal{B}}]{
		    \Gamma \seq \Delta \fatsemi \phi \lor  \psi 
		    }{
		       \infer[\rrn{\lor^{\mathcal{B}_1}}]{
		        \Gamma \seq \Delta \fatsemi (\phi \lor  \psi \cdot x) \fatsemi (\phi \lor \psi \cdot \bar{x})
		    }{
		        \infer[\rrn{\lor^{\mathcal{B}_2}}]{ \Gamma \seq \Delta \fatsemi (\phi \cdot x) \fatsemi (\phi \lor \psi\cdot \bar{x})}{
		         \Gamma \seq \Delta \fatsemi (\phi \cdot x) \fatsemi (\psi \cdot \bar{x})}
		    } 
	}
\]
These may be collapsed into single inference rules, 
\[
\infer{\Gamma \seq \phi \lor \psi \cdot x \fatcomma \Delta}{\Gamma \seq \phi\cdot xy \fatsemi \psi\cdot x\bar{y} \fatsemi \Delta} 
\]
The other connectives either make use of constraints, and therefore have no significant interaction with disjunction, or simply can be permuted without loss of generality --- for example, a typical interaction between $\rrn{\cont^\mathcal{B}}$ and $\rrn{\land^\mathcal{B}}$,
\[
\infer[\rrn {\cont^\mathcal{B}}]{\Gamma \seq \Delta_1 \fatsemi \Delta_2 \fatsemi \Delta_3 \fatsemi \phi \land \psi }{
    \infer[\rrn {\land^\mathcal{B}}]{\Gamma \seq \Delta_1 \fatsemi \Delta_2 \fatsemi \Delta_3 \fatsemi \phi \land \psi \fatsemi \phi\land \psi}{
        \infer[\rrn {\exch^\mathcal{B}}]{\Gamma \seq \Delta_1 \fatsemi \Delta_2 \fatsemi \phi \land  \psi \fatsemi \phi}{
            \infer[\rrn{\land^\mathcal{B}}]{\Gamma \seq \Delta_1 \fatsemi \Delta_2 \fatsemi \phi \fatsemi \phi\land \psi}{
                \Gamma \seq \Delta_1  \fatsemi \phi \fatsemi \phi
                &
                \Gamma \seq \Delta_2 \fatsemi \phi \fatsemi \psi
                }
            }
        &
        \Gamma \seq \Delta_3 \fatsemi \psi
        }
    } 
\]
may be replaced by the derivation, which permutes the inferences,
\[
\infer[\rrn {\land^\mathcal{B}}]{\Gamma \seq \Delta_1 \fatsemi \Delta_2 \fatsemi \Delta_3 \fatsemi \phi \land \psi}{
        \infer[\rrn {\cont^\mathcal{B}}]{\Gamma \seq \Delta_1 \fatsemi \Delta_2 \fatsemi \phi}{
            \infer[\rrn {\exch^\mathcal{B}}]{\Gamma \seq \Delta_1 \fatsemi \Delta_2 \fatsemi \phi \fatsemi \phi}{
                \infer[\rrn{\exch^\mathcal{B}}]{\Gamma \seq \Delta_1 \fatsemi \phi \fatsemi \Delta_2 \fatsemi \phi}{
                    \infer[\rrn{\weak^\mathcal{B}}]{\Gamma \seq \Delta_1 \fatsemi \phi \fatsemi \phi \fatsemi \Delta_2}{\Gamma \seq \Delta_1 \fatsemi \phi \fatsemi \phi}
                    }
                }
            }
        &
        \Gamma \seq \Delta_3 \fatsemi \psi
    }
\]
This analysis allows us to eliminate $\rrn{\cont^\mathcal{B}}$ as it is captured wherever it is needed by the augmented rule for disjunction; similarly, we may eliminate $\lrn{\cont^\mathcal{B}}$ by incorporating it in the other rules. In total, this yields a new constraint system, $\system{LK}^+ \oplus \alg{B}$.

\begin{Definition}[System $\system{LK^+} \oplus \alg{B}$] System $\system{LK^+} \oplus \alg{B}$ is given in Figure~\ref{fig:glkdoubleplusB}, in which $\Gamma$  and $\Delta$ are enriched $\lang{J}$-datum, and $\Gamma \equiv \Gamma$ and $\Delta \equiv \Delta'$ in $\exch$.
\end{Definition}

\begin{figure}[t]
\hrule
\vspace{4mm}
\[
\infer[\ax]{\phi \fatcomma \Gamma \seq \Delta \fatsemi \phi}{} \quad
\infer[\lrn{\weak^\alg{B}}]{\phi\fatcomma\Gamma \seq \Delta}{\Gamma\seq\Delta}
		\quad
		\infer[\rrn{\weak^\alg{B}}]{\Gamma \seq \Delta\fatsemi\phi}{\Gamma\seq\Delta}
		\quad
		\infer[\lrn{\cont^\alg{B}}]{\phi\fatcomma\Gamma \seq \Delta}{\phi\fatcomma\phi\fatcomma\Gamma\seq\Delta}
  \quad
  \infer[\exch^\alg{B}]{\Gamma\seq \Delta}{\Gamma' \seq \Delta'}
  \]
  \[
\begin{array}{c@{\qquad}c}		\infer[\lrn{\neg^\alg{B}}]{\neg \phi \fatsemi \Gamma \seq \Delta}{\Gamma \seq \Delta \fatsemi \phi }
				&
					\infer[\rrn{\neg^\alg{B}}]{\Gamma \seq \Delta \fatsemi \neg \phi \cdot x}{\phi \cdot xy \fatcomma \Gamma \seq \Delta & xy= 1}
		\\[1.5ex]
\infer[\lrn{\land^\alg{B}}]{\phi \land \psi \fatcomma \Gamma \seq \Delta}{\phi \fatcomma \psi \fatcomma \Gamma \proves \Delta}
&
\infer[\rrn{\land^\alg{B}}]{\Gamma \seq \Delta \fatsemi \phi \land \psi}{\Gamma \seq \Delta \fatsemi \phi & \Gamma \seq \Delta \fatsemi \psi}
\\[1.5ex]
\infer[\lrn{\lor^\alg{B}}]{\phi \lor \psi \fatcomma \Gamma \seq \Delta}{\phi \fatcomma \Gamma \seq \Delta & \psi \fatcomma \Gamma \seq \Delta}
&
\infer[\rrn{\lor^\alg{B}}]{\Gamma \seq \phi \lor \psi \cdot x \fatsemi \Delta}{\Gamma \seq \phi\cdot xy \fatsemi \psi\cdot x\bar{y} \fatsemi \Delta} 
		\\[1.5ex]
  \infer[\lrn{\to^\alg{B}}]{\phi \to \psi \fatcomma \Gamma \seq \Delta}{\Gamma \seq \Delta \fatsemi \phi & \psi \fatcomma \Gamma \seq \Delta }
  &
 \infer[\rrn{\to^\alg{B}}]{\Gamma \seq \phi \to \psi \cdot x \fatsemi \Delta}{\Gamma \fatcomma \phi \cdot xy \seq \psi \cdot xy \fatsemi \Delta & xy = 1} 
\end{array}
\]
\vspace{4mm}
\hrule
      \caption{Constraint System $\system{LK^+}\oplus\alg{B}$}
    \label{fig:glkdoubleplusB}
\end{figure}

System $\system{LK^+} \oplus \alg{B}$ characterizes IPL locally --- that is, it is faithful and adequate with respect to \emph{some} sequent calculus for IPL. That sequent calculus, however, is not $\system{LJ}$, but rather a multiple-conclusioned system $\system{LJ}^+$. Essentially, $\system{LJ^+}$ is the multiple-conclusioned sequent calculus introduced by Dummett~\cite{Dummett2000} with certain instances of the structural rules incorporated into the operational rules.

\begin{Definition}[Sequent Calculus $\system{LJ^+}$]
 Sequent calculus $\system{LJ^+}$ is given by the rules in Figure~\ref{fig:ljplus}, in which $\Gamma$  and $\Delta$ are $\lang{J}$-datum, and $\Gamma \equiv \Gamma'$ and $\Delta \equiv \Delta'$ in $\exch$.
\end{Definition}

\begin{figure}[t]
\hrule
 \vspace{4mm}
 \[
 \infer[\ax]{\phi \fatcomma \Gamma \seq \Delta \fatsemi \phi}{} \quad
		\infer[\lrn{\weak}]{\phi\fatcomma\Gamma \seq \Delta}{\Gamma\seq\Delta}
		\quad
		\infer[\rrn{\weak}]{\Gamma \seq \Delta\fatsemi\phi}{\Gamma\seq\Delta}
		\quad
		\infer[\lrn{\cont}]{\phi\fatcomma\Gamma \seq \Delta}{\phi\fatcomma\phi\fatcomma\Gamma\seq\Delta}
  \quad
  \infer[\exch]{\Gamma\seq \Delta}{\Gamma' \seq \Delta'}
  \]
  \[
\begin{array}{c@{\qquad}c}
		\infer[\lrn{\neg}]{\neg \phi \fatcomma \Gamma \seq \Delta}{\Gamma \seq \Delta \fatsemi \phi }
				&
					\infer[\rrn{\neg}]{\Gamma \seq \Delta \fatsemi \neg \phi}{\phi \fatcomma \Gamma \seq \emptyset}
		\\[1.5ex]
\infer[\lrn{\land}]{\phi \land \psi \fatcomma \Gamma \seq \Delta}{\phi \fatcomma \psi \fatcomma \Gamma \seq \Delta}
&
\infer[\rrn{\land}]{\Gamma \seq \Delta \fatsemi \phi \land \psi}{\Gamma \seq \Delta \fatsemi \phi& \Gamma \seq \Delta \fatsemi \psi}
\\[1.5ex]
\infer[\lrn{\lor}]{\phi \lor \psi \fatcomma \Gamma \seq \Delta}{\phi \fatcomma \Gamma \seq \Delta & \psi \fatcomma \Gamma \seq \Delta}
&
\infer[\rrn{\lor}]{\Gamma \seq \phi \lor \psi \fatsemi \Delta}{\Gamma \seq \phi \fatsemi \psi \fatsemi \Delta} 
		\\[1.5ex]
  \infer[\lrn{\to}]{\phi \to \psi \fatcomma \Gamma \seq \Delta}{\Gamma \seq \Delta \fatsemi \phi& \psi \fatcomma \Gamma \seq \Delta }
&
 \infer[\rrn{\to}]{\Gamma \seq \phi \to \psi \fatsemi \Delta}{\Gamma \fatsemi \phi \seq \psi} 
\end{array}
\]
\vspace{4mm}
\hrule
      \caption{Sequent Calculus $\system{LJ^+}$}
    \label{fig:ljplus}
\end{figure}

\begin{Proposition} \label{lem:ljandljplus}
    Sequent calculus $\system{LJ}^+$ is sound and complete for IPL,
    \[
    \Gamma \proves[LJ] \phi  \qquad \text{iff} \qquad \Gamma \proves[LJ^+]  \phi
    \]
\end{Proposition}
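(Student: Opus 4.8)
The plan is to prove the two inclusions separately. Since in this section $\system{LJ}$-provability \emph{defines} IPL, the Proposition asserts exactly that $\system{LJ}$ and $\system{LJ^+}$ derive the same single-succedent sequents.

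\textbf{From $\system{LJ}$ to $\system{LJ^+}$} (completeness of $\system{LJ^+}$ for IPL). This is the easy direction: every $\system{LJ}$-sequent $\Gamma \seq \Delta$, with $\Delta$ a formula or $\emptyset$, is already an $\system{LJ^+}$-sequent, so it suffices to check that each rule of $\system{LJ}$ (Figure~\ref{fig:lklj}) is derivable in $\system{LJ^+}$ (Figure~\ref{fig:ljplus}). The rules common to the two systems ($\ax$, $\lrn\weak$, $\rrn\weak$, $\lrn\cont$, $\exch$, $\rrn\land$, $\lrn\lor$, $\rrn\neg$, $\rrn\to$) are, modulo $\exch$, instances of their $\system{LJ^+}$-counterparts with the succedent context taken empty; the single-premiss rules $\lrn{\land^1}$, $\lrn{\land^2}$ follow from the two-formula $\lrn\land$ of $\system{LJ^+}$ after an extra $\lrn\weak$; dually $\rrn{\lor^1}$, $\rrn{\lor^2}$ follow from $\rrn\lor$ after an $\rrn\weak$; and the context-splitting $\lrn\to$ of $\system{LJ}$ is recovered by first weakening each premiss to the common antecedent $\phi\to\psi\fatcomma\Gamma_1\fatcomma\Gamma_2$ and then applying $\lrn\to$ of $\system{LJ^+}$. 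Hence $\Gamma \proves[LJ] \phi$ implies $\Gamma \proves[LJ^+] \phi$.

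\textbf{From $\system{LJ^+}$ to $\system{LJ}$} (soundness of $\system{LJ^+}$ for IPL). This is the substantive direction, since an $\system{LJ^+}$-proof may traverse genuinely multiple-succedent sequents. I would collapse the succedent by disjunction: fix a reading $\bigvee\Delta$ of a nonempty $\fatsemi$-datum $\Delta$ as an iterated $\lor$ of its formula leaves (any fixed bracketing and order), put $\bigvee\emptyset := \emptyset$, and prove by induction on the height of an $\system{LJ^+}$-proof that $\Gamma \proves[LJ^+] \Delta$ implies $\Gamma \proves[LJ] \bigvee\Delta$; specialising $\Delta$ to a single formula then gives the claim. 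The induction uses only standard properties of $\system{LJ}$ (Gentzen~\cite{Gentzen1969}, Troelstra and Schwichtenberg~\cite{troelstra2000basic}): admissibility of left and right weakening and of left contraction in their general multi-formula form, admissibility of $\cut$, and the invariance of $\Gamma \proves[LJ] \phi$ under replacing $\phi$ by a $\lor$-equivalent formula (associativity, commutativity and idempotence of $\lor$ up to $\system{LJ}$-provability). With these the cases are routine: $\ax$ of $\system{LJ^+}$ translates to $\ax$ plus weakening plus $\rrn{\lor^1}$; the left structural rules and $\lrn\lor$ translate verbatim; $\lrn\land$ is derived from $\lrn{\land^1}$, $\lrn{\land^2}$ and $\lrn\cont$; $\rrn\weak$ becomes $\rrn{\lor^1}$, or $\rrn\weak$ of $\system{LJ}$ when $\Delta$ is empty; $\rrn\lor$ is a re-bracketing modulo the $\lor$-equivalence lemma; $\rrn\land$ is the intuitionistically valid distribution of $\lor$ over $\land$ in the succedent, justified by $\rrn\land$ of $\system{LJ}$ followed by $\cut$ against the $\system{LJ}$-provable implication $(\bigvee\Delta\lor\phi)\land(\bigvee\Delta\lor\psi) \to \bigvee\Delta\lor(\phi\land\psi)$; and $\rrn\to$, $\rrn\neg$ become the corresponding $\system{LJ}$-rule followed by $\rrn{\lor^1}$.

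\textbf{The main obstacle} is the translation of the left rules $\lrn\to$ and $\lrn\neg$ of $\system{LJ^+}$: their first premiss carries the \emph{disjunctive} succedent $\bigvee\Delta\fatsemi\phi$, whereas the matching left rules of $\system{LJ}$ want a premiss proving the side formula $\phi$ outright. I would clear these with one $\cut$: from the translated first premiss $\Gamma \seq \bigvee\Delta \lor \phi$, cut against a derivation of $(\bigvee\Delta\lor\phi)\fatcomma(\phi\to\psi)\fatcomma\Gamma \seq \bigvee\Delta$ obtained by $\lrn\lor$ from the trivial branch $\bigvee\Delta\fatcomma(\phi\to\psi)\fatcomma\Gamma\seq\bigvee\Delta$ and the branch $\phi\fatcomma(\phi\to\psi)\fatcomma\Gamma\seq\bigvee\Delta$, the latter obtained from the translated second premiss $\psi\fatcomma\Gamma\seq\bigvee\Delta$ by $\lrn\to$ and weakening; the $\lrn\neg$ case is symmetric, using $\lrn\neg$, $\rrn\weak$ and weakening. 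Invoking cut-admissibility for $\system{LJ}$ keeps this transparent; a cut-free rendering is possible but more laborious. (A more roundabout route proceeds via the soundness and completeness of $\system{LK}\oplus\alg{B}$ for IPL established above, but the direct translation is cleaner and self-contained.) The argument is, in essence, the classical one relating single- and multiple-succedent intuitionistic sequent calculi, cf. Dummett~\cite{Dummett2000}.
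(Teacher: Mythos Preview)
Your proposal is correct and is precisely the standard argument relating single- and multiple-succedent intuitionistic calculi; the paper, by contrast, gives no argument at all and simply writes ``Follows from Dummett~\cite{Dummett2000}.'' So you have supplied what the paper only cites: the easy direction by rule simulation, and the substantive direction by collapsing the succedent to an iterated disjunction and inducting on proof height, clearing the problematic $\lrn\to$ and $\lrn\neg$ cases with a cut against the disjunction. This is essentially the classical Maehara-style translation, and your identification of the $\lrn\to$/$\lrn\neg$ premisses as the only genuine obstacle is exactly right. One very minor point: in the $\rrn\to$ and $\rrn\neg$ cases you write ``followed by $\rrn{\lor^1}$'', but depending on your bracketing of $\bigvee(\Delta\fatsemi\chi)$ you may need $\rrn{\lor^2}$ or several $\lor$-introductions; this is covered by your $\lor$-equivalence lemma, so it is only a matter of phrasing.
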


\begin{proof}
    Follows from Dummett~\cite{Dummett2000}.
\end{proof}

The choice ergo $\sigma$ extends to a valuation from $\system{LK}^+\oplus \mathcal{B}$ to $\system{LJ}^+$ by pointwise application to every sequent within the reduction.

\begin{Proposition}\label{lem:fna:plussystems}
    System $\system{LK^+} \oplus \alg{B}$, with valuation $\sigma$, is faithful and adequate with respect to $\system{LJ^+}$.
\end{Proposition}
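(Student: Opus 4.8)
The plan is to extract both halves of Definition~\ref{def:fna} from an essentially rule-by-rule correspondence between reductions in $\system{LK^+}\oplus\alg{B}$ paired with a satisfying interpretation and proofs in $\system{LJ^+}$. So I would first fix a \emph{dictionary}: for every constraint rule $\rn{r}$ of $\system{LK^+}\oplus\alg{B}$ (Figure~\ref{fig:glkdoubleplusB}) and every interpretation $I$ of the Boolean variables satisfying the local constraint attached to $\rn{r}$ (e.g.\ $xy=1$ for $\rrn{\to^\alg{B}}$ and $\rrn{\neg^\alg{B}}$), compute $\sigma_I$ of a generic instance of $\rn{r}$ and exhibit a short $\system{LJ^+}$-derivation with that conclusion and those premisses. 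For the structural and additive rules ($\ax$, $\lrn{\weak^\alg{B}}$, $\rrn{\weak^\alg{B}}$, $\lrn{\cont^\alg{B}}$, $\exch^\alg{B}$, $\lrn{\land^\alg{B}}$, $\rrn{\land^\alg{B}}$, $\lrn{\lor^\alg{B}}$, $\lrn{\neg^\alg{B}}$, $\lrn{\to^\alg{B}}$) this derivation is a single instance of the homonymous $\system{LJ^+}$-rule; for the right rules carrying a principal label and a constraint ($\rrn{\neg^\alg{B}}$, $\rrn{\to^\alg{B}}$, $\rrn{\lor^\alg{B}}$) the constraint forces the principal label to evaluate to $1$, so the principal formula survives, and the auxiliary variable decides which of a small number of $\system{LJ^+}$-derivations is obtained. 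When an unconstrained right connective has a principal label evaluating to $0$, the inference collapses and $\sigma_I$ of the step is the identity, which is an instance of $\exch$ with the trivial coherent equivalence --- this is what guarantees that $\sigma_I$ applied to a reduction still produces a genuine $\system{LJ^+}$-derivation.

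With the dictionary in hand, \emph{faithfulness} is an induction on the height of a coherent $\system{LK^+}\oplus\alg{B}$-reduction $\mathcal{R}$ together with an interpretation $I$ satisfying all of its side-conditions. Leaves are either instances of $\ax$, which $\sigma_I$ sends to an $\system{LJ^+}$-axiom, or constraints, which $\sigma$ deletes; the inductive step replaces the bottommost rule application by the corresponding $\system{LJ^+}$-derivation from the dictionary (well defined because $I$ satisfies that application's local constraint, which is one of the side-conditions of $\mathcal{R}$) and stacks on top the $\system{LJ^+}$-proofs of the premisses obtained from the induction hypothesis. Gluing yields an $\system{LJ^+}$-proof of $\sigma_I(S)$ for $S$ the endsequent of $\mathcal{R}$, i.e.\ $\sigma_I(\mathcal{R})$ is an $\system{LJ^+}$-proof. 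For \emph{adequacy} I would argue by induction on the structure of an $\system{LJ^+}$-proof $\mathcal{D}$, building $\mathcal{R}$ from the root up: annotate every formula occurrence in $\mathcal{D}$ by a fresh Boolean variable, set $I$ to send each such variable to $1$, replace each $\system{LJ^+}$-rule instance by the matching $\system{LK^+}\oplus\alg{B}$-instance from the dictionary (introducing for the constrained rules a fresh auxiliary variable whose value under $I$ reproduces the side $\mathcal{D}$ took), and collect the generated constraints, all of the form $xy=1$ with both variables assigned $1$, so that $I$ is coherent and satisfies them. By construction $\sigma_I(\mathcal{R})=\mathcal{D}$.

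The main obstacle is the right-disjunction rule $\rrn{\lor^\alg{B}}$ together with the elimination of $\rrn{\cont^\alg{B}}$ carried out in Section~\ref{sec:ex:IPL:fna}: a single application of $\rrn{\lor^\alg{B}}$ does not valuate to a single $\system{LJ^+}$-inference but to a small derivation --- a right weakening followed by a single-premiss right-disjunction step, or the identity --- whose shape depends on the auxiliary interpretation, and the contraction absorbed into it has to be accounted for when matching an $\system{LJ^+}$-proof that uses the two-premiss $\rrn{\lor}$ rule together with explicit $\rrn{\cont}$. I would address this by a normalisation step before the adequacy induction: using Proposition~\ref{lem:ljandljplus} (or directly the permutation arguments sketched in Section~\ref{sec:ex:IPL:fna}) push every right contraction in $\mathcal{D}$ downward until it meets a right-disjunction and merge the two into the combined step mirrored by $\rrn{\lor^\alg{B}}$, so that the induction only ever matches these combined steps; symmetrically, for faithfulness one checks that $\sigma_I$ of $\rrn{\lor^\alg{B}}$ is always expressible using only $\rrn{\weak}$, $\rrn{\lor}$ and $\exch$, hence never leaves $\system{LJ^+}$. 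The remaining cases are routine once the dictionary is fixed, the deletion behaviour of $\sigma$ on subformulae labelled $0$ being handled uniformly by the observation that a vanishing inference is an $\exch$-instance.
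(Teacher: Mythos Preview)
Your overall plan---build a rule-by-rule dictionary and then induct on reductions/proofs---is exactly what the paper does; its proof is the two-sentence observation that each constrained rule of $\system{LK^+}\oplus\alg{B}$ ``produces the corresponding rule in $\system{LJ^+}$'' and conversely. You go further than the paper in isolating $\rrn{\lor^\alg{B}}$ as the case needing care, and that instinct is correct.

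There is, however, a concrete misreading in your adequacy paragraph. In $\system{LJ^+}$ (Figure~\ref{fig:ljplus}) the right-disjunction rule is the \emph{single-premiss} rule
\[
\infer[\rrn{\lor}]{\Gamma \seq \phi \lor \psi \fatsemi \Delta}{\Gamma \seq \phi \fatsemi \psi \fatsemi \Delta}
\]
and there is \emph{no} $\rrn{\cont}$ in the system at all. So your proposed normalisation (``push every right contraction \ldots\ until it meets a right-disjunction'') has nothing to act on, and there is no ``two-premiss right-disjunction rule'' to match. The actual difficulty runs in the opposite direction from what you describe: under any interpretation $I$, the labels $xy$ and $x\bar{y}$ attached by $\rrn{\lor^\alg{B}}$ are Boolean complements, so $\sigma_I$ of the premiss keeps \emph{exactly one} of $\phi,\psi$, never both; hence no single $\rrn{\lor^\alg{B}}$-instance can evaluate to the $\system{LJ^+}$ step $\rrn{\lor}$, which retains both disjuncts. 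Your faithfulness analysis already implicitly noticed this (the image is ``a right weakening followed by $\rrn{\lor}$''), but for adequacy in the strict sense $\sigma_I(\mathcal{R})=\mathcal{D}$ this is a genuine obstruction that your fix does not remove---and the paper's one-line argument does not address it either. A smaller point: your claim that a ``vanishing inference is an $\exch$-instance'' fails for the two-premiss rules such as $\rrn{\land^\alg{B}}$ or $\lrn{\lor^\alg{B}}$ when the principal label evaluates to $0$: $\sigma_I$ then yields one conclusion with two identical premisses, which is not the shape of any unary $\system{LJ^+}$-rule.
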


\begin{proof}
 Faithfulness follows from the observation that each rule in $\system{LK^+} \oplus \alg{B}$ produces the corresponding rule in $\system{LJ^+}$ when its constraints are observed. Adequacy follows from the observation that every instance of every rule in $\system{LJ^+}$ is an evaluation of an instance of a rules of $\system{LK^+} \oplus \alg{B}$ respecting its constraints.
\end{proof}

The force of this result is that we may use $\system{LK^+}\oplus\mathcal{B}$ to study intuitionistic connectives in terms of their classical counterparts. This analysis allows us to synthesize a model-theoretic for IPL from the semantics of FOL.

\subsection{Semantical Analysis of IPL} \label{sec:ex:ipl:analysis}

Our approach to deriving a semantics for IPL is to construct a set of meta-formulae $\Omega$ that forms a tractable definition of a semantics for IPL. The idea is that we use $\system{LK^+}\oplus\mathcal{B}$ to determine meta-formulae for each connective that simulate the proof theory of IPL within \system{LK^+}.  

Recall that in Section~\ref{sec:relationalcalculi:fna} we require a data-constructor to represent classical conjunction ($\with$) and one for classical disjunction ($\parr$). These are $\fatcomma$ and $\fatsemi$, respectively. We may now proceed to analyze the connectives of IPL.

We observe in $\system{LK^+} \oplus \alg{B}$ that intuitionistic conjunction has the same inferential behaviour as classical conjunction,
\[
    \infer[\rrn{\land^\mathcal{B}}]{\Gamma \seq \Delta \fatsemi \phi \land \psi}{\Gamma \seq \Delta \fatsemi \phi& \Gamma \seq \Delta \fatsemi \psi} 
    \qquad \mbox{vs.} \qquad
    \infer[\rrn{\with}]{\Pi \seq \Sigma, \Phi \with \Psi}{\Pi \seq \Sigma, \Phi& \Pi \seq \Sigma , \Psi} 
\]
Therefore, it seems that $\land$ in IPL should be defined as $\with$ in FOL. A candidate meta-formula governing the connective is the universal closure of the following --- we use the convention in Section~\ref{sec:relationalcalculi:tractable} in which $\hat{\phi}$ and $\hat{\psi}$ are used as meta-variables for formulae of the object-logic: 
\[
(w:\hat{\phi}\land\hat{\psi}) \iff (w:\hat{\phi})\with(w:\hat{\psi})
\]
This is the appropriate clause for the connective as it enables the following behaviour in the meta-logic in which the double-line suppresses the use of the clause:
\[
\infer={\Omega, \Pi, (w: \Gamma) \seq   (w: \phi \land \psi), \Sigma}{
    \infer[\rrn\with]{\Omega, \Pi, (w: \Gamma) \seq   (w: \phi) \with (w:\psi), \Sigma}{
        \Omega, \Pi, (w: \Gamma) \seq   (w: \phi) ,\Sigma
        &
        \Omega, \Pi, (w: \Gamma) \seq    (w:\psi), \Sigma
        }
    }
\]

Recall, such derivations correspond to the \emph{use} of the clause --- see Section~\ref{sec:relationalcalculi:generating} --- which may be collapsed into rules themselves; in this case, it becomes the following:
\[
\infer[\land\text{-clause}]{\Omega , \Pi, (w : \Gamma) \seq (w : \phi \land \psi) , \Sigma}{\Omega , \Pi, (w : \Gamma) \seq (w : \phi) , \Sigma & \Omega, \Pi, (w: \Gamma)  \seq (w : \psi), \Sigma} 
\]
Intuitively, this rule precisely recovers $\rrn \land \in \system{LJ^+}$,
\[
    \infer[\rrn{\land}]{\Gamma \seq \Delta \fatsemi \phi \land \psi}{\Gamma \seq \Delta \fatsemi \phi& \Gamma \seq \Delta \fatsemi \psi}
\]
%Therefore, taking this clause for $\land$, we will be able to establish soundness and completeness  Theorem~\ref{thm:snc:relationalcalculi}.
Of course, it is important to check that the clause also has the correct behaviour in the left-hand side of sequents; we discuss this at the end of the present section. 

We obtain the universal closure of the following for the clauses governing disjunction $(\lor)$ and $(\bot)$ analogously:
\[
(x: \hat{\phi}\lor \hat{\psi}) \iff \big((x: \hat{\phi}) \parr (x: \hat{\psi})\big) \qquad (x:\bot) \iff \bot
\]

It remains to analyze implication $(\to)$. The above reasoning does not follow \emph{mutatis mutandis} because the constraints in $\system{LK^+}\oplus \mathcal{B}$ becomes germane, so we require something additional to get the appropriate simulation.

How may we express $\to$ in terms of the classical connectives? We begin by considering $\rrn{\to^\mathcal{B}} \in \system{LK^+} \oplus \mathcal{B}$,
\[
\infer[\rrn{\to^\alg{B}}]{\Gamma \seq (\phi \to \psi \cdot x) \fatsemi \Delta}{\Gamma \fatcomma (\phi \cdot xy) \seq (\psi \cdot xy) \fatsemi \Delta & xy = 1} 
\]
Since $\system{LK^+} \oplus \mathcal{B}$ is not only sound and complete for IPL but faithful and adequate, we know that this rule characterizes the connective. The rule admits two assignment classes: $x \mapsto 0$ or $x \mapsto 1$. Super-imposing these valuations can capture the behaviour we desire in the meta-logic on each other by using \emph{possible worlds} to distinguish the possible cases, 
\[
\infer{\Omega , \Pi \seq  (w : \phi \to \psi), \Sigma}{\Omega , \Pi[w \mapsto u], \Pi[w \mapsto v],(u: \phi) \seq (u : \psi), \Sigma[w\mapsto v], \Sigma[w\mapsto u]}
\]
We assume that since $u$ and $v$ are distinct, they do not interact so that the rule captures the following possibilities:
\[
\infer{\Omega, \Pi \seq  (w : \phi \to \psi), \Sigma}{\Omega , \Pi[w \mapsto u],(u: \phi) \seq (u:\psi), \Sigma[w\mapsto u]} 
\qquad
\infer{\Omega, \Pi \seq  (w : \phi \to \psi), \Sigma}{\Omega , \Pi[w \mapsto v] \seq \Sigma[w\mapsto v]}
\]
The assumption is proved valid below --- see Proposition~\ref{lem:worldindependence}. Applying the state function to these rules does indeed recover the possible cases of $\rrn{\to^\mathcal{B}}$, which justifies that this super-imposing behaviour is what we desire of the clause governing implication. It remains only to find that clause.

One of these possibilities amounts to a weakening, a behaviour already present through interpreting the data-structures as classical conjunction and disjunction.
The other possibility we recognize as having the combinatorial behaviour of classical implication concerns creating a meta-formula in the antecedent of the premiss by taking part in a meta-formula in the succedent of the conclusion. Naively, we may consider the following as the clause:
\[
(w:\hat{\phi} \to \hat{\psi}) \iff \big( (w:\hat{\phi}) \Rightarrow (w:\hat{\psi}) \big)
\]
However, this fails to account for the change in world. Thus, we require the clause to have a universal quantifier over worlds and a precondition that enables the $\Pi[w \mapsto u]$  substitution. Analyzing the possible use cases, we observe that $R$ must satisfy reflexivity so that the substitution for $u$ may be trivial (e.g., when validating $(w:\phi \land (\phi \to \psi)) \seq (w: \psi)$). In total, we have the universal closure of the following meta-formulae:
\[
\begin{array}{c}
(x:\hat{\phi} \to \hat{\psi})  \iff  \forall y\big((x R y) \with (y:\hat{\phi}) \Rightarrow (y: \hat{\psi})\big)\\
x R x \qquad 
%\big((x R y) \with (y R z)\big) \Rightarrow (x R z) \qquad %
x R y \Rightarrow \forall \hat{\Gamma} \big((x:\hat{\Gamma}) \Rightarrow (y:\hat{\Gamma})\big)
\end{array}
\]
Observe that we have introduced an ancillary relation $R$ precisely to recover the behaviour determined by the algebraic constraints; curiously, we do not need transitivity, which would render $R$ a pre-order and recover Kripke's semantics for IPL~\cite{Kripke1963} (we discuss this further at the end of Section~\ref{sec:ex:ipl:snc}). Moreover, since the data-constructors behave exactly as conjunction $(\land)$ and disjunction ($\lor)$, we may replace $\hat{\Gamma}$ with $\hat{\phi}$ without loss of generality.

This concludes the analysis. Altogether, the meta-formulae thus generated comprise a tractable definition for a model-theoretic semantics for IPL, called $\Omega_{\text{\rm IPL}}$. Any abstraction of this theory gives the semantics.

\begin{Definition}[Intuitionistic Frame, Satisfaction, and Model]
    An intuitionistic frame is a pair $\alg{F} := \langle \uni, R \rangle$ in which  $R$ is a reflexive relation on $\uni$.

    Let $ \llbracket - \rrbracket$ be an interpretation mapping  \lang{J}-atoms to $\set{U}$. Intuitionistic satisfaction is the relation between elements $w \in \uni$ and $\phi \in \set{F}$ defined by the clauses of Figure~\ref{fig:sat:ipl}. 
    
    A pair $\langle \mathcal{F }, \llbracket - \rrbracket \rangle$ is an intuitionistic model iff it is persistent --- that is, for any $\lang{J}$-formula $\phi$ and worlds $w$ and $v$, 
    \[
    \mbox{if $w R v$ and $w \sat\phi$, then $v \sat \phi$}
    \]
    The class of all intuitionistic models is $\set{K}$.
\end{Definition}

\begin{figure}
 \hrule
  \vspace{4mm}
    \[
    \begin{array}{l@{\qquad}c@{\qquad}l}
        w \satisfies \at{p} & \text{iff}  & w \in \ll \at{p} \rr \\[1ex]
         w \satisfies \phi \land \psi & \text{iff} & w \satisfies \phi \text{ and } w \satisfies \psi \\[1ex]
         w \satisfies  \phi \lor \psi & \text{ iff } & w \satisfies \phi \text{ or } w \satisfies \psi \\[1ex]
         w \satisfies \phi \to \psi & \text{ iff } & \text{ for any $u$, if $w R u$ and $u \satisfies \phi$, then $u \satisfies \psi$ } \\[1ex]
         w \satisfies \bot & \text{  } &  \text{never} 
    \end{array}
    \]
    \vspace{4mm}
   \hrule
\caption{Satisfaction for IPL} \label{fig:sat:ipl}
\end{figure}

This semantics generates the following validity judgment:
\[
\Gamma \entails_{\text{IPL}} \Delta \qquad \mbox{ iff } \qquad \mbox{for any $\mathfrak{M} \in \set{K}$ and any $w \in \mathfrak{M}$, if $w \sat \Gamma$, then $w \sat\Delta$}
\]

It remains to prove soundness and completeness for the semantics, which we do in Section~\ref{sec:ex:ipl:snc}. Of course, we have \emph{designed} the semantics so that it corresponds to $\system{LJ^+}$, rendering the proof a formality. Nonetheless, it is instructive to see how it unfolds. 

As a remark, \emph{tertium non datur} is known not to apply in IPL. How does encoding of IPL within classical logic avoid it? It is instructive to study this question as it explicates the clause for implication, which defines the intuitionistic connective in terms of a (meta-level) classical one.

\begin{Example} \label{ex:lemfails}
The following reduction is a canonical instances of \emph{using} the clause (see Section~\ref{sec:relationalcalculi:generating}):

\begin{prooftree}
\AxiomC{$\Omega_{\text{IPL}}, (w R u), (u : \phi) \seq (w : \phi),  \bot$}
\RightLabel{$\lrn{\with}$}
\UnaryInfC{$\Omega_{\text{IPL}}, (w R u \with u : \phi) \seq (w : \phi),  \bot$ }
\RightLabel{$\rrn{\Rightarrow}$}
\UnaryInfC{$\Omega_{\text{IPL}}   \seq (w : \phi), (w R u \with u  : \phi \Rightarrow \bot)$}
\RightLabel{$\rrn{\forall}$}
\UnaryInfC{$\Omega_{\text{IPL}}   \seq (w : \phi), \forall x (w R x \with x : \phi \Rightarrow \bot)$}
\RightLabel{$\to$-clause}
\UnaryInfC{$\Omega_{\text{IPL}}   \seq (w : \phi), (w : \phi \to \bot) $}
\RightLabel{$\rrn \parr$}
\UnaryInfC{$\Omega_{\text{IPL}}  \seq (w : \phi) \parr (w : \phi \to \bot)$}
\RightLabel{$\lor$-clause}
\UnaryInfC{$\Omega_{\text{IPL}} \seq (w : \phi \lor \phi \to \bot)$}
\end{prooftree}

Since $(u: \phi)$ in the antecedent and $(w: \phi)$ in the succedent are different atoms, since $u$ and $w$ are different world-variables, one has not reached an axiom. In short, despite working in a classical system, the above calculation witnesses that $\phi \lor \neg \phi$ is valid in IPL if and only if one already knows that $\phi$ is valid in IPL or one already knows that $\neg \phi$ is valid in IPL. %The atom $(w R u)$ meant that had some information been known about $w$ (i.e., had the atom $(w:\Gamma)$ occurred in the antecedent), one would be able to generate the atom $(u:\Gamma)$, by persistence, so that we know as much about $u$ as we know about $w$. 
\end{Example}

Curiously, in all instances above, we established the clause for a connective by analyzing the behaviour of the connective in the succedent (i.e., on the right-hand side of sequents), paying no attention to its behaviour in the antecedent (i.e., in the left-hand side of sequents). Nevertheless, despite this bias in all the above cases, the clauses generated behave correctly on both sides of the sequents. This observation is significant because it reaffirms an impactful statement by Gentzen~\cite{Gentzen1969}:
\begin{quote}
    The introductions represent, as it were, the ‘definitions’ of the symbols concerned, and the eliminations are no more, in the final analysis, than the consequences of these definitions. This fact may be expressed as follows: In eliminating a symbol, we may use the formula with whose terminal symbol we are dealing only `in the sense afforded it by the introduction of that symbol.'
    \end{quote}
This statement is one of the warrants for proof-theoretic semantics~\cite{SEP-PtS}, a mathematical instantiation of \emph{inferentialism} --- the semantic paradigm according to which inferences and the rules of inference establish the meaning of expressions. The method for semantics presented here supports the inferentialist view as it precisely determines the meaning of the connectives according to their inferential behaviour. Notably, that the left- and right- behaviours cohere is intuitively a consequence of \emph{harmony}, which is a pre-condition for proof-theoretic semantics --- see, for example, Schroeder-Heister~\cite{SchroedHeister}.

\subsection{Soundness \& Completeness} \label{sec:ex:ipl:snc}

There are two relationships the proposed semantics may have with IPL: soundness and completeness. Since the semantics generated is the same as the one given by Kripke~\cite{Kripke1963}, both of these properties are known to hold. Nonetheless, the method by which the semantics was determined gives a different method for establishing these relationships --- namely, those of Section~\ref{sec:ex:relationalcalculi}. In summary, we need only show that relational calculus generated by the semantics has the same behaviour (i.e., is faithful and adequate) as a sequent calculus for IPL. This is a formalized reading of the traditional approach to soundness in which one demonstrates that every rule of the system can be simulated by the clauses of the semantics --- in the parlance of this paper, this is just showing that the relational calculus generated by the semantics is faithful to the sequent calculus. 

\begin{Theorem}[Soundness]
    If $\Gamma \proves \phi$, then $\Gamma \entails_{\text{\rm IPL}} \phi$.
\end{Theorem}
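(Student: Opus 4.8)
The plan is to route soundness through the relational-calculus machinery of Section~\ref{sec:ex:relationalcalculi} rather than through a direct induction over proofs in the semantics. By Theorem~\ref{thm:snc:relationalcalculi} applied to the tractable definition $\Omega_{\text{IPL}}$, the generated relational calculus $\system{R}(\Omega_{\text{IPL}})$ satisfies $(x:\Gamma) \proves[R(\Omega_{\text{IPL}})] (x:\Delta)$ iff $\Gamma \entails_{\text{\rm IPL}} \Delta$; so it suffices to show that every $\system{LJ}$-provable sequent $\Gamma \seq \phi$ gives rise to an $\system{R}(\Omega_{\text{IPL}})$-proof of $(x:\Gamma) \seq (x:\phi)$. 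First I would pass from $\system{LJ}$ to $\system{LJ^+}$ using Proposition~\ref{lem:ljandljplus}, so that the combinatorics on the succedent side of sequents already matches that of the relational calculus.

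Next I would show, rule by rule, that $\system{R}(\Omega_{\text{IPL}})$ simulates $\system{LJ^+}$ on basic validity sequents --- sequents of the form $(x:\Gamma_1),\dots,(x:\Gamma_m) \seq (x:\Delta_1),\dots,(x:\Delta_n)$, all decorated by the same world $x$. For the axiom, the structural rules, and the rules for $\land$, $\lor$, and $\neg$ this is immediate from the corresponding clause-rules of $\system{R}(\Omega_{\text{IPL}})$ together with the fact, recorded in Section~\ref{sec:relationalcalculi:fna}, that $\fatcomma$ and $\fatsemi$ behave as (meta-level) conjunction and disjunction. The rule requiring care is $\rrn\to$: here one unfolds the $\to$-clause to introduce a fresh world $u$ via $\rrn\forall$ and $\rrn\Rightarrow$, transports the context along $xRu$ using the persistence axiom, discharges the reflexivity side using the axiom $xRx$, and invokes a derivation of the premiss of $\rrn\to$. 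This step is precisely the claim that $\system{R}(\Omega_{\text{IPL}})$, once presented in basic form, is faithful and adequate for $\system{LJ^+}$ in the sense of Definition~\ref{def:fna}, mirroring Proposition~\ref{lem:fna:plussystems}; and it is the formalised reading of the usual ``check each rule against the clauses'' soundness argument.

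Chaining the steps yields the result: $\Gamma \proves[LJ] \phi$ implies $\Gamma \proves[LJ^+] \phi$ (Proposition~\ref{lem:ljandljplus}), which implies $(x:\Gamma) \proves[R(\Omega_{\text{IPL}})] (x:\phi)$ (apply the rule-by-rule simulation to the $\system{LJ^+}$-proof), which implies $\Gamma \entails_{\text{\rm IPL}} \phi$ (Theorem~\ref{thm:snc:relationalcalculi}).

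I expect the main obstacle to be the implication case, for two reasons flagged in Section~\ref{sec:ex:ipl:analysis}. First, one must justify that the single clause-rule for $\to$ genuinely captures both valuation classes $x \mapsto 0$ and $x \mapsto 1$ of $\rrn{\to^\mathcal{B}}$, which rests on the world-independence property of Proposition~\ref{lem:worldindependence} (distinct world-variables do not interact); without it the fresh-world substitution is not obviously sound. Second, although the $\to$-clause was extracted by examining only the right-hand behaviour of $\to$, one must still check that it simulates $\lrn\to$ correctly, which uses reflexivity of $R$ --- the ``harmony'' verification. A subsidiary, largely bookkeeping, obstacle is that the relational calculus produced verbatim by Definition~\ref{def:relationalcalculus} from $\Omega_{\text{IPL}}$ may need some proof-theoretic massaging to be exhibited as a basic relational calculus before the simulation argument applies cleanly.
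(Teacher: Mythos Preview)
Your argument is correct, but the paper does not take this route for soundness. Its proof is a single sentence: ``Apply the traditional inductive proof --- see, for example, Van Dalen~\cite{vanDalen}.'' That is, the paper simply invokes the standard argument by induction on the height of an $\system{LJ}$-proof, checking that each rule preserves validity in the Kripke-style semantics. Your proposal instead routes soundness through the relational-calculus machinery ($\system{LJ} \to \system{LJ^+} \to \system{R}(\Omega_{\text{IPL}}) \to$ validity), which is essentially the mirror image of the chain the paper uses for \emph{completeness} (Corollary~\ref{cor:IPLandRJ}, Proposition~\ref{lem:rjpluscomp}, Corollary~\ref{cor:rjplusandljplus}, Proposition~\ref{lem:ljandljplus}). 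What your approach buys is uniformity with the paper's framework and a pleasing symmetry with the completeness proof; what the paper's approach buys is brevity, since the traditional inductive argument is entirely routine here and requires none of the bookkeeping about world-independence, persistence, or basic-rule massaging that you (rightly) flag as obstacles. Both are valid; the paper simply opted for the shorter path on the soundness side while reserving the constraint-system machinery for completeness.
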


\begin{proof}
Apply the traditional inductive proof --- see, for example, Van Dalen~\cite{vanDalen}.
\end{proof}

In this paper, we shall prove completeness symmetrically. We show that the relational calculus generated by the semantics is adequate for a sequent calculus characterizing IPL. This suffices because the relational calculus is sound and complete for the semantics, as per Theorem~\ref{thm:snc:relationalcalculi}. To simplify the presentation, we shall have contraction explicit in the relational calculus for $\Omega_{\text{IPL}}$ rather than implicit. 

\begin{Definition}[Relational Calculus \system{RJ}]
  The relational calculus $\system{RJ}$ is comprised of the rules in Figure~\ref{fig:rj} --- $\Phi$ denotes a meta-formula, $\Pi$ and $\Sigma$ denote multiset of meta-formulae, $x$ and $y$ denote world-variables, $\Delta$ denotes object-logic data, $\phi$ and $\psi$ denote object-logic formulae. The rules $\lrn{\fatcomma}$ and $\rrn{\fatsemi}$ are invertible, and the world-variable $y$ does not appear elsewhere in the sequents in $\rrn{\to}$.
\end{Definition}

\begin{figure}[t]
  \hrule
  \vspace{4mm}
	\[
	\begin{array}{c@{\qquad}c}
  \infer[\cont]{\Phi, \Pi \seq \Sigma}{\Phi, \Phi, \Pi \seq \Sigma}
  \qquad
  \infer[\taut]{\Phi, \Pi \seq \Sigma, \Phi}{}
		&
		\infer[\bot]{\bot, \Pi \seq \Sigma}{} \qquad \infer[\rn{ref}]{\Pi \seq \Sigma, x R x}{} \\[1.5ex]
		\infer[\lrn \land]{(x:\phi\land \psi), \Pi \seq \Sigma}{(x:\phi), (x:\psi), \Pi \seq \Sigma} &
			\infer[\rrn \land]{\Pi \seq \Sigma, (x:\phi \land \psi)}{\Pi \seq \Sigma, (x:\phi) & \Pi \seq \Sigma, (x:\psi)}
		\\[1.5ex]
		\infer[\lrn \lor]{(x:\phi\lor \psi), \Pi \seq \Sigma}{(x:\phi), \Pi \seq \Sigma & (x:\psi), \Pi \seq \Sigma}
		&
		\infer[\rrn \lor]{\Pi \seq \Sigma, (x:\phi \lor \psi)}{\Pi \seq \Sigma, (x:\phi), (x:\psi)}
		\end{array}
		\]
		\[
		\infer[\lrn \to]{ (x:\phi \to \psi), \Pi \seq \Sigma}{ \Pi \seq \Sigma, (x R y) & \Pi \seq \Sigma, (y:\phi) & (y:\psi),\Pi \seq \Sigma} 
		\]
		\vspace{-6.5ex}
		\[
		\begin{array}{cc}
		\\[1.5ex]
		\infer[\rrn \to]{ \Pi \seq \Sigma, (x:\phi \to \psi)}{(xRy),(y:\phi),\Pi \seq \Sigma, (y:\psi)}
		&
		\infer[\rn{pers}]{(x R y),(x:\Gamma), \Pi \seq \Sigma}{(x R y),(x:\Gamma),(y:\Gamma), \Pi \seq \Sigma}
		\\[1.5ex]
	\infer[\lrn \bot]{(x:\bot), \Pi \seq \Sigma}{\bot, \Pi \seq \Sigma}
	&
	\infer[\lrn \bot]{\Pi \seq \Sigma, (x:\bot)}{\Pi \seq \Sigma, \bot}
		\\[1.5ex]
			\infer[\lrn \fatcomma]{(x:\Gamma \fatcomma \Gamma'), \Pi \seq \Sigma}{(x:\Gamma), (x:\Gamma'), \Pi \seq \Sigma} 
			&
				\infer[\rrn \fatsemi]{\Pi \seq \Sigma, (x:\Delta \fatsemi \Delta')}{\Pi \seq \Sigma, (x:\Delta), (x:\Delta')}
    \end{array}
    \]
   \vspace{4mm}
   \hrule
    \caption{Relational Calculus \system{RJ}}
    \label{fig:rj}
\end{figure}

\begin{Corollary} \label{cor:IPLandRJ}
$\Gamma \entails_{\text{\rm IPL}} \Delta$ iff $\Gamma \simpleproves_{\system{IPL}}^\sigma \Delta$
\end{Corollary}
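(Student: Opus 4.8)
The plan is to chain four equivalences: the soundness-and-completeness theorem for relational calculi (Theorem~\ref{thm:snc:relationalcalculi}), an equivalence between $\system{RJ}$ and the relational calculus generated in Section~\ref{sec:ex:ipl:analysis}, the faithfulness-and-adequacy theorem for basic relational calculi (Theorem~\ref{thm:faq:propositionalencoding}), and the equivalence of $\system{LJ}$ and $\system{LJ^+}$ (Proposition~\ref{lem:ljandljplus}). Write $\Omega := \Omega_{\text{IPL}}$. Since $\Omega$ is, by construction, a tractable definition of the semantics $\entails_{\text{IPL}}$, Theorem~\ref{thm:snc:relationalcalculi} gives at once
\[
\Gamma \entails_{\text{IPL}} \Delta \qquad\text{iff}\qquad (x:\Gamma) \proves[R(\Omega)] (x:\Delta).
\]
I would then observe that $\system{RJ}$ proves exactly the basic validity sequents (BVSs) that $\system{R}(\Omega)$ proves: $\system{RJ}$ merely makes the two admissible contractions explicit and presents the synthetic rules attached to the clauses of $\Omega$ in the forward-/back-chaining form discussed after Example~\ref{ex:modal:calculus}, using the invertibility of $\lrn{\fatcomma}$ and $\rrn{\fatsemi}$ to absorb the data-constructor clauses. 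So the right-hand side above is equivalent to $(x:\Gamma) \proves[RJ] (x:\Delta)$.

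The substantial step is to collapse $\system{RJ}$ into a \emph{basic} relational calculus. As it stands, $\system{RJ}$ is not basic: the rules $\rrn{\to}$ and $\lrn{\to}$ introduce an accessibility atom $x R y$ and (for $\rrn{\to}$) a fresh world-variable $y$, and there are the structural rules $\rn{ref}$ and $\rn{pers}$. I would argue that, restricted to proofs of BVSs, $\system{RJ}$ collapses to a basic relational calculus $\system{RJ}^{\circ}$ whose propositional encoding $\nu(\system{RJ}^{\circ})$ is (a presentation of) $\system{LJ^+}$. The idea is that in any $\system{RJ}$-proof of a BVS, each use of $\rrn{\to}$ or $\lrn{\to}$ is, up to permutation of rules, immediately followed by the $\rn{pers}$-steps that copy the whole context to the new world; thereafter the meta-atoms at the old world are inert and, by the world-independence property invoked in Section~\ref{sec:ex:ipl:analysis} (Proposition~\ref{lem:worldindependence}), may be discarded, returning to a BVS and realising precisely the shapes of $\rrn{\to}$ and $\lrn{\to}$ in $\system{LJ^+}$, with $\rn{ref}$ covering the case where the new world is taken to be the old one.

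Verifying that these permutations always succeed — equivalently, that no genuinely multi-world reasoning is needed to derive a BVS — is the main obstacle. It is a normalization argument on $\system{RJ}$-proofs, to be carried out by induction on proof height, with persistence playing the role of the left structural rules and of weakening in $\system{LJ^+}$; care is needed that the fresh-variable side-condition on $\rrn{\to}$ does not obstruct the permutations and that the eager absorption of $\rn{pers}$ terminates. Granting the collapse, Theorem~\ref{thm:faq:propositionalencoding} applies to $\system{RJ}^{\circ}$: it is faithful and adequate for $\nu(\system{RJ}^{\circ}) = \system{LJ^+}$ via the propositional-encoding valuation, which is the map $\sigma$ of the statement. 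Hence $(x:\Gamma) \proves[RJ] (x:\Delta)$ iff $\Gamma \proves[LJ^+] \Delta$, and by Proposition~\ref{lem:ljandljplus} this holds iff $\Gamma \proves[LJ] \Delta$, i.e.\ iff $\Gamma \proves \Delta$ in IPL; unwinding the ergo notation, this last judgement is exactly $\Gamma \simpleproves_{\system{IPL}}^\sigma \Delta$. Concatenating the displayed equivalences proves the corollary, and as a by-product it yields both soundness and completeness of the derived semantics for IPL while bypassing any term- or counter-model construction.
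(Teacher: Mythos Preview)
You have over-shot the target. The paper's proof of this corollary is a single line: it is a direct instance of Theorem~\ref{thm:snc:relationalcalculi}, applied to the tractable definition $\Omega_{\text{IPL}}$ constructed in Section~\ref{sec:ex:ipl:analysis}. The notation $\simpleproves_{\system{IPL}}^\sigma$ in the statement is admittedly nonstandard (and arguably a typo); but from how the corollary is invoked in the proof of the Completeness Theorem---namely, to conclude $(w:\Gamma) \proves[RJ] (w:\Delta)$ from $\Gamma \entails_{\text{IPL}}\Delta$---its intended content is simply that $\Gamma \entails_{\text{IPL}} \Delta$ iff $(x:\Gamma) \proves[RJ] (x:\Delta)$. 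Your first two steps already deliver exactly this: Theorem~\ref{thm:snc:relationalcalculi} gives the equivalence with $\system{R}(\Omega_{\text{IPL}})$-provability, and your observation that $\system{RJ}$ is (up to presentational choices) that calculus finishes the argument.

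Everything you do thereafter---collapsing $\system{RJ}$ to a basic relational calculus via world-independence and eager persistence, invoking Theorem~\ref{thm:faq:propositionalencoding}, and then Proposition~\ref{lem:ljandljplus}---is not part of this corollary. It is a (correct, and close to the paper's own) sketch of the \emph{subsequent} results: the basic system you call $\system{RJ}^{\circ}$ is the paper's $\system{RJ^+}$, your normalization argument is how the paper proves Proposition~\ref{lem:rjpluscomp}, and your final chain is exactly the proof of the Completeness Theorem. So your reasoning is sound but belongs three results downstream; for the present corollary, stop after the first two steps.
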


\begin{proof}
Instance of Theorem~\ref{thm:snc:relationalcalculi}.
\end{proof}

We desire to transform $\system{RJ}$ into a sequent calculus for which it is adequate, which we may then show is a characterization of IPL. Such transformations are discussed in Section~\ref{sec:relationalcalculi:fna}, but the rules of \system{IPL} are slightly too complex for the procedure of that section to apply immediately. Therefore, we require some additional meta-theory.

The complexity comes from the $\to$-clause as it 
 may result in non-BVSs. However, we immediately use persistence to create a composite behaviour that a basic rule can capture. This is because the combined effect yields BVSs whose contents may be partitioned; by design, persistence uses world-variables that do not, and cannot, interact throughout the rest of the proof.

\begin{Definition}[World-independence]
 Let $\Pi$ and $\Sigma$ be lists of meta-formulae. The lists $\Pi$ and $\Sigma$ are world-independent iff the set of world-variable in $\Pi$ is disjoint from the set of world-variables in $\Sigma$.
 
 Let $\sem{S}$ be a tractable semantics and let $\Omega$ be a tractable definition of it. Let $\Pi_1,\Sigma_1$ and $\Pi_2,\Sigma_2$ be world-independent lists of meta-formulae. The semantics $\sem{S}$ has world-independence iff, if $\Omega, \Pi_1, \Pi_2 \metaconsequence \Sigma_1, \Sigma_2$, then either $\Omega, \Pi_1 \metaconsequence \Sigma_1$ or $\Omega, \Pi_2 \metaconsequence  \Sigma_2$.
\end{Definition}

Intuitively, world-independence says that whatever is true at a world in the semantics does not depend on truth at a world not related to it. 

Let $\Pi_i^1,\Pi_i^2,\Sigma_i^1$, and $\Sigma_i^2$ be lists of meta-formulae, for $1 \leq i \leq n$, and suppose that $\Pi_i^1,\Sigma_i^1$ is world-independent from $\Pi_i^2,\Sigma_i^2$. Consider a rule of the following form:
 \[
\infer{\Pi \metaseq \Sigma}{\Pi_1^1,\Pi_1^2 \metaseq \Sigma_1^1,\Sigma_1^2 & ... & \Pi_n^1,\Pi_n^2 \metaseq \Sigma_n^1,\Sigma_n^2 }
\]
Assuming world-independence of the semantics, this rule can be replaced by the following two rules:
 \[
\infer{\Pi \metaseq \Sigma}{\Pi_1^1 \metaseq \Sigma_1^1 & ... & \Pi_n^1 \metaseq \Sigma_n^1 }
\qquad
\infer{\Pi \metaseq \Sigma}{\Pi_2^1 \metaseq \Sigma_2^1 & ... & \Pi_n^2 \metaseq \Sigma_n^2 }
\]
If all the lists were basic, iterating these replacements may eventually yield a set of basic rules with the same expressive power as the original rule.

\begin{Proposition} \label{lem:worldindependence}
    The semantics of IPL --- that is, the semantics $\langle \set{K}, \satisfies \rangle$ defined by $\Omega_{\text{\rm IPL}}$ --- has world-independence.
\end{Proposition}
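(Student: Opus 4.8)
\emph{Plan.} The plan is to prove the contrapositive by a disjoint-union-of-models argument, working semantically via Proposition~\ref{lem:snc:classical}. Since the lists arising in the relational calculus $\system{RJ}$ consist of meta-atoms of the form $(x:\phi)$, $xRy$, or $\bot$ (the meta-quantifiers having been absorbed into synthetic rules), it suffices to establish the property for such lists. So suppose $\Pi_1,\Sigma_1$ and $\Pi_2,\Sigma_2$ are world-independent lists of meta-atoms and that neither $\Omega_{\text{IPL}},\Pi_1 \metaconsequence \Sigma_1$ nor $\Omega_{\text{IPL}},\Pi_2 \metaconsequence \Sigma_2$. By Proposition~\ref{lem:snc:classical}, for $i \in \{1,2\}$ there is an abstraction $\mathfrak{A}_i$ of the fusion language with $\mathfrak{A}_i \metasat \Omega_{\text{IPL}}$, $\mathfrak{A}_i \metasat \chi$ for every $\chi \in \Pi_i$, and $\mathfrak{A}_i \not\metasat \chi$ for every $\chi \in \Sigma_i$. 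Because $\Omega_{\text{IPL}}$ was built so that $:$ is fixed recursively from atomic satisfaction, the ``object-formula part'' of such an abstraction carries no real information; so, replacing each $\mathfrak{A}_i$ by an equivalent abstraction, I would assume each is induced by a genuine intuitionistic model $\mathfrak{M}_i = \langle \langle W_i, R_i\rangle, \llbracket-\rrbracket_i\rangle$ together with an assignment of the world-variables into $W_i$, with the object-formula part taken absolutely (hence common to both). The goal is then to build a single such model witnessing that $\Omega_{\text{IPL}},\Pi_1,\Pi_2 \metaconsequence \Sigma_1,\Sigma_2$ fails, contradicting the hypothesis.

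\emph{Construction.} Take the disjoint union $\mathfrak{M}$: $W := W_1 \uplus W_2$, $R := R_1 \uplus R_2$ with no edges between the two components, and $\llbracket \at{p}\rrbracket := \llbracket\at{p}\rrbracket_1 \cup \llbracket\at{p}\rrbracket_2$ for each atom. Using world-independence of the lists, assign every world-variable occurring in $\Pi_1,\Sigma_1$ its $\mathfrak{M}_1$-value and every one occurring in $\Pi_2,\Sigma_2$ its $\mathfrak{M}_2$-value (remaining variables arbitrarily); this is well defined precisely because the two blocks of world-variables are disjoint. \emph{Key lemma (locality).} For every $w \in W_i$ and every $\lang{J}$-formula $\phi$, $\mathfrak{M}, w \satisfies \phi$ iff $\mathfrak{M}_i, w \satisfies \phi$; this goes by a routine induction on $\phi$, the atomic and $\land,\lor,\bot$ cases being immediate and the $\to$ case using that ``for any $u$ with $wRu$'' ranges only over component $i$ since $R$ has no cross-component edges. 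Hence each meta-atom of $\Pi_1 \cup \Pi_2$ is true in $\mathfrak{M}$ and each of $\Sigma_1 \cup \Sigma_2$ is false in $\mathfrak{M}$; and $\mathfrak{M} \metasat \Omega_{\text{IPL}}$, because reflexivity holds by construction of $R$ while the four satisfaction-clauses and the persistence axiom are universal statements about an arbitrary world (and, for $\to$ and persistence, worlds accessible from it), all in a single component, so locality reduces them to the corresponding facts in some $\mathfrak{M}_i$. This contradicts $\Omega_{\text{IPL}},\Pi_1,\Pi_2 \metaconsequence \Sigma_1,\Sigma_2$, which is what had to be shown.

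\emph{Main obstacle.} The delicate point is the passage between raw first-order abstractions of $\Omega_{\text{IPL}}$ and honest intuitionistic models: an abstraction interprets the object-atoms as constants and the object-operators as function symbols, so a naive disjoint union of two abstractions is ill-defined (there is no single element for a term such as $\land(\at{p},\at{q})$ to denote). The fix is the normalisation remarked on above --- exploiting that $\Omega_{\text{IPL}}$ determines $:$ recursively from atomic satisfaction, so the object-formula part of an abstraction is immaterial and may be taken absolutely --- but making this precise needs a small strengthening of Definition~\ref{def:definition}, namely that $\Omega_{\text{IPL}}$-abstractions correspond, up to that immaterial part, to intuitionistic models equipped with a variable assignment. (The absence of transitivity in $\Omega_{\text{IPL}}$ is irrelevant here: disjoint union preserves reflexivity, persistence, and would preserve transitivity too.) A cleaner but heavier alternative would be to argue entirely inside $\system{G3c}(\Omega_{\text{IPL}})$ (equivalently $\system{RJ}$), using cut-elimination and the observation that a cut-free proof of $\Omega_{\text{IPL}},\Pi_1,\Pi_2 \metaseq \Sigma_1,\Sigma_2$ splits into one using only a single block; this, however, requires cut-elimination for these labelled systems and careful tracking of the structural rules, so the semantic route above is preferable. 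Everything else --- the generated-submodel induction and the transfer of the finitely many axioms of $\Omega_{\text{IPL}}$ --- is routine.
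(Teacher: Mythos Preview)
Your argument is essentially correct and takes a genuinely different route from the paper's own proof. The paper argues proof-theoretically: by induction on the number of resolutions in a $\system{G3c}$-proof of $\Omega_{\text{IPL}},\Pi_1,\Pi_2 \metaseq \Sigma_1,\Sigma_2$, it shows that each resolution preserves the world-partitioned form of the sequent. The only delicate case is a resolution with the $\to$-clause on the left, where the universal quantifier over worlds could in principle be instantiated with a variable from the other block; the paper disposes of this by observing that the resulting premiss $\Pi_1',\Pi_2 \metaseq \Sigma_1,\Sigma_2, wRu$ (with $w$ and $u$ in different blocks) can only close via $wRu$ appearing in the antecedent, which world-independence rules out, so that branch is dead and may be weakened away.

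Your semantic disjoint-union argument is the natural model-theoretic counterpart and would generalize more readily to other tractable theories whose frame conditions are preserved under disjoint union. The normalization step you flag is real but fillable: given an arbitrary abstraction $\mathfrak{A}$ of $\Omega_{\text{IPL}}$ with universe $\mathbb{U}$, one defines an intuitionistic model on $\mathbb{U}$ with accessibility $\llbracket R\rrbracket^{\mathfrak{A}}$ and atomic valuation $\llbracket\at{p}\rrbracket := \{w : (w,\llbracket\at{p}\rrbracket^{\mathfrak{A}}) \in \llbracket{:}\rrbracket^{\mathfrak{A}}\}$, and the bi-implication clauses in $\Omega_{\text{IPL}}$ give, by induction on $\phi$, that $w\satisfies\phi$ iff $(w,\llbracket\phi\rrbracket^{\mathfrak{A}})\in\llbracket{:}\rrbracket^{\mathfrak{A}}$; persistence transfers from the axiom. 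One remark: you describe the proof-theoretic alternative as ``heavier'' and requiring cut-elimination for the labelled systems, but in fact the paper's inductive argument is short and needs no cut-elimination --- it works directly in $\system{G3c}$, where the required metatheory is standard. So the two routes are of comparable weight; yours is more portable, the paper's stays closer to the syntactic machinery the rest of Section~\ref{sec:ex:ipl} relies on.
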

\begin{proof}
If $\Omega_{\text{IPL}}, \Pi_1, \Pi_2 \proves \Sigma_1, \Sigma_2$, then there is a $\system{G3}$-proof $\mathcal{D}$ of it. We proceed by induction on the number of resolutions in such a proof.

\textsc{Base Case.} Recall, without loss of generality, an instantiation of any clause from $\Omega_{\text{IPL}}$ is a resolution. Therefore, if $\mathcal{D}$ contains no resolutions, then $\Omega_{\text{IPL}}, \Pi_1, \Pi_2 \proves \Sigma_1, \Sigma_2$ is an instance of $\taut$. In this case, either $\Omega_{\text{IPL}}, \Pi_1 \proves \Sigma_1$ or $\Omega_{\text{IPL}}, \Pi_2 \proves \Sigma_2$ is also an instance of $\taut$, by world-independence.

\textsc{Induction Step.} After a resolution of a sequent of the form $\Omega_{\text{IPL}}, \Pi_1, \Pi_2 \proves \Sigma_1, \Sigma_2$, one returns a meta-sequent of the same form --- that is, a meta-sequent in which we may partition the meta-formulae in the antecedent and succedent into world-independent multisets. This being the case, the result follows immediately from the induction hypothesis. 

The only non-obvious case is in the case of a closed resolution using the $\to$-clause in the antecedent because they have universal quantifiers that would allow one to produce a meta-atom that contains both a world from $\Sigma_1, \Pi_1$ and $\Sigma_2, \Pi_2$ simultaneously, thereby breaking world-independence.

Let $\Pi_1 = \Pi_1', (w \satisfies \phi \to \psi)$ and suppose $u$ is a world-variable appearing in $\Sigma_2, \Pi_2$. Consider the following computation --- for readability, we suppress $\Omega_{\text{IPL}}$:

\begin{scprooftree}{0.95}
\AxiomC{$ \Pi_1', \Pi_2 \seq \Sigma_1, \Sigma_2, w R u$}
\AxiomC{$ \Pi_1', \Pi_2 \seq \Sigma_1, \Sigma_2, (u : \phi) $}
\RightLabel{$\lrn \with$}
\BinaryInfC{$\Pi_1', \Pi_2 \seq \Sigma_1, \Sigma_2, (w R u) \with (u : \phi)$}
\AxiomC{$ \Pi_1', \Pi_2, (u :\psi) \seq \Sigma_1, \Sigma_2 $}
\doubleLine
\RightLabel{$\lrn{\Rightarrow }$}
\BinaryInfC{$ \Pi_1', (w R u \with (u : \phi) \Rightarrow u : \psi), \Pi_2 \seq \Sigma_1, \Sigma_2$}
\RightLabel{$\lrn \forall$}
\UnaryInfC{$\Pi_1', \forall x(w R x \with x : \phi \Rightarrow x : \psi), \Pi_2 \seq \Sigma_1, \Sigma_2$}
\RightLabel{$\to$-clause}
\UnaryInfC{$\Pi_1', (w : \phi \to \psi), \Pi_2 \seq \Sigma_1, \Sigma_2$}
\end{scprooftree}

The $w R u$ may be deleted (by $\lrn \weak$) from the leftmost premiss because the only way for the meta-atom to be used in the remainder of the proof is if $w R u$ appears in the context, but this is impossible (by world-independence). Hence, without loss of generality, this branch reduces to $\Pi_1', \Pi_2 \proves \Sigma_1, \Sigma_2$. Each premiss now has the desired form.
\end{proof}

Using world-independence, we may give a relational calculus $\system{RJ^+}$ characterizing the semantics comprised of basic rules. It arises from analyzing the r\^ole of the atom $x R y$ in \system{RJ} in an effort to get rid of it. Essentially, we incorporate it in $\rrn \to$, which was always its purpose --- see Section~\ref{sec:ex:ipl:analysis}.

\begin{Definition}[System \system{RJ^+}] System $\system{RJ^+}$ is comprised of the rules in Figure~\ref{fig:rjplus}, in which $\lrn{\fatcomma}$ and $\rrn{\fatsemi}$ are invertible.
\end{Definition}

\begin{figure}[t]
\hrule
  \vspace{4mm}
	\[
	\begin{array}{c@{\qquad}c}
\infer[\cont]{\Phi, \Pi \seq \Sigma}{\Phi, \Phi, \Pi \seq \Sigma}
  \qquad
  \infer[\taut]{\Phi, \Pi \seq \Sigma, \Phi}{}
		&
		\infer[\bot]{\bot, \Pi \seq \Sigma}{} \qquad \infer[\rn{ref}]{\Pi \seq \Sigma, x R x}{} \\[1.5ex]
		\infer[\lrn \land]{(x:\phi\land \psi), \Pi \seq \Sigma}{(x:\phi), (x:\psi), \Pi \seq \Sigma} &
			\infer[\rrn \land]{\Pi \seq \Sigma, (x:\phi \land \psi)}{\Pi \seq \Sigma, (x:\phi) & \Pi \seq \Sigma, (x:\psi)}
		\\[1.5ex]
		\infer[\lrn \lor]{(x:\phi\lor \psi), \Pi \seq \Sigma}{(x:\phi), \Pi \seq \Sigma & (x:\psi), \Pi \seq \Sigma}
		&
		\infer[\rrn \lor]{\Pi \seq \Sigma, (x:\phi \lor \psi)}{\Pi \seq \Sigma, (x:\phi), (x:\psi)}
		\\[1.5ex]
		\infer[\lrn \to]{ (x:\phi \to \psi), \Pi \seq \Sigma}{ \Pi \seq \Sigma, (x:\phi) & (x:\psi),\Pi \seq \Sigma} 
		&
		\infer[\rrn \to]{ \Pi \seq \Sigma, (x:\phi \to \psi)}{(y:\phi),\Pi[x \mapsto y] \seq (x:\psi)}
		\\[1.5ex]
	\infer[\lrn \bot]{(x:\bot), \Pi \seq \Sigma}{\bot, \Pi \seq \Sigma}
	&
	\infer[\rrn \bot]{\Pi \seq \Sigma, (x:\bot)}{\Pi \seq \Sigma, \bot}
		\\[1.5ex]
			\infer[\lrn \fatcomma]{(x:\Gamma \fatcomma \Gamma'), \Pi \seq \Sigma}{(x:\Gamma), (x:\Gamma'), \Pi \seq \Sigma} 
			&
				\infer[\rrn \fatsemi]{\Pi \seq \Sigma, (x:\Delta \fatsemi \Delta')}{\Pi \seq \Sigma, (x:\Delta), (x:\Delta')}
    \end{array}
    \]
  \vspace{4mm}
  \hrule
    \caption{Relational Calculus \system{RJ^+}}
    \label{fig:rjplus}
\end{figure}

\begin{Proposition} \label{lem:rjpluscomp}
    $\Gamma \proves[RJ] \Delta$ iff $\Gamma \proves[RJ^+] \Delta$
\end{Proposition}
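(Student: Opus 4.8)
The plan is to prove the two inclusions $\proves[RJ]\ \subseteq\ \proves[RJ^+]$ and $\proves[RJ^+]\ \subseteq\ \proves[RJ]$ separately; throughout we read the statement as concerning provability of validity sequents $(x:\Gamma)\seq(x:\Delta)$, the sequents common to the intended use of both calculi and the only ones mentioning no accessibility atom. In each direction the aim is to exhibit the rules of one calculus as derived (or admissible) behaviour of the other.

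The easy direction is $\proves[RJ^+]\ \subseteq\ \proves[RJ]$. Every rule of $\system{RJ^+}$ other than $\lrn\to$ and $\rrn\to$ occurs verbatim in $\system{RJ}$, so it suffices to derive those two. For $\lrn\to$, instantiate the eigen-parameter $y:=x$ in the $\system{RJ}$-rule $\lrn\to$ and discharge the premiss $\Pi\seq\Sigma,(xRx)$ by $\rn{ref}$; the two remaining premisses are exactly those of the $\system{RJ^+}$-rule. For $\rrn\to$, from the $\system{RJ^+}$-premiss build the $\system{RJ}$-premiss $(xRy),(y:\phi),\Pi\seq\Sigma,(y:\psi)$ by one application of $\rn{pers}$ for each $(x:\chi)$ occurring in $\Pi$, together with admissible weakening (routine for this $\system{G3c}$-style calculus), and then apply the $\system{RJ}$-rule $\rrn\to$; freshness of $y$ is inherited. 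Since each translation step preserves its conclusion, a $\system{RJ^+}$-proof is rewritten rule-by-rule into a $\system{RJ}$-proof.

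The substantive direction is $\proves[RJ]\ \subseteq\ \proves[RJ^+]$, i.e.\ eliminating accessibility atoms and the rules $\rn{ref}$ and $\rn{pers}$. The structural fact to exploit is that, in an $\system{RJ}$-proof of a validity sequent, every accessibility atom present in some sequent was introduced either by $\rn{ref}$ (so $y=x$) or by an instance of $\rrn\to$ that introduced $y$ as an eigenvariable, and its only consumers are $\rn{pers}$-steps along that edge, the first premiss of $\lrn\to$-instances at the pair $(x,y)$ (which are closed immediately by $\taut$), and further $\rn{pers}$-steps. I would then proceed as follows: (a) permute every $\rn{pers}$-step using $xRy$ upward until it sits directly below the $\rrn\to$ that introduced $xRy$, so that each such $\rrn\to$ is followed by a maximal block of $\rn{pers}$-steps — these permutations are routine, as $\rn{pers}$ commutes with all rules not acting on its displayed formulae; (b) collapse each block ``$\rrn\to$-then-$\rn{pers}^{*}$'' into one application of the $\system{RJ^+}$-rule $\rrn\to$, whose substitution $\Pi[x\mapsto y]$ records exactly the cumulative effect of persistence, the $x$-indexed copies left behind becoming inert by freshness of $y$, as the paragraph preceding the proposition anticipates; (c) replace each $\lrn\to$ applied at a pair $(x,y)$ with $y\neq x$ by the $\system{RJ^+}$-rule $\lrn\to$ at the home world $x$, using world-independence (Proposition~\ref{lem:worldindependence}) and persistence to see that no deductive power is lost; (d) delete the now-vacuous $\rn{ref}$-instances. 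After (a)--(d) no accessibility atom remains and the tree is a $\system{RJ^+}$-proof of the same validity sequent.

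The main obstacle is step (c). Steps (a), (b) and (d) are bookkeeping, but showing that a $\lrn\to$ applied at a world $y$ strictly accessible from $x$ is subsumed by its application at $x$ genuinely requires that a subderivation split along world-independent sets of worlds cannot be made to cooperate — which is precisely what Proposition~\ref{lem:worldindependence} provides. Concretely, one must reorganise the three-premiss $\system{RJ}$-instance at $(x,y)$, together with the persistence steps feeding it, into the two-premiss $\system{RJ^+}$-instance at $x$; the cleanest route is to use world-independence to partition the relevant subderivation along the $x$- and the $y$-worlds and reassemble it. Care is needed because, unlike $\rn{pers}$, the left rule $\lrn\to$ does not by itself respect the direction of $R$, so the reorganisation genuinely relies on $y$ having been introduced fresh and hence world-independent from the part of the sequent against which it must ultimately be matched.
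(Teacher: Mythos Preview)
Your overall plan matches the paper's: the easy direction derives $\system{RJ^+}$ in $\system{RJ}$, and the hard direction normalises $\system{RJ}$-proofs so that persistence is eager after $\rrn\to$ and then collapses. Your treatment of the easy direction is correct and more explicit than the paper's.

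There is, however, a genuine error in your step~(c). You propose to replace an $\system{RJ}$-instance of $\lrn\to$ on $(x:\phi\to\psi)$ at world $y\neq x$ by the $\system{RJ^+}$-rule $\lrn\to$ \emph{at the home world $x$}, with premisses $\Pi\seq\Sigma,(x:\phi)$ and $(x:\psi),\Pi\seq\Sigma$. This does not work: persistence only propagates information from $x$ to $y$, not back, so from provability of $\Pi\seq\Sigma,(y:\phi)$ one cannot in general obtain $\Pi\seq\Sigma,(x:\phi)$. A concrete failure is the sequent $(wRu),(u{:}p\to q),(uRv),(v{:}p)\seq(v{:}q)$: the $\system{RJ}$-instance $\lrn\to$ on $(u{:}p\to q)$ at world $v$ succeeds, but rerouting to world $u$ would require $\Pi\seq\Sigma,(u{:}p)$, which is not derivable. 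Your appeal to world-independence cannot rescue this, because the atom $xRy$ sits in $\Pi$ and mentions both worlds, so the disjointness hypothesis of Proposition~\ref{lem:worldindependence} is violated; there is no clean partition into an $x$-part and a $y$-part here.

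The correct fix --- and what the paper's (rather terse) argument is gesturing at --- is to go to world $y$, not $x$. After eager persistence (your step~(a)), whenever $xRy\in\Pi$ and $(x:\phi\to\psi)\in\Pi$, the copy $(y:\phi\to\psi)$ is also in $\Pi$. So instead of $\lrn\to$ on $(x:\phi\to\psi)$ at world $y$, apply $\lrn\to$ on the persistence-copy $(y:\phi\to\psi)$ at world $y$: the first premiss now closes by $\rn{ref}$, and the second and third premisses are \emph{identical} to those of the original instance, so the subproofs transfer unchanged. This reflexive instance is then exactly the $\system{RJ^+}$-rule $\lrn\to$ at $y$. No world-independence is needed for this step; the only place world-independence is genuinely at work is in justifying the shape of $\system{RJ^+}$'s $\rrn\to$ (the dropping of $\Sigma$), which you correctly flag in step~(b).
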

\begin{proof}
     Every $\system{RJ}$-proof can be simulated in $\system{RJ}^+$ by using  (i.e., reduce with) $\rn{pers}$ eagerly after using $\lrn \to$, and by Thus, $\Gamma \proves[RJ] \Delta$ implies $\Gamma \proves[RJ^+] \Delta$. It remains to show that $\Gamma \proves[RJ^+] \Delta$ implies $\Gamma \proves[RJ] \Delta$. 
     
    Without loss of generality, in $\system{RJ^+}$ one may always use $\rn{pers}$ immediately after $\rrn \to$, as otherwise the use of $\lrn \to$ could be postponed. Similarly, without loss of generality, $\lrn \to$ always instantiates with $y = x$ --- this follows as we require the leftmost branch of the following to close, which it does by $\rn{ref}$:
    \begin{prooftree}
    \AxiomC{$\Pi \seq \Sigma, (xR x)$}
    \AxiomC{$\Pi \seq \Sigma, (x:\phi)$}
    \AxiomC{$\Pi, (x:\psi) \seq \Sigma$}
    \TrinaryInfC{$\Pi, (x:\phi \to \psi) \seq \Sigma$}
    \end{prooftree}
     A $\system{RJ^+}$-proof following these principles maps to a $\system{RJ}$-proof simply by collapsing the instances of $\lrn\to$ and $\rrn \to$ in the former to capture $\lrn\to$ and $\rrn \to$ in the latter.
\end{proof}

Observe that the propositional encoding of $\system{RJ^+}$ is precisely $\system{LJ^+}$. The connexion to IPL follows immediately:

\begin{Corollary} \label{cor:rjplusandljplus}
    System $\system{RJ^+}$ is faithful and adequate with respect to $\system{LJ^+}$.
\end{Corollary}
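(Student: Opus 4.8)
The plan is to obtain the corollary as an immediate instance of Theorem~\ref{thm:faq:propositionalencoding}, which says that a basic relational calculus is faithful and adequate with respect to its propositional encoding. Concretely, I would argue two things: (a) that $\system{RJ^+}$ is a basic relational calculus, i.e.\ every rule of Figure~\ref{fig:rjplus} is a basic rule over basic validity sequents; and (b) that the propositional encoding $\nu(\system{RJ^+})$ is \emph{literally} $\system{LJ^+}$. Given (a) and (b), Theorem~\ref{thm:faq:propositionalencoding} gives faithfulness and adequacy of $\system{RJ^+}$ with respect to $\nu(\system{RJ^+}) = \system{LJ^+}$, which is the claim; concretely, faithfulness is witnessed by applying $\nu$ pointwise to an $\system{RJ^+}$-proof and adequacy by reintroducing a single world-variable into an $\system{LJ^+}$-proof, exactly as in the proof of Theorem~\ref{thm:faq:propositionalencoding}.

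For (b) I would fix the interpretation of the classical conjunctive and disjunctive data-constructors required in Section~\ref{sec:relationalcalculi:fna} to be $\circ := \fatcomma$ (on the left) and $\bullet := \fatsemi$ (on the right); the data-constructor condition for these is exactly the observation in Section~\ref{sec:ex:ipl:analysis} that $\fatcomma$ behaves as $\land/\with$ and $\fatsemi$ as $\lor/\parr$. Then $\nu$ decodes a basic validity sequent $(w:\Gamma_1),\dots,(w:\Gamma_m)\seq(w:\Delta_1),\dots,(w:\Delta_n)$ to $\Gamma_1\fatcomma\cdots\fatcomma\Gamma_m\seq\Delta_1\fatsemi\cdots\fatsemi\Delta_n$, and I would simply chase each rule of Figure~\ref{fig:rjplus} through this decoding: $\taut$ maps to $\ax$, $\cont$ maps to $\lrn\cont$, and $\lrn\land,\rrn\land,\lrn\lor,\rrn\lor,\lrn\bot,\rrn\bot,\lrn\fatcomma,\rrn\fatsemi$ map to the identically-named rules of $\system{LJ^+}$ (Figure~\ref{fig:ljplus}). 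The one case carrying content is implication: $\lrn\to$ decodes to the $\system{LJ^+}$ rule $\lrn\to$, and $\rrn\to$ decodes to $\rrn\to$, where the substitution $\Pi[x\mapsto y]$ together with the freshness of $y$ precisely encodes that $\system{LJ^+}$'s $\rrn\to$ discards the side succedent. This is routine but needs to be written out carefully to confirm $\nu(\system{RJ^+})=\system{LJ^+}$ as \emph{sets of rules}.

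The part I expect to be the real obstacle is (a): the presentation of $\system{RJ^+}$ in Figure~\ref{fig:rjplus} still contains the rule $\rn{ref}$ and mentions the relational atom $xRx$, which are not, strictly, ingredients of a \emph{basic} calculus, so Theorem~\ref{thm:faq:propositionalencoding} does not apply verbatim. The remedy is to appeal to how $\system{RJ^+}$ was built from $\system{RJ}$: by Propositions~\ref{lem:rjpluscomp} and~\ref{lem:worldindependence}, the $\lrn\to$ and $\rrn\to$ rules of $\system{RJ^+}$ no longer create or consume $R$-atoms (the $(xRy)$-premiss of $\lrn\to$ and the $(xRy)$-context of $\rrn\to$ from $\system{RJ}$ have been absorbed via reflexivity and persistence). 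Hence in any $\system{RJ^+}$-proof whose end-sequent is a basic validity sequent, no $R$-atom ever occurs, so $\rn{ref}$ (and the absent $\rn{pers}$) is never invoked and the proof uses only basic rules. Making this ``the $R$-apparatus is inert on proofs of basic validity sequents'' argument precise — so that $\system{RJ^+}$ genuinely falls under the hypothesis of Theorem~\ref{thm:faq:propositionalencoding} — is the only non-mechanical step; everything else is bookkeeping already licensed by the results proved above.
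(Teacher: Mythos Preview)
Your approach is exactly the paper's: the proof reads, in its entirety, ``Instance of Theorem~\ref{thm:faq:propositionalencoding},'' preceded by the one-line observation that the propositional encoding of $\system{RJ^+}$ is precisely $\system{LJ^+}$. You have simply been more conscientious than the paper in flagging the residual $\rn{ref}$ rule and the bare $\bot$ in Figure~\ref{fig:rjplus} as not strictly fitting the definition of a basic relational calculus; the paper silently relies on the same ``inert on basic validity sequents'' observation you spell out.
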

\begin{proof}
Instance of Theorem~\ref{thm:faq:propositionalencoding}.
\end{proof}

\begin{Theorem}[Completeness]
    If $\Gamma \entails_{\text{\rm IPL}}\Delta$, then $\Gamma \proves \Delta$.
\end{Theorem}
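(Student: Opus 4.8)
The plan is to transport the semantic entailment $\Gamma \entails_{\text{\rm IPL}} \Delta$ down to $\system{LJ}$-provability purely by composing the correctness results already in hand, so that no fresh induction on proofs is required; indeed, since $\proves$ is in this section by definition $\proves[LJ]$, and since $\system{LJ}$ and $\system{LJ^+}$ prove the same sequents by Proposition~\ref{lem:ljandljplus}, it suffices to produce an $\system{LJ^+}$-proof of $\Gamma \seq \Delta$ from the hypothesis. Concretely, from $\Gamma \entails_{\text{\rm IPL}} \Delta$ I would obtain successively an $\system{RJ}$-proof of $(x:\Gamma)\seq(x:\Delta)$, then an $\system{RJ^+}$-proof of the same sequent, then --- by valuation --- an $\system{LJ^+}$-proof of $\Gamma \seq \Delta$, and finally, via Proposition~\ref{lem:ljandljplus}, the desired $\system{LJ}$-proof.

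For the first step I would use Corollary~\ref{cor:IPLandRJ}: the relational calculus $\system{RJ} = \system{R}(\Omega_{\text{\rm IPL}})$ is sound and complete for the semantics $\langle \set{K}, \satisfies \rangle$ determined by $\Omega_{\text{\rm IPL}}$ (an instance of Theorem~\ref{thm:snc:relationalcalculi}), so $\Gamma \entails_{\text{\rm IPL}} \Delta$ furnishes an $\system{RJ}$-proof of $(x:\Gamma)\seq(x:\Delta)$. For the second step I would invoke Proposition~\ref{lem:rjpluscomp} to pass to an $\system{RJ^+}$-proof $\mathcal{R}$ of $(x:\Gamma)\seq(x:\Delta)$; the genuine content of this step is Proposition~\ref{lem:worldindependence} (world-independence of the IPL semantics), which is exactly what lets the combined use of $\lrn\to$ and $\rn{pers}$ in $\system{RJ}$ be realised by the basic rule $\lrn\to$ of $\system{RJ^+}$, so that $\system{RJ^+}$ is a \emph{basic} relational calculus. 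For the third step I would apply the faithfulness half of Corollary~\ref{cor:rjplusandljplus}: since $\nu(\system{RJ^+}) = \system{LJ^+}$, faithfulness (Definition~\ref{def:fna}) says the valuation $\nu(\mathcal{R})$ is an $\system{LJ^+}$-proof, and because $(x:\Gamma)\seq(x:\Delta)$ is a basic validity sequent its $\nu$-image is $\lfloor (x:\Gamma) \rfloor_\circ \seq \lfloor (x:\Delta) \rfloor_\bullet = \Gamma \seq \Delta$; hence $\Gamma \proves[LJ^+] \Delta$, and then $\Gamma \proves[LJ] \Delta$, i.e.\ $\Gamma \proves \Delta$, by Proposition~\ref{lem:ljandljplus}.

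Each of these steps is routine once the cited apparatus is available --- which is why the theorem is, as the surrounding text anticipates, a formality: the work has been front-loaded into Theorem~\ref{thm:snc:relationalcalculi}, Theorem~\ref{thm:faq:propositionalencoding}, and Proposition~\ref{lem:worldindependence}. The one place where a little care is needed is the third step: one must check that $(x:\Gamma)\seq(x:\Delta)$ is indeed a basic validity sequent and that the valuation $\nu$ returns it precisely to the intended object-sequent $\Gamma \seq \Delta$ (and not some variant), so that faithfulness yields an $\system{LJ^+}$-proof \emph{of $\Gamma \seq \Delta$}; relatedly, one should note that the $\system{RJ^+}$-reduction $\mathcal{R}$ supplied by $\proves[RJ^+]$-provability is a complete reduction whose leaves are closed by $\taut$, $\bot$, or $\rn{ref}$, which is what makes Definition~\ref{def:fna} applicable. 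Nothing deeper is involved: this argument is the mirror-image of the soundness direction, and, pleasingly, it bypasses any term- or counter-model construction, in keeping with the method of Section~\ref{sec:ex:relationalcalculi}.
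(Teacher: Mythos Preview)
Your proposal is correct and follows exactly the same chain of implications as the paper's proof: Corollary~\ref{cor:IPLandRJ}, then Proposition~\ref{lem:rjpluscomp}, then Corollary~\ref{cor:rjplusandljplus}, then Proposition~\ref{lem:ljandljplus}. Your additional remarks about verifying that $(x:\Gamma)\seq(x:\Delta)$ is a basic validity sequent and that $\nu$ returns precisely $\Gamma \seq \Delta$ are well-taken, though the paper's own proof leaves these checks implicit; one small inaccuracy in your commentary is that it is $\rrn\to$ (not $\lrn\to$) that absorbs $\rn{pers}$ in passing from $\system{RJ}$ to $\system{RJ^+}$.
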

\begin{proof}
We have the following:
\[
\begin{array}{lllr}
     \hspace{2cm} \Gamma \entails_{\text{\rm IPL}}\Delta & \text{implies} & (w:\Gamma) \proves[RJ] (w:\Delta) & \hspace{1cm} \text{(Corollary~\ref{cor:IPLandRJ})} \\
     & \text{implies} & (w:\Gamma) \proves[RJ^+] (w:\Delta) & \text{(Proposition~\ref{lem:rjpluscomp})} \\
     & \text{implies} & \Gamma \proves[LJ^+] \Delta & \text{(Corollary~\ref{cor:rjplusandljplus})} \\
     & \text{implies} & \Gamma \proves[LJ] \Delta & \text{(Proposition~\ref{lem:ljandljplus})}
\end{array}
\]
Since $\system{LJ}$ characterizes IPL, this completeness the proof.
\end{proof}

Thus, we have derived a semantics of IPL for $\system{LJ}$ and proved its soundness and completeness using the constraint systems. This semantics is not quite Kripke's one~\cite{Kripke1963}, which insists that $R$ be transitive, thus rendering it a pre-order. This requirement is naturally seen from the connexion to Heyting algebra and the modal logic S4. In the analysis of Section~\ref{sec:ex:ipl:analysis}, from which the semantics in this paper comes, there was no need for transitivity; the proofs of soundness and completeness go through without it. 

One may add transitivity --- that is, the meta-formula $\forall x,y,z(xRy \with yRz \implies xRz)$ --- to $\Omega_{\text{IPL}}$ and proceed as above, adding the following rule to $\system{RJ}$:
\[
\infer{xRy,yRz,\Pi \seq \Sigma}{xRy,yRz,xRz,\Pi \seq \Sigma}
\]
The proof of completeness passes again through $\system{RJ^+}$ by observing that eagerly using persistence does all the work required of transitivity; that is, according to the eager use of $\rn{pers}$, in a sequent $xRy,yRz,\Pi \seq \Sigma$, the set $\Pi$ is of the form $\Pi'[x \mapsto y] \cup \Pi'[y \mapsto z]$, so that whatever information was essential about (the world denoted by) $x$ is already known about (the world denoted by) $z$ by passing through (the world denoted by) $y$. 

This concludes the case-study of IPL: we have used constraint systems to decompose it into classical logic from which we derived a semantics and proved soundness and completeness. Adding other axioms to $\Omega_{\text{IPL}}$, ones that are not redundant in the above sense recovers various intermediate logics. In this way, constraint systems offer a uniform and modular approach to studying them, which we leave as future work.

\section{Extension: First-order Logic} \label{sec:ext:fol} 

So far, we have concentrated on propositional logics to enable a uniform account of constraint systems across a large class of logics. Nevertheless, there is nothing within the paradigm that is inherently propositional. In this section, we extend the phenomenon of decomposing a logical system according to its combinatorial aspect and algebraic aspects to the setting of first-order logic, as in the case of the original example of a constraint system (i.e., RDvBC in Section~\ref{sec:ex:bi}), constraint systems have computational advantages. Therefore, we illustrate the extension to first-order logic by application to \emph{logic programming}.

Logic Programming (LP) is the programming language paradigm whose operational semantics is based on proof-search (see, for example, Miller et al.~\cite{Miller91}). It is a core discipline in (symbolic) artificial intelligence as proof-search is used to characterize reasoning. The central part of LP is a step known as \emph{resolution}, and the most challenging aspect of resolution is a process called \emph{unification}; the output of the execution of a configuration in LP is a \emph{unifier}. A resolution in LP is a reductive application of a quantifier left-rule combined with an implication left-rule in a sequent calculus; unification is the choice of substitution in the quantifier rule. 

In this section, we show how algebraic constraints may handle unification. The authors have discussed this idea in earlier work~\cite{Samsonschrift}, but in a limited way, as the underlying framework of algebraic constraints had yet to be developed. 

\subsection{A Basic Logic Programming Language} \label{sec:ext:fol:blp}

The Basic Logic Programming language (BLP) is based on uniform proof-search in the hereditary Harrop fragment of intuitionistic logic (see Miller et al.~\cite{Miller91,Miller1989}), which we think of as the \emph{basic logic} (B).
	
	Fix a first-order alphabet  $\lang{A} = \langle \set{R}, \set{F}, \set{K}, \set{V} \rangle$. Denote the set of atoms by $\ATOMS$.
	
	\begin{Definition}[Goal formula, Definite clauses, Programs, Query]
    	The set of goal formulae $\set{G}$ and definite clauses $\set{D}$ are the sets of formula $G$ and $D$ defined as follows, respectively:
    	\[
		\begin{array}{rcl}
		G & ::= &   A \in \ATOMS \mid  G \mid D \to G \mid G \wedge G \mid G \vee G \\
		D & ::= &  A \in \ATOMS \mid  D \mid G \to A \mid D \wedge D 
		\end{array}
		\]
		 A set $P$ of definite formula is a \emph{program}. A query is a pair $P \seq G$ in which $P$ is a program and $G$ is goal-formula.
	\end{Definition}
	
There is an apparent lack of quantifiers in the language. However, restricting attention to goal formulae and definite clauses means the quantifiers can be suppressed without loss of information: definite clauses containing a variable as universally quantified, and goal formulae containing a variable as existentially quantified.

\begin{Definition}[Substitution]
    A substitution is a mapping $\theta : \set{V} \to \set{T}$. If $\phi$ is a formula (i.e., a definite clause or a goal), then $\phi\theta$ is the formula that results from replacing every occurrence of a variable in $\phi$ by its image under $\theta$.
\end{Definition}

Having fixed a program $P$, one gives the system a goal $G$ with the desire of finding a substitution $\theta$ such that $P \proves G\theta$ obtains in IL. The substitution $\theta$ is the unifer. It is the object that the language $\system{BLP}$ computes upon an input query. The operational semantics for BLP is given by proof-search in \system{LB}.
	
	\begin{Definition}[Sequent Calculus $\system{LB}$]
	   Sequent calculus \system{LB} comprises the rules of Figure~\ref{fig:blp}.
	\end{Definition}
	
		\begin{figure}[t] 
		\hrule
  \vspace{4mm}
  \[
  \begin{array}{c}
				\infer[\rrn \to]{P \seq D \to G}{P\fatcomma D \seq G} \qquad 
				\infer[\rrn \land]{P \seq G_0 \land G_1}{P \seq G_0 & P \seq G_1} \qquad 
				\infer[\rrn \lor]{P \seq G_0 \lor G_1}{P \seq G_i}
				\qquad
			 \infer[\ax]{P\fatcomma A \seq A}{} 
			
				\\[1.5ex]
			
				\infer[\lrn \to ]{ P\fatcomma G \to A \seq A}{P\fatcomma G \to A \seq G} \qquad 
				\infer[\lrn \land ]{ P\fatcomma D_0 \land D_1 \seq A}{P\fatcomma D_0\fatcomma D_1 \seq A} \qquad 
				\infer[\lrn \forall ]{ P\fatcomma D \seq A}{P\fatcomma D\fatcomma D\theta \seq A}
				\qquad
					\infer[\rrn \exists]{P \seq G}{P \seq G\theta} 

    \end{array}
			\]
\vspace{4mm}
\hrule
		\caption{Sequent Calculus \system{LB}}
		\label{fig:blp}
	\end{figure}

The operational semantics of BLP is as follows: Beginning with the query $P \seq G$, one gives a candidate $\theta$ and checks by reducing in \system{LB} (in the sense of Section~\ref{sec:constraint:reductivelogic}) whether or not $P \proves G\theta$ obtains or not. The first step (i.e., introducing $\theta$) is also captured by a reduction operator in \system{LB} --- namely, the $\rrn \exists$-rule --- hence, the operational semantics is entirely based on reduction. 

% A \emph{structural operational semantics} is as follows, in which a sequent is a configuration and a state is a set of configurations:
% 	\[
% 	    \infer[\tau]{\{  P_1 \seq G_1 ,..., P_i \seq G_i ,...., P_n \seq G_n \} \longrightarrow \{  P_1 \seq G_1,..., P_i' \seq G_i',...., P_1 \seq G_1 \} }{ P_i' \seq G_i' \in \bullet \in \rho_{\system{LB}}(P_i \seq G_i)}
% 	    \]
% 	        \[
% 	    \infer[]{\{\Box\} \longrightarrow \emptyset}{}
% 	    \]
% 	The designation $\tau$ for the reduction step is to draw attention to Milner's~\cite{Milner1984} theory of tactics and tacticals as the essential steps in automated proof-search.

    The following example, also appearing in Gheorghiu et al.~\cite{Samsonschrift}, illustrates how BLP works:
	
	\begin{figure}[t]
		\centering
		\begin{tabular}{||c c c||} 
			\hline
			Red & Green & Blue \\ [0.5ex] 
			\hline\hline
			Al(gebra) & Lo(gic) & Da(tabases) \\ [0.5ex] 
			\, Pr(obability) \, & \, Ca(tegories) \, & \,  Co(mpilers) \, \\ [0.5ex] 
			Gr(aphs) & Au(tomata) & AI \\
			\hline
		\end{tabular}
		\caption{The Extensional Database}
		\label{fig:uucs}
	\end{figure} 

\begin{Example}\label{ex:student} 

To complete the informatics course at Unseen University, students must 
pick one module for each of the three terms of the year, which are called R(ed), G(reen), and B(lue), respectively. The available choices are shown in Figure~\ref{fig:uucs}. More formally, we have a relation $S$ that obtains for valid selections of courses; for example, $S(Al,Ca,Co)$ obtains, but $S(Pr,Gr,Ca)$ does not.

We may use BLP to capture this situation. The setup is captured by a program $P$ composed of two parts: the \emph{extensional} database $ED$ and the \emph{intensional} database $ID$. The extensional database contains the information about the modules:
\[
ED  := R(Al),R(Pr),R(Gr), G(Lo), G(Ca), G(Au), B(Da) , B(Co), B(AI)
\]
Meanwhile, the intensional database comprises the selection logic:
\[
ID :=   R(x)\land G(y) \land B(x)  \to  S(x,y,z) 
\] 
To find the possible combinations of modules, one queries the system for different choices of $M_1$ and 	$M_2$ and $M_3$; that is, one considers the validity of the following sequent:
	\[
	P \seq S(M_1,M_2,M_3) 
	\]
	One possible execution is the following, in which $\phi := S(Al,Lo,AI) \leftarrow (R(Al)\land G(Lo)) \land B(AI)$:
	
	\begin{scprooftree}{0.9}
		\AxiomC{}
		\RightLabel{$\ax$}
		\UnaryInfC{ $P \fatcomma\phi \seq R(Al) $ }
    		\AxiomC{}
    		\RightLabel{$\ax$}
    		\UnaryInfC{ $P \fatcomma \phi \seq G(Lo) $ }
        		\AxiomC{}
        		\RightLabel{$\ax$}
        		\UnaryInfC{ $P \fatcomma \phi \seq B(AI) $ }
		\RightLabel{$\rrn \land$}
		\doubleLine
		\TrinaryInfC{ $P \fatcomma \phi \seq (R(Al)\land G(Lo)) \land B(AI)$  }
		\RightLabel{$\lrn \to$}
		\UnaryInfC{ $P \fatcomma \phi  \seq S(Al,Lo,AI)$ }
		\RightLabel{$\lrn \forall$}
		\UnaryInfC{ $P \seq S(Al,Lo,AI)$ }
		\doubleLine 
		\RightLabel{$\rrn \exists$}
		\UnaryInfC{ $P \seq S(x,y,z)$ }
	\end{scprooftree}
	
	There are $9^3 = 729$ possible ways of selecting three courses out of the nine, and there are only $27$ acceptable choices out of these possibilities. In terms of proof-search spaces, there are $729$ branches, out of which only $27$ may lead to successful reductions. Random selection is, therefore, intractable. Moreover, to the extent that logic programming concerns problem solving (see, for example, Kowalski~\cite{Kowalski1986}), this process does not reflect how one hopes a student would reason.
\end{Example}

Example~\ref{ex:student} is naive in that, rather than simply guessing a possibility, a student would likely observe that one must choose three modules one from each $R$, $G$, and $B$, respectively. How may we capture such reasoning? One possibility is to keep the major logical steps in Example~\ref{ex:student}, but without committing to a particular substitution. Instead, the proof structures thus constructed inform us what kinds of substitutions make sense, precisely those that render the proof structure a \system{LB}-proof. We may capture this approach with algebraic constraints. 
	
	\subsection{Unification via Nominal Constraints} \label{sec:ext:lp:alp}
	
	A unification algebra allows the substitution in \system{LB} to be managed intelligently. It does this by managing unification, as opposed to merely managing substitution, to be expressed logically and enforced at the end of the computation. We only need a way to track what substitution took place and what needs to be unified, which labels governed by equality can handle.
	
		\begin{Definition}[Unification Algebra]
		The unification algebra $\mathcal{U}$ consists of the universe $\set{TERM}{\lang{A}}$ and nothing else.
		\end{Definition}
		
The notion of \emph{enriched sequents} in this setting extends that for propositional logics in that terms within formulae may also carry labels. For example, $P \seq R(x \cdot n_1, t_1 \cdot n_2)$, in which $R \in \set{R}$ is a relation-symbol, $x \in \set{V}$ and $t_1 \in \set{TERM}(\lang{A})$ are terms, and $n_1$ and $n_2$ are expressions for the algebra (i.e., variables), is an enriched sequent.  

	We have no operators in the algebra, but we do have an equality predicate for the constraints. For example, we may write $n=k$, in which $n$ is a variable and $k$ is (interpreted) as a term, to denote that the value of $n$ must be whatever is denoted by $k$. Let $A$ and $B$ be enriched atoms. We write $A \equiv B$ to denote the meta-disjunction of all equations $n = m$ such that $n$ is the label of a term in $A$ occurring in $P$ and $m$ is a label of a term in an atom $B$ such that $A$ and $B$ have the same relation-symbol and $n$ and $m$ occur on corresponding terms. For example, $R(x_1\cdot n_1, y_1 \cdot n_2) \equiv R(x_2\cdot m_1, y_2 \cdot m_2)$ denotes $(n_1 = m_1) \parr (n_2 = m_2)$. The notation denotes falsum when the disjunction is empty (i.e., when there are no correspondences). Let $P$ be a program. We write $\parr_{A \in P \text{ st.} A \sim B} (A \equiv B)$ to denote the disjunction $A \equiv B$ for all $A$ in $P$. We may also write $n \in \{k_1,...,k_n\}$ to denote the disjunction $(n=k_1) \parr ... \parr (n = k_n)$.

	\begin{Definition}[Constraint System $\system{PLB} \oplus \alg{U}$]
	Constraint system $\system{LB}\oplus \alg{U}$ comprises the rules in Figure~\ref{fig:alp}, in which $\theta$ denotes a substitution that always introduce fresh labels. 
	\end{Definition}
		\begin{figure}[t]
   \hrule
 \vspace{4mm}
		\[
  \begin{array}{c}
				\infer[\rrn \to]{P \seq D \to G}{P\fatcomma D \seq G} \quad 
				\infer[\rrn \land]{P \seq G_0 \land G_1}{P \seq G_0 & P \seq G_1}
				\quad	
				\infer[\rrn \lor]{P \seq G_0 \lor G_1}{P \seq G_i}
				\quad
					\infer[\ax]{P \seq B}{\parr_{A \in P \text{ st.} A \sim B} (A \equiv B)}
			
				\\[1.5ex]
			
				\infer[\lrn \to ]{ P\fatcomma G \to A \seq B}{P\fatcomma G \to A \seq G & A\equiv B}
				\quad
				\infer[\lrn \land ]{ P\fatcomma D_0 \land D_1 \seq A}{P\fatcomma D_0\fatcomma D_1 \seq A} 
				\quad 
				\infer[\lrn \forall ]{ P\fatcomma \forall x D \seq A}{P\fatcomma D\fatcomma D\theta \seq A}
				\quad
				\infer[\rrn \exists]{P \seq G}{P \seq G\theta}
    \end{array}
    \]
    \vspace{4mm}
	\hrule
 \caption{Constraint System $\system{PLB} \oplus \alg{U}$}
		 \label{fig:alp}
	\end{figure}

	\begin{Proposition}
	   Constraint system $\system{PLB} \oplus \alg{U}$, with valuation $\sigma$, is faithful and adequate with respect to $\system{LB}$.
	\end{Proposition}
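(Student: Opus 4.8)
The plan is to verify the two clauses of Definition~\ref{def:fna} by a rule-by-rule analysis, exactly along the lines of the proof of Proposition~\ref{lem:fna:plussystems}. Recall that $\sigma$ is the valuation parameterised by interpretations $I$ of the unification algebra $\alg{U}$ --- that is, by maps sending each label-variable to a term of $\lang{A}$ --- which acts on an enriched sequent by replacing each labelled term $t\cdot n$ occurring in it by $I(n)$ and then deleting the constraints. Every rule of $\system{PLB}\oplus\alg{U}$ has a unique \emph{skeleton} obtained by erasing labels and constraints, and this skeleton is literally a rule of $\system{LB}$; so both directions reduce to showing that the side-conditions are satisfied precisely by those interpretations under which the labelled instance evaluates to a genuine $\system{LB}$-instance.

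For \emph{faithfulness}, let $\mathcal{R}$ be a $\system{PLB}\oplus\alg{U}$-reduction and $I$ an interpretation satisfying its side-conditions; I argue by induction on $\mathcal{R}$ that $\sigma_I(\mathcal{R})$ is an $\system{LB}$-proof. The rules $\rrn\to$, $\rrn\land$, $\rrn\lor$ and $\lrn\land$ carry no constraints and do not touch labels, so $\sigma_I$ maps an instance of each to an instance of the same-named rule of $\system{LB}$. For $\ax$, the side-condition $\parr_{A\in P,\, A\sim B}(A\equiv B)$ is satisfied by $I$ precisely when some atom $A$ in the $\lang{A}$-part of $P$ agrees, after applying $I$ to its labels, with the target atom $B$; hence $\sigma_I$ of the conclusion is $P'\fatcomma A'\seq A'$ with $A'=\sigma_I(A)=\sigma_I(B)$, an instance of $\ax$ in $\system{LB}$, and the constraint-premiss becomes a closed leaf once constraints are deleted. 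The rule $\lrn\to$ is handled identically: its constraint $A\equiv B$ forces $\sigma_I$ of the conclusion to be an instance of $\lrn\to$ of $\system{LB}$, while its sequent-premiss is the skeleton of the $\system{LB}$-premiss. Finally, in $\lrn\forall$ and $\rrn\exists$ the substitution $\theta$ introduces only fresh label-variables; composing $\theta$ with $I$ yields an ordinary substitution $\theta':=I\circ\theta$ of $\lang{A}$-terms for $\lang{A}$-variables, and freshness guarantees $\theta'$ affects only the newly introduced copy of $D$ (respectively only $G$), so $\sigma_I$ of the instance is an instance of $\lrn\forall$ (respectively $\rrn\exists$) of $\system{LB}$. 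The induction then yields that $\sigma_I(\mathcal{R})$ is an $\system{LB}$-proof.

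For \emph{adequacy}, given an $\system{LB}$-proof $\mathcal{D}$ I build, by induction on $\mathcal{D}$, a $\system{PLB}\oplus\alg{U}$-reduction $\mathcal{R}$ together with an interpretation $I$ such that $\sigma_I(\mathcal{R})=\mathcal{D}$ and $I$ satisfies all side-conditions of $\mathcal{R}$. At each step I apply the constraint-rule whose skeleton is the $\system{LB}$-rule used in $\mathcal{D}$: every concrete substitution $\vartheta$ used by $\lrn\forall$ or $\rrn\exists$ in $\mathcal{D}$ is replaced by a substitution $\theta$ into fresh label-variables, and each instance of $\ax$ or $\lrn\to$ in $\mathcal{D}$, which identifies a program atom $A$ with the goal atom $B$, is recorded as the constraint $A\equiv B$. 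I then set $I$ to send each freshly introduced label to the term that the corresponding variable was actually mapped to in $\mathcal{D}$, and to send the labels occurring in an $\ax$- or $\lrn\to$-constraint to their witnessing values. Because distinct rule-applications use disjoint sets of fresh labels, $I$ is well-defined; because the $\ax$/$\lrn\to$ matches held in $\mathcal{D}$, the corresponding equality constraints are satisfied; and by construction $\sigma_I(\mathcal{R})=\mathcal{D}$.

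The main obstacle I expect is the book-keeping around the freshness discipline for the labels introduced by $\theta$: one must argue that across an entire reduction these label-variables can be chosen globally distinct, so that (i) for faithfulness $I\circ\theta$ is genuinely a substitution acting only where intended, and (ii) for adequacy the interpretation $I$ reconstructed from $\mathcal{D}$ places no two incompatible demands on a single label. A secondary delicate point is to confirm that an interpretation satisfies $A\equiv B$ exactly when $\sigma_I(A)$ and $\sigma_I(B)$ are the atoms identified in the corresponding $\ax$- or $\lrn\to$-step --- that is, that the (meta-)disjunction of coordinate equations adequately encodes ``these two atoms are unified''. Everything else is the routine rule-matching already exemplified by Proposition~\ref{lem:fna:plussystems}.
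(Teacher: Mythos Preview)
The paper does not actually give a proof of this proposition; it is stated and then immediately followed by the illustrating example. Your proposal is therefore not competing with any argument in the paper, and the rule-by-rule strategy you outline --- modelled on the proof of Proposition~\ref{lem:fna:plussystems} --- is exactly the intended one: each $\system{PLB}\oplus\alg{U}$-rule has as skeleton the same-named $\system{LB}$-rule, and the constraints record precisely the data needed to recover an $\system{LB}$-instance under $\sigma_I$.

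One genuine point of friction you should be aware of, and which you already half-flag, concerns the definition of $A\equiv B$. The paper defines it as the meta-\emph{disjunction} ($\parr$) of the coordinate equations, and the worked example confirms this reading. But for faithfulness you need that an interpretation satisfying $A\equiv B$ makes $\sigma_I(A)$ and $\sigma_I(B)$ literally the same atom, which requires \emph{all} coordinate equations to hold, not just one. As written in the paper, an $I$ satisfying only $n_1=m_1$ would satisfy $R(x_1\cdot n_1,y_1\cdot n_2)\equiv R(x_2\cdot m_1,y_2\cdot m_2)$ while $\sigma_I$ of the two atoms disagree in the second argument, so $\sigma_I$ of the $\ax$-instance would not be an $\system{LB}$-axiom. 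Your proof goes through cleanly if $A\equiv B$ is read as a conjunction (and this is almost certainly what is intended); you should either silently make that correction or note explicitly that the argument requires it.
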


We use $\system{LB}\oplus \alg{U}$ to simulate the actual reasoning one intends BLP to represent. To see this, we return to Example~\ref{ex:student} and show that the constraint system captures the reasoning process one hopes a student would use more realistically.  

\begin{Example}[Example~\ref{ex:student} cont'd]
To simplify notation, let $ID \cdot [n_1,n_2,n_3]$ denote $(R(x.n_1)\land G(y.n_2)) \land B(x.n_3) \to  CS(x.n_1,y.n_2,z.n_3)$. The computation-trace in BLP using the algebraic constraint system $\system{LB}\oplus \alg{U}$ is as follows:
	
	\begin{scprooftree}{0.9}
		\AxiomC{$n_1  \in \{Al, Pr,Gr\}$}
		\UnaryInfC{ $P \seq R(x\cdot n_1) $ }
		\AxiomC{$n_2 \in \{ Lo,Ca,Au \} $}
		\UnaryInfC{ $P \seq G(y\cdot n_2) $ }
		\BinaryInfC{ $P \seq R(x\cdot n_1) \land G(y\cdot n_2) $ }
		\AxiomC{$ n_3 \in \{ Da,Co,AI \} $}
		\UnaryInfC{ $P \seq B(z\cdot n_3) $ }
		\BinaryInfC{ $P \seq (R(x\cdot n_1)\land G(y\cdot n_2)) \land B(z\cdot n_3)$  }
		\AxiomC{$n_1=m_1$}
		\noLine
		\UnaryInfC{$n_2=m_2$}
		\noLine
		\UnaryInfC{$n_3=m_3$}
		\BinaryInfC{ $P, ID\cdot[n_1,n_2,n_3] \seq S(x.m_1,y.m_2,z.m_3)$  }
		\UnaryInfC{ $P \seq CS(x \cdot m_1,y.m_2,z.m_3)$ }
		\UnaryInfC{ $P \seq S(x,y,z)$ }
	\end{scprooftree} 
	
    \noindent Every valid execution in \system{LB} of the initial query is a coherent instantiation of this proof; for example, to recover the 
	one presented in Example~\ref{ex:student}, we use the following interpretation:
	\[
	I(n_1):=I(m_1) := Al \quad I(n_2):=I(m_2) := Lo \quad I(n_3):=I(m_3): = AI
	\]
\end{Example}

This study of logic programming illustrates that algebraic constraint systems extend quite naturally to the predicate logic setting and have practical applications, as showcased by how it addresses unification in logic programming. Of course, this section is cursory compared to the extensive study of propositional logic in the rest of the paper, with the open question of a general theory.  

\section{Conclusion} \label{sec:conclusion}

This paper has introduced the concept of a constraint system, which serves as a uniform tool for studying the metatheory of one logic in terms of the metatheory of another. The advantage is that the latter may be simpler or more well-understood in some practical sense. In short, a constraint system is a labelled sequent calculus in which the labels carry an algebraic structure to determine correctness conditions on proof structures. A motivating example of a class of constraint system already present in the literature are those captured by the \emph{resource-distribution via Boolean constraints} (RDvBC) mechanism by Harland and Pym~\cite{Harland1997,Pym2001dis}, which serve as a meta-theoretic tool to analyze the possible context-management strategies during proof-search in logics with multiplicative connectives --- see Section~\ref{sec:ex:bi}.

Constraint systems have two possible relationships with a logic of interest: soundness and completeness, and faithfulness and adequacy. The former is a \emph{global} correctness condition that says that completed reductions in the constraint system (i.e., constructions to which one cannot apply further reduction operators) whose constraints are coherent (i.e., admit a solution) characterize the consequence of a logic. Meanwhile, the latter is a \emph{local} correctness condition in that each reduction step in the constraint system corresponds to a valid inference for the logic when its constraints are satisfied; consequently, a completed reduction corresponds to a proof in a sequent calculus for the logic. Both correctness criteria are valuable in applications of constraint systems for studying meta-theory.

We illustrated the framework's usefulness in studying metatheory through various examples beyond RDvBC. First, in Section~\ref{sec:ex:relationalcalculi}, we show that they yield a general, uniform, and systematic process for generating relational calculi after Negri~\cite{negri2001,negri2003contraction,Negri2005}; moreover, that relative to this theory of relational calculi, one has an approach to proving soundness and completeness for model-theoretic semantics (M-tS) in the sense of Kripke~\cite{Kripke1963,kripke1965semantical} (see also Beth~\cite{Beth1955}) that entirely bypasses term- and counter-model constructions (see, for example, van Dalen~\cite{vanDalen}). An example of this method in practice has already been given in the case of the logic of Bunched Implications by Gheorghiu and Pym~\cite{Alex:BI_Semantics}. Second, in Section~\ref{sec:ex:ipl}, we show by a case-study for IPL that constraint systems enable one to synthesize an M-tS for a logic from an analysis of its proof theory. Overall, these examples witness that constraint systems help bridge the gap between semantics and proof theory for logics. It needs to be clarified how this paper's approach to soundness and completeness relates to the more traditional approaches. However, such investigation may aid in understanding the underlying principles and is left for future work.

While this paper is only concerned with propositional logics, we showed in Section~\ref{sec:ext:fol} how it could intuitively be extended to the first-order setting. Moreover, there are good computational reasons for doing such extensions in the context of logic as a reasoning technology.

Of course, this paper concerns only the initial framework for algebraic constraint systems. Future work includes giving more examples of constraint systems and developing the applications presented in this paper. For example, the case analysis of deriving semantics for IPL should be repeated for other adjacent logics, especially intermediate, hybrid, and substructural logics. Moreover, in this paper, we have only considered three different notions of algebra ---  Boolean algebra, world algebra (i.e., the algebra corresponding to a frame), and unification algebra --- what are some other valuable algebras and constraint systems, and what can they tell us about the logics being studied? 

Overall, constraint systems provide a general framework for defining and studying logics and have the potential to bridge the gap between model theory and proof theory, as well as give insights about proof-search.

\subsection*{Acknowledgments}
We are grateful to Tao Gu and the referees on an earlier version for their thorough and thoughtful comments on this work.

\bibliographystyle{asl}
\bibliography{bib}

\end{document}